\documentclass[twocolumn]{article}
\usepackage{mglgrid}

\usepackage{scalerel}[2016/12/29]
\usepackage{stackengine}
\usepackage[normalem]{ulem}
\usepackage{siunitx}
\usepackage{mathtools}
\usepackage{nicefrac}
\usepackage{centernot}
\usepackage{mleftright}

\usepackage{amsmath}
\usepackage{amsfonts}
\usepackage{amsthm}
\usepackage{thmtools}
\usepackage{thm-restate}

\usepackage{stmaryrd}
\SetSymbolFont{stmry}{bold}{U}{stmry}{m}{n}

\usepackage{bbding}

\usepackage{xspace}

\usepackage{bm}

\usepackage{color}
\usepackage[table]{xcolor}
\definecolor{mglred}{RGB}{165,30,30}
\definecolor{mglblue}{RGB}{59,59,200}
\definecolor{mglcyan}{RGB}{165,30,165}
\newcommand{\mglcolorbox}[2]{\setlength{\fboxsep}{0pt}\colorbox{#1}{\strut #2}}

\newlist{myenum}{enumerate}{2}
\setlist[myenum,1]{label=\arabic*., ref=\arabic*}
\setlist[myenum,2]{label=\alph*., ref=\themyenumi.\alph*}

\usepackage{booktabs}
\usepackage{multirow}
\usepackage{algorithm}

\usepackage{array}
\newcolumntype{L}[1]{>{\raggedright\let\newline\\\arraybackslash
    \hspace{0pt}}m{#1}}
\newcolumntype{C}[1]{>{\centering\let\newline\\\arraybackslash
    \hspace{0pt}}m{#1}}
\newcolumntype{R}[1]{>{\raggedleft\let\newline\\\arraybackslash
    \hspace{0pt}}m{#1}}

\usepackage{caption}
\usepackage{tkz-graph}
\usepackage{tikz}
\usepackage{pgfplots,tikz,pgfplotstable}
\pgfplotsset{compat=1.18}
\usetikzlibrary{positioning, fit, arrows.meta, shapes}

\pgfplotsset{
    legend image with text/.style={
        legend image code/.code={%
            \node[anchor=center] at (0.3cm,0cm) {#1};
        }
    },
}
\pgfplotsset{uniformhash/.style = {black!50!white, loosely dotted, very thick}}
\pgfplotsset{bdp10/.style = {teal!70!black, semithick}}
\pgfplotsset{bdp5/.style = {teal!70!black, densely dashed, thick}}
\pgfplotsset{bdp1/.style = {teal!70!black, loosely dotted, very thick}}
\pgfplotsset{asymmetric/.style = {blue!70!black, densely dashed, thick}}
\pgfplotsset{martingale/.style = {orange!70!white, densely dashdotted, thick}}
\pgfplotsset{adp10/.style = {red!70!black, semithick}}
\pgfplotsset{adp5/.style = {red!70!black, densely dashed, thick}}
\pgfplotsset{adp1/.style = {red!70!black, loosely dotted, very thick}}

\usetikzlibrary{bayesnet}

\renewcommand{\edge}[3][]{ %
  \foreach \x in {#2} { %
    \foreach \y in {#3} { %
      \path (\x) edge [->, >={triangle 45}, #1] (\y) ;%
    }
  }
}

\allowdisplaybreaks
\let\leq\leqslant
\let\geq\geqslant
\newcommand\scalemathbox[2]{\text{\scalebox{#1}{$#2$}}}
\newcommand\raisemathbox[2]{\text{\raisebox{#1}{$#2$}}}
\newcommand{\prerendermath}[3]{%
  \newsavebox{#1}%
  \savebox{#1}{$#2$}%
  \newcommand{#3}{\usebox{#1}}%
}

\makeatletter
\let\old@font@info\@font@info
\def\@font@info#1{%
\expandafter\ifx\csname\detokenize{#1}\endcsname\relax
  \old@font@info{#1}%
\fi
\expandafter\xdef\csname\detokenize{#1}\endcsname{}%
}
\makeatother

\newcommand\lsh[1]{\!#1\,\xspace}
\newcommand\lshd[1]{\!\!#1\,\xspace}
\newcommand\lshh[1]{\halfnegkern#1\kern0.08333em\xspace}
\newcommand\halfnegkern{\kern-0.08333em\xspace}
\newcommand\quartnegkern{\kern-0.04666em\xspace}

\newcommand\fixmathkern[2]{%
  \sbox0{$#1#2$}\sbox2{\emph{#1#2}}%
  #1\kern\dimexpr\wd2-\wd0\relax
  #2}
\newcommand\mathendkern[2]{%
  \sbox0{$#1$#2}\sbox2{\emph{#1#2}}%
  \kern\dimexpr\wd2-\wd0\relax
  #2}
\newcommand\cC{c\quartnegkern,}

\newcommand\fC{f\kern-0.21em,}
\newcommand\fP{f\kern-0.21em.}
\newcommand\gC{g\quartnegkern,}
\newcommand\UC{U\kern-0.15em,}
\newcommand\sFC{\sF\!,}
\newcommand\sFP{\sF\!.}

\newcommand\supkern{\!\xspace}
\newcommand\noscriptspace{\kern-\scriptspace}

\mathtoolsset{%
  above-shortintertext-sep = 0.1\groundskip,
  below-shortintertext-sep = 0.1\groundskip,
}
\newcommand\explainstep[1]{%
  \shortintertext{
    \hspace{\groundskip}%
    \small\smash{$\downarrow$} #1}}

\DeclareMathOperator*{\image}{im}

\DeclareMathOperator*{\argmin}{arg\,min}

\newcommand\newtxmathabsfix{\kern0.025em}
\DeclarePairedDelimiter{\absX}{\lvert}{\rvert}
\newcommand\abs[2][]{\absX[#1]{\if\relax\detokenize{#1}\relax#2\else\newtxmathabsfix#2\newtxmathabsfix\fi}}
\newcommand\newtxmathnormfix{\kern0.035em}
\DeclarePairedDelimiter{\normX}{\lVert}{\rVert}
\newcommand\norm[2][]{\normX[#1]{\if\relax\detokenize{#1}\relax#2\else\newtxmathnormfix#2\newtxmathnormfix\fi}}

\DeclareMathOperator{\E}{E}
\DeclareMathOperator{\Var}{Var}
\DeclareMathOperator{\Std}{Std}

\DeclareMathOperator{\Cov}{Cov}

\DeclarePairedDelimiterX{\infdivx}[2]{(}{)}{%
  #1\;\delimsize\|\penalty 0 \;#2%
}

\makeatletter
\newcommand{\oset}[3][0ex]{%
  \mathrel{\mathop{#3}\limits^{
    \vbox to#1{\kern-2\ex@
    \hbox{$\scriptstyle#2$}\vss}}}}
\makeatother

\newcommand\len[1]{\abs{#1}}
\newcommand\sign{\operatorname{sign}}

\newcommand\ct[1]{\mathrm{#1}}
\newcommand{\bigO}{O}

\renewcommand{\neg}{\protect\hstretch{0.8}{-}}
\newcommand\negone{{\neg\! 1}}

\newcommand\vect[1]{\bm{#1}}
\newcommand\vectrv[1]{\bm{#1}}
\newcommand{\indep}{\mathrel{\text{\scalebox{1.10}{$\perp\mkern-10mu\perp$}}}}
\newcommand{\indic}{\mathbb{1}}

\newtheorem{theorem}{Theorem}
\newtheorem{definition}[theorem]{Definition}
\newtheorem{lemma}[theorem]{Lemma}
\newtheorem{proposition}[theorem]{Proposition}
\newtheorem{corollary}[theorem]{Corollary}

\newtheorem{assumption}{Assumption}
\newtheorem{remark}[theorem]{Remark}
\newtheorem{example}[theorem]{Example}
\newcommand\inlineassumption[1]{(#1)}
\newcommand\inlineassumptionref[1]{#1}
\newcommand\numberthis{\addtocounter{equation}{1}\tag{\theequation}}
\newcommand\nobreakpar{\par\nobreak\@afterheading}
\makeatother

\newcommand\pig[1]{\scalerel*[5.5pt]{\Big#1}{%
  \ensurestackMath{\addstackgap[1.55pt]{\big#1}}}}
\newcommand\pigl[1]{\mathopen{\pig{#1}}}
\newcommand\pigr[1]{\mathclose{\pig{#1}}}

\newcommand\Pig[1]{\scalerel*[5.5pt]{\bigg#1}{%
  \ensurestackMath{\addstackgap[1.55pt]{\Big#1}}}}
\newcommand\Pigl[1]{\mathopen{\Pig{#1}}}
\newcommand\Pigr[1]{\mathclose{\Pig{#1}}}

\DeclarePairedDelimiterX{\PP}[1]{(}{)}{#1}
\DeclarePairedDelimiterX{\BB}[1]{[}{]}{#1}
\DeclarePairedDelimiterX{\set}[1]{\{}{\}}{#1}

\newcommand\given{%
  \ifdefined\delimsize%
    \nonscript\;\delimsize\vert\allowbreak\nonscript\;%
  \else%
    \mid%
  \fi%
}

\newcommand{\EB}[2][]{\E\BB[#1]{#2}}
\newcommand{\VarB}[2][]{\Var\BB[#1]{#2}}
\newcommand{\StdB}[2][]{\Std\BB[#1]{#2}}

\newcommand{\sR}{\mathbb{R}}
\newcommand{\sN}{\mathbb{N}}
\newcommand{\gammadec}{{\gamma_{\mathrm{dec}}}}
\newcommand{\gammadecprime}{{{\gamma'}_{\!\!\!\mathrm{dec}}}}
\newcommand\Alpha{\mathcal{A}}

\newcommand\Epsilon{\mathcal{E}}
\newcommand\EpsilonvecX{\bm{\mathcal{E}}}
\newcommand\Epsilonsub{\mathcal{E}}
\newcommand\Epsilonvecsub{\Epsilonvec\hspace{-0.06em}}
\newcommand\Tau{\raisemathbox{0.02em}{\mathcal{T}}}
\newcommand\Tausub{\Tau\!}
\newcommand\TauvecX{\raisemathbox{0.01em}{\bm{\mathcal{T}}}}

\newcommand{\sS}{\mathcal{S}}
\newcommand{\sC}{\mathcal{C}}
\newcommand{\sD}{\mathcal{D}}
\newcommand{\sU}{\mathcal{U}}
\newcommand{\sF}{\mathcal{F}}
\newcommand{\LE}{\hat{\kern0pt L}}      
\newcommand{\TA}{\bar{T}}               
\newcommand{\EAsub}{\bar{\Epsilon}\hspace{-0.06em}} 
\newcommand{\easub}{\bar{\epsilon}\hspace{-0.06em}} 
\newcommand{\TauAsub}{\bar{\Tau}\!}     
\newcommand{\DT}{\delta}                
\newcommand{\DE}{\mathit{\hat{\Delta}}} 
\newcommand{\de}{\mathit{\hat{\delta}}} 
\newcommand\fgc{\fC \gC c}
\newcommand\fg{\fC g}
\newcommand\fc{\fC c}
\newcommand\gc{\gC c}
\newcommand{\DI}{\mathit{\Lambda}}           
\newcommand{\DInorm}{\mathit{\hat\Lambda}}   
\newcommand{\di}{\bm\lambda}                 

\newcommand{\DIvecX}{%
  \raisemathbox{0.02em}{%
    \scalemathbox{0.9}{\vect{\mathit{\Lambda\hspace{-0.1em}}}}}}
\newcommand{\DInormvecX}{%
  \raisemathbox{0.02em}{%
    \scalemathbox{0.9}{\vect{\mathit{\hat\Lambda}}}}\hspace{-0.1em}}
\newcommand{\DImeanvecX}{%
  \raisemathbox{0.02em}{%
    \scalemathbox{0.9}{\vect{\mathit{\breve\Lambda}}}}\hspace{-0.1em}}
\newcommand{\divec}{\bm\lambda}

\newcommand\mc{\ct{M}_c}
\prerendermath{\DIvecBox}{\DIvecX}{\DIvec}
\prerendermath{\DInormvecBox}{\DInormvecX}{\DInormvec}
\prerendermath{\DImeanvecBox}{\DImeanvecX}{\DImeanvec}
\prerendermath{\EpsilonvecBox}{\EpsilonvecX}{\Epsilonvec}
\prerendermath{\TauvecBox}{\TauvecX}{\Tauvec}

\usepackage[hidelinks,linktoc=all]{hyperref}
\usepackage{hypcap}
\usepackage{bookmark}

\usepackage[nameinlink]{cleveref}
\crefname{myenumi}{item}{items}
\crefname{myenumii}{item}{items}
\crefname{assumption}{assumption}{assumptions}
\crefname{approximation}{approximation}{approximations}

\usepackage[noend]{algpseudocode}
\usepackage{algorithmicx}

\algdef{SE}[DOWHILE]{Do}{doWhile}{\algorithmicdo}[1]{\algorithmicwhile\ #1}%

\title{The Right Call for Software Benchmarking}
\subtitle{Consistent Decisions in Stateful Environments}

\author{\authorname{G\'abor Melis}\\
  \authoremail{melisgl@google}.\authoremail{com}\\
  \authororg{Google DeepMind}}

\begin{document}

\maketitle

\abstract{In the perpetual pursuit of performance, modern computing systems rely ever more on stateful mechanisms to accommodate the dynamics of workloads and physical environments, bolstering efficiency but confounding benchmarking and thereby the optimization of software.
Indeed, by their nature, adaptive mechanisms introduce temporal dependencies between measurements and render naive estimators of individual program performance biased.
Observing that rectifying such biases necessitates speculative assumptions about system dynamics, we call for prioritizing performance differentials over absolute measures and formalize software benchmarking as the decision problem of identifying the fastest program, for which relative knowledge suffices.
To this end, we propose simple experiment designs admitting consistent estimators of contrasts, whereby program-specific biases cancel under tenable assumptions.
These designs asymptotically yield the correct decision and afford a robust methodology for finite-budget benchmarking in stateful environments, bearing broad implications for the development of performance-sensitive software.}


\section{Introduction}

Software benchmarking \citep{kuhn1997structure, tichy1998should, sim2003using, alcocer2015tracking, hasselbring2021benchmarking} is at the heart of performance optimization: how could we guide this process if not by measurement?
Therein lies the fundamental importance of benchmarking, not unlike the experiments that ground natural sciences in physical reality.
Like them, benchmarking must distinguish cause from effect, but on modern computers we face an ever-changing set of strong confounders in the form of stateful processes in hardware controllers, the kernel, the myriad of daemons and other concurrently running programs, defying reliable control.
Beyond single computers, this issue is compounded: changing network conditions, co-hosting and the opaqueness of cloud services all conspire against reliable results.
Thus, our ability to detect small performance differences is constrained, yet precisely that sensitivity is required for the effective development and tuning of mature compilers, machine learning frameworks and database systems.

The present paper studies the challenges inherent to benchmarking programs in stateful environments, assuming the provision of a benchmark suite, whose design, while undeniably important, lies beyond the intended scope.
We shall examine the inadequate theoretical underpinnings of prevailing methods when applied in such environments and propose new approaches founded upon more tenable postulates.
Our contributions are threefold:
\begin{enumerate}
\item Formalize benchmarking as a decision problem concerning the fastest program in a given set under a general, albeit intractable, measurement model (\Cref{sec:problem-formalization});
\item Call for software benchmarking to focus on estimating performance deltas from the same experiment instead of comparing absolute performance numbers across experiments (\Cref{sec:absolute-performance,sec:deltas-to-decision});
\item Propose experiment designs that guide the selection of programs to call to ensure both asymptotically correct decisions (in the limit of infinitely many measurements) and finite-sample guarantees (\Crefrange{sec:delta-experiments}{sec:blocked-experiments}).
\end{enumerate}
The goal of this work is to accelerate the development of performance-sensitive software.

\begin{figure*}[t]
\centering
\begin{tikzpicture}
  \node[latent] (u1) {$U_1$};
  \node[obs, above=0.75cm of u1] (f1) {$F_1$};
  \node[obs, fill=gray!25, right=0.7 of u1, yshift=+0.75cm] (t1) {$T_1$};
  \node[latent, right=0.7 of t1, yshift=-0.75cm] (u2) {$U_2$};
  \edge {u1, f1} {t1}
  \draw[->] (f1) to [bend left=45] (u2);
  \edge {t1, u1} {u2}

  \node[obs, above=0.75cm of u2] (f2) {$F_2$};
  \node[obs, right=0.7 of u2, yshift=+0.75cm] (t2) {$T_2$};
  \node[latent, right=0.7 of t2, yshift=-0.75cm] (u3) {$U_3$};
  \edge {u2, f2} {t2}
  \draw[->] (f2) to [bend left=45] (u3);
  \edge {t2, u2} {u3}

  \node[obs, above=0.75cm of u3] (f3) {$F_3$};
  \node[obs, right=0.7 of u3, yshift=+0.75cm] (t3) {$T_3$};

  \node[latent, right=0.7 of t3, yshift=-0.75cm] (u4) {$U_4$};
  \edge {u3, f3} {t3}
  \draw[->] (f3) to [bend left=45] (u4);
  \edge {t3, u3} {u4}

  \node[obs, above=0.75cm of u4] (f4) {$F_4$};
  \node[obs, right=0.7 of u4, yshift=+0.75cm] (t4) {$T_4$};

  \path (f1.north) to ++(0,0.5cm) coordinate (hwtop);
  \node[right=0.45 of t4] (dots) {$\dots$};
  \clip (u1.south -| u1.west) rectangle (hwtop -| dots.west);

  \node (hwright) at ([xshift=1.5cm]t4.east) {};
  \draw[rounded corners=0.2cm] (f1.north) to ++(0,0.5cm) to ++(0.3cm,0) coordinate (hwtop); \draw (hwtop) to (hwtop -| hwright);
  \draw[rounded corners=0.2cm] (t1.north) to ([yshift=-0.4pt]t1.north |- hwtop) to ++(0.3cm,0) coordinate (hwtop); \draw (hwtop) to (hwtop -| hwright);
  \draw[line width=0.8pt, rounded corners=0.1cm, shorten <=0.1cm] (f2.north west) to ([yshift=0.2pt,xshift=-0.5cm]f2.north west |- hwtop) to ++(-0.3cm,0);
  \draw[<-,rounded corners=0.1cm] (f2.north west) to ([yshift=0.0pt,xshift=-0.5cm]f2.north west |- hwtop) to ++(-0.3cm,0);
  \draw[rounded corners=0.2cm] (f2.north) to ([yshift=-0.01cm]f2.north |- hwtop) to ++(0.3cm,0) coordinate (hwtop); \draw (hwtop) to (hwtop -| hwright);
  \draw[rounded corners=0.2cm] (t2.north) to ([yshift=-0.01cm]t2.north |- hwtop) to ++(0.3cm,0) coordinate (hwtop); \draw (hwtop) to (hwtop -| hwright);
  \draw[line width=1.2pt, rounded corners=0.1cm, shorten <=0.1cm] (f3.north west) to ([yshift=0.6pt,xshift=-0.5cm]f3.north west |- hwtop) to ++(-0.3cm,0);
  \draw[<-, rounded corners=0.1cm] (f3.north west) to ([yshift=0.3pt,xshift=-0.5cm]f3.north west |- hwtop) to ++(-0.3cm,0);
  \draw[rounded corners=0.2cm] (f3.north) to ([yshift=-0.01cm]f3.north |- hwtop) to ++(0.3cm,0) coordinate (hwtop); \draw (hwtop) to (hwtop -| hwright);
  \draw[rounded corners=0.2cm] (t3.north) to ([yshift=-0.01cm]t3.north |- hwtop) to ++(0.3cm,0) coordinate (hwtop); \draw (hwtop) to (hwtop -| hwright);
  \draw[line width=1.6pt, rounded corners=0.1cm, shorten <=0.1cm] (f4.north west) to ([yshift=0.9pt,xshift=-0.5cm]f4.north west |- hwtop) to ++(-0.3cm,0);
  \draw[<-, rounded corners=0.1cm] (f4.north west) to ([yshift=0.5pt,xshift=-0.5cm]f4.north west |- hwtop) to ++(-0.3cm,0);
  \draw[rounded corners=0.2cm] (f4.north) to ([yshift=-0.01cm]f4.north |- hwtop) to ++(0.3cm,0) coordinate (hwtop); \draw (hwtop) to (hwtop -| hwright);
  \draw[rounded corners=0.2cm] (t4.north) to ([yshift=-0.01cm]t4.north |- hwtop) to ++(0.3cm,0) coordinate (hwtop); \draw (hwtop) to (hwtop -| hwright);

  \node[latent, right=0.7 of t4, yshift=-0.75cm] (u5) {$U_5$};
  \edge {f4, u4} {t4}
  \draw[->] (f4) to [bend left=45] (u5);
  \edge {t4, u4} {u5}

\end{tikzpicture}
\caption{The general generative model (\Cref{def:configuration-model}) of sequential measurements in the presence of uncontrolled state $\vectrv{U}$ for a given configuration $c \in \sC$, depicted as a Bayesian network with observed nodes shaded and deterministic nodes as diamonds.
The program $F_i$ chosen to run is conditioned directly on all already observed variables (the previously run programs $\vectrv{F}_{<i}$ and their performance measurements $\vectrv{T}_{<i}$, typically run times), which in combination with $\vectrv{U}$ makes the estimation of the run times hard.
}
\label{fig:configuration-model}
\end{figure*}

\section{Conventions}

Standard sets are denoted with blackboard letters (e.g. $\sR$), other sets with calligraphic ($\sFC$ $\sU$), constants with uppercase Roman ($\ct{M}$), scalars with lowercase italic ($\fC$ $s$), and vectors with boldface italic ($\vect{u}$).
Random variables (henceforth rvs, or rv in singular) are denoted with uppercase italic ($F$\!, $S$) if scalar-valued, or in boldface ($\vectrv{U}$) if vector-valued.
Where realizations of a rv are denoted by a Greek lowercase letter ($\tau\!, \epsilon$) whose uppercase is identical to the italic, we use calligraphic uppercase ($\Tau\!, \Epsilon$) for the rv.

We abbreviate the condition or event $X = x$ to simply $x$ provided no ambiguity arises.
We employ the phrase \emph{under the event $X=x$} or \emph{under $x$} to denote operations within a conditional probability space.
In this context, $\E[Y]$ is understood as $\E[Y \given x]$, $\Var[Z \given y]$ as $\Var[Z \given y\lshh, x]$, and $A \indep B$ as $A \indep B \given x$, while statements such as $Y = 1$ are implicitly conditioned on $X=x$.
Conversely, we sometimes implicitly extend the probability space to include parameters, writing $P(X \given \theta)$ instead of $P(X; \theta)$.

When the domain is clear from context, we write $\forall c$ rather than $\forall c \in \sC$ or $\forall c \in \image(C)$.
It is assumed throughout that expectations exist.
The standard deviation of a rv $X$ is denoted by $\Std[X]$.
We may write $\E_{\mid X}[Y]$ instead of $\E[Y \given X]$ to improve readability, especially when nesting conditionals and absolute values.
Note that this differs from $\E_X[X + Y]$, which takes the expected value with respect to $X$ only and results in a rv that is a function of $Y$.
This latter form may be used solely for emphasis.

The length of sequence $\vect{a}$ is denoted by $\len{\vect{a}}$.
We abbreviate confidence interval as CI, data processing inequality as DPI, and mutual information as MI.
Our mathematical results are presented as theorems or, where of lesser consequence, propositions.
Technical tools used in the proofs are lemmas.

\section{Problem Formalization}
\label{sec:problem-formalization}

Let $\sF$ be the set of programs to compare with size $2 \leq \ct{F} < \infty$ and $C$ a rv over the set of configurations $\sC$.
The relative importances of configurations in the benchmark are given by the known prior distribution of $C$.
Configurations specify the controllable aspects of the benchmarking environment (such as hardware and software settings, input data).
In common terminology, a configuration is an individual benchmark, while the set $\sC$ corresponds to a benchmark suite.

The loss function $l \colon \sF \times \sC \to \sR$ quantifies the cost (e.g. run time) of running a program in a given configuration.
Our goal is to find $\argmin_{f \in \sF} l(f)$, the program with the best expected loss $l(f) = \E_C l(\fc)$.\footnotemark{}
This can be viewed as a no-data decision problem in $\fC$ where no-data refers to the absence of prior performance measurements.
The expected loss represents the Bayes risk in this context.
\footnotetext{Due to the symmetry between $\sF$ and $\sC$, this formalization can describe the task of benchmarking hardware on a set of programs, but our proposed solutions require the uncontrolled state to be shared (i.e. affect the performance of all $f \in \sF$ similarly, see \Cref{sec:delta-experiments}), which is much more reasonable for programs running on the same computer than for different computers running the same program.}

\subsection{The Measurement Model}
\label{sec:the-measurement-model}

To ground the design of experiments and decision rules for identifying the fastest program in data, we posit a generative model of measurements.
As per our premise, we must consider stateful environments where the system's behaviour is influenced by past events or hidden states; together with the specified configuration, this allows our model to account for both the uncontrolled and controlled factors that influence program performance in a sequential setting.
The model presented is fully general inasmuch as experiment designs may choose the next program to run based on all previous observables, and the state of the environment may depend on all earlier variables, whether observed or unobserved.

We introduce the uncontrolled state $U$\!, a rv over the set $\sU$\halfnegkern, to represent environmental dynamics without assuming any ability to intervene, observe, or model its underlying distribution and temporal evolution.
Factors comprising this state, such as concurrent workloads, OS scheduling, cache and TLB contents, memory controller transitions, and CPU thermal throttling, can vary even within the span of a single program execution.

The same variable may be controlled and uncontrolled at different times: by flushing the file system cache before running a program, the initial cache contents are controlled, but their evolution depends on the concurrent workload and the kernel version among many other influencing factors.

\begin{definition}[Configuration model]
\label{def:configuration-model}
Given a configuration $c \in \sC$ and an experiment design $d$ that prescribes the sequence of programs to run $\vectrv{F}$ and the stopping time $N$, the model generates a sequence of performance measurements $\vectrv{T}$ while the uncontrolled state $\vectrv{U}$ evolves.
\begin{myenum}
\item \textbf{Initialization}
\begin{myenum}
\item The initial uncontrolled state $u_1 \in \sU$ is supplied by the execution environment.
\item Initialize the index of the next run: $i = 1$.
\end{myenum}
\item \textbf{Iteration} While the experiment is ongoing ($i \leq n$, where $n \sim N \given \vect{t}_{<i}, \vect{f}_{<i},\cC d$):
\begin{myenum}
\item Choose a program $f_i \sim F_i \given \vect{t}_{<i}, \vect{f}_{<i},\cC d$, and
\item measure its performance: $t_i \sim T \given U = u_i, F = f_i, c$.\label{itm:t-given-u-f-c}
\item Let the uncontrolled state evolve: $u_{i+1} \sim U^{\mathrm{next}} \given T = t_i, U = u_i, F = f_i, c$.
\item Set the index of the next run: $i = i + 1$.
\end{myenum}
\end{myenum}
\end{definition}

In the above, $t_i$ and $u_{i+1}$ are drawn from distributions that do not depend on $i$ or $d$, modelling the state as Markovian.
Typically, these distributions, $T \given u, \fc$ and $U^{\mathrm{next}} \given t, u, \fc$, are unknown.
The rvs $C$, $N$, $\vectrv{F}$ and $\vectrv{T}$ (figuratively referred to as the run time) are observed; $\vectrv{U}$ is unobserved.

To tie the measurement model to the problem statement, we assume that the true loss $l(\fc)$ is a function of the distribution $T \given \fc$.
This distribution represents the performance of program $f$ under configuration $c$ marginalized over the possible uncontrolled states:
\begin{align*}
P(T \given \fc) = \E_U\BB{P\PP{T \given \UC \fc} P\PP{U \given c}}.
\end{align*}
This factorization separates the influence of $T \given \UC \fc$ (\Cref{itm:t-given-u-f-c}), from the unknown distribution of uncontrolled states, $U \given c$.
This latter distribution captures the overall prevalence and importance of uncontrolled states in real workloads, of which any particular execution environment -- along with the uncontrolled states $U_1$ and $U^{\mathrm{next}}$ it generates -- may be unrepresentative.
The allusion to the ``execution environment'' is to mark it as an extension point where additional interactions beyond the scope of the configuration model may arise.

We now define the model for benchmark suites that test all configurations.
The suite model can be understood as the product extension of the probability spaces of individual, per-configuration measurement models except we allow for additional dependencies between the uncontrolled states within certain groups of configurations, for example, those run on the same computer or in the same data center.

\begin{definition}[Benchmark suite model]
\label{def:benchmark-suite-model}
Given a partition $\sS$ of the set $\sC$, the benchmark suite model carries out the measurements for each configuration according to \Cref{def:configuration-model}, allowing for arbitrary dependencies between the initial uncontrolled states $\{U_1^c\}_{c \in \sS'}$ of configurations in the same part $\sS' \in \sS$, but requiring that they are otherwise independent.
\end{definition}

\section{The Mirage of Absolute Performance}
\label{sec:absolute-performance}

We allowed the loss $l(\fc)$ to be a function of the entire, unknown \emph{distribution} of the run times, $T \given \fc$,\footnote{Allowing this dependence on the distribution is still in keeping with von Neumann--Morgenstern rationality \citep{VonNeumann1944-VONTOG-4} if we take per-configuration aggregates as outcomes, instead of individual measurements.} although in the following sections we address only two common, pointwise cases: arithmetic ($l(\fc) = \E[T \given \fc]$) and geometric averaging ($l(\fc) = \E[\ln T \given \fc]$) \citep{mashey2004war, speccpu2006}.
Consider, for example, arithmetic averaging.
From the law of total expectation, we have that
\begin{align*}
l(\fc) = \E[T \given \fc] = \E_{U \given \fc} \E[T \given \UC \fc],\numberthis\label{eq:avg-over-u}
\end{align*}
which suggests estimating $l(f) = \E_C l(\fc)$ by sampling or enumerating $c \sim C$ and averaging estimates of $\E[T \given \UC \fc]$ over $U \given \fc$.
The most straightforward estimator is running $f$ repeatedly in configuration $c$ and averaging the observed run times.

\begin{definition}[Average over runs]\label{def:average-of-runs}
Given a sequence of programs $\vect{f}$ and corresponding values $\vect{x}$, both of length $n$, we define the average over runs of a program $f \in \sF$ as
\begin{align*}
\bar{x}_{f} = \frac{1}{n_f} \sum_{i=1}^n \indic_{f_i = f} \, x_i,
\end{align*}
where $n_f = \sum_{i=1}^r \indic_{f_i = f}$ is the number of times $f$ is run, and $\indic$ is the indicator function.
\end{definition}


There is a family of benchmarking methods that differ only in experiment design (i.e. in deciding what program to run, when to stop) and otherwise use this averaging estimator $\TA_{f}$ of the expected run time.
However, regardless of the experiment design, unbiased or consistent estimation of \eqref{eq:avg-over-u} is not possible without further assumptions as it involves the uncontrolled state, which may affect the loss.
The inadequacy of the simple average has been noted by various authors \citep{gil2011microbenchmark, chen2014statistical, barrett2017virtual}.

Despite its limitations, there are cases where $\TA_{f}$ can be unbiased and consistent:
\begin{itemize}
\item \inlineassumption{A.1} If $\sU$ is a singleton, then this is trivial.
\item \inlineassumption{A.2} If we can sample from $U \given c$, then we can appeal to the Central Limit Theorem (CLT).
\item \inlineassumption{A.3} If $\vectrv{U} \given\cC d$ is Markovian with stationary distribution $U \given c$, then we can use the Markov-chain CLT.
\item \inlineassumption{A.4} If the effect of the previous uncontrolled state on the run time follows a zero-mean random walk, then we can use the first-difference estimator.
\item \inlineassumption{A.5} If the environmental process mixes strongly enough, $\TA_{f}$ can be asymptotically unbiased and consistent.
\end{itemize}
However, all these assumptions are unrealistic, hard to verify or offer only asymptotic guarantees.
\inlineassumptionref{A.1} states that all features of the environment that can affect performance are known and controlled.
With the complexity of current hardware and software, such total control is practically unattainable.
Even on uncomplicated embedded systems, this assumption can be hard to verify and is not the focus of this work.
Next, \inlineassumptionref{A.2} is violated in the presence of stateful processes such as those mentioned earlier.

Concerning \inlineassumptionref{A.3}, these stateful environmental processes are Markovian, but they do not necessarily have a stationary distribution (e.g. due to reducibility or periodicity), so the Markov-chain CLT is not applicable.
In practice, we argue that the sequence of states produced by, for example, repeated runs of a program on any given machine may not visit the whole support of $U \given c$, let alone have a matching stationary distribution.
A particularly relevant example is that of room temperature, which -- if not controlled for -- may stay constant, drift slowly or exhibit a strong periodicity (e.g. due to a thermostat), affecting CPU temperature, thermal controller state \citep{enwiki:1215993085} and ultimately performance.
\inlineassumptionref{A.4} is easily violated by a single factor, such as the temperature, whose effect does not tend to follow a zero-mean random walk.
Finally, even if \inlineassumptionref{A.5} is known to be true, for it to be useful with a finite number of measurements, we need a non-trivial lower bound on the rate of mixing, which requires some knowledge of the environmental dynamics.

In summary, we have argued through examples that consistent estimation of $l(\fc)$, the cost of running a program in a given configuration, needs strong and often unrealistic assumptions about the uncontrolled state.
Therefore, relying on absolute performance measurements for benchmarking in stateful environments can be misleading and unreliable.

While unexplored possibilities undoubtedly remain, the more general argument is that without detailed knowledge of the environmental dynamics all assumptions are suspect.
Can we reverse-engineer the logic of the thermal controller and the kernel?
Is that worthwhile?
On every hardware and software configuration?
Experimental science has long grappled with similar challenges, employing tools like fixed-effect models, randomization and blocking.
In the following, we adapt these techniques to benchmarking.

\section{Deltas to Decision}
\label{sec:deltas-to-decision}

As discussed in the previous section, direct estimation of absolute performance numbers comparable across experiments is infeasible in the presence of uncontrolled state.
However, our formalization of benchmarking as a decision problem in \Cref{sec:problem-formalization} only requires estimating the performance of one program relative to another in the same experiment, which is an easier task.
Intuitively, if we have a biased estimator $\LE(\fc)$ of $l(\fc)$, then $\smash[b]{\argmin_{f \in \sF}} l(\fc) = \smash[b]{\argmin_{f \in \sF}} \E[\LE(\fc)]$ as long as its bias is the same for all $f \in \sFP$\footnotemark{}
We expand on this idea in the following.

\begin{definition}[Deltas]
\label{def:deltas}
In the context of some $\fg \in \sFC$ we define the true per-configuration and the suitewise average performance deltas as
\begin{gather*}
\DT^c(\fg) = l(\fc) - l(\gc),\\
\DT(\fg) = l(f) - l(g).
\end{gather*}
Assuming that $\LE(\fc)$ is a (potentially biased) estimator of $l(\fc)$, we define its suitewise average as $\LE(f) = \sum_{c \in \sC} P(c) \LE(\fc)$.
With these, the estimated per-configuration delta and the estimated average delta are defined as
\begin{gather*}
\DE^c(\fg) = \LE(\fc) - \LE(\gc)\\
\DE(\fg) = \LE(f) - \LE(g).
\end{gather*}
\end{definition}

\footnotetext{More generally, utility in decision theory is invariant to positive linear transformations.}

We use the shorthands $\DT^c$\supkern, $\DT$, $\DE^c$\supkern, $\DE$ where $f$ and $g$ are unambiguous in the context.

\begin{remark}
Although for each configuration, $\LE(\fc)$ lives in the probability space of its configuration model, the benchmark suite model ties them together in an extended probability space, so $\LE(f) = \smash[b]{\sum_{c \in \sC}} P(c) \LE(\fc)$ is well-defined given a partition $\sS$ and the intra-partition dependencies as required by \Cref{def:benchmark-suite-model}.
\end{remark}

\subsection{Finite-Sample Behaviour}
\label{sec:finite-sample-behaviour}

In the finite-sample regime, absolute certainty is unattainable.
We must therefore quantify the probability of making a correct decision.
The \emph{timing} of this quantification leads to a critical distinction:
\begin{enumerate}
\item \textbf{pre-data}: before starting the experiment, ignorant of the realization of the data;
\item \textbf{post-data}: after completing the experiment, conditioning on the observed data and the resulting decision.
\end{enumerate}
The pre-data question is: \emph{What is the probability that we will make the right decision if we follow the decision-making process given by the experiment design and the decision rule?}
In contrast, the post-data question is: \emph{What is the probability that we made the right decision given the decision-making process that we followed and the observed data?}

The pre-data perspective aligns naturally with frequentist inference, while the post-data perspective is the hallmark of Bayesian inference.
However, the dichotomy is not strict: a Bayesian can perform pre-data inference via the prior predictive distribution, and a frequentist can answer post-data questions by constructing confidence intervals conditioned on ancillary statistics \citep{casella1992conditional,goutis1995frequentist}.
Thus, the pre- and post-data distinction cuts across the choice of probabilistic paradigm.

Beyond their philosophical underpinnings, an operational distinction is that frequentists typically bound the worst-case error (ensuring validity for \emph{any} true parameter $\theta$), whereas Bayesians optimize the average-case error (integrating over a prior $\pi(\theta)$).
When the answer to the pre-data question is that \emph{the decision-making process gives the correct decision with probability 0.95}, the frequentist qualifies that with \emph{\dots well, at least 0.95, no matter what the real performance delta is}, while the Bayesian goes on to say \emph{\dots on average over possible deltas, given our prior beliefs}.

In this work, we evaluate methods based on their frequentist, pre-data properties, as we view benchmarking as a repeated process in software development.
While post-data analysis could theoretically improve efficiency of this process by quantifying the strength of evidence, our inability to model the complex environmental dynamics precludes a full Bayesian treatment.

\subsubsection{Decisions under Uncertainty}
\label{sec:decisions-under-uncertainty}

A natural approach to decision making is to construct symmetric CIs for $\DT(\fg)$ around $\DE(\fg)$ and deem $f$ faster than $g$ if the entire realized CI is negative.
However, as correct decision making requires knowing only the sign of $\DT(\fg)$, one of the two bounds embodied by the CI is unnecessary.
Thus, for increased statistical power (that is, reaching the same level of confidence in our decision making process with fewer runs), we require only one-sided bounds.

A decision rule maps observations to actions.
Here, the desired action is to identify a program from the set of fastest programs in $\sF$.
However, the finite number of measurements and the resulting uncertainty necessitates an augmented action space, $\sF \cup \{0\}$, where $0$ represents abstaining from a decision, and it is assumed that $0 \not\in \sF$.

\begin{definition}[The decision rule]
\label{def:decision-rule}
As input, the rule takes the estimates $\de(\fg)$ realized from the estimators $\DE(\fg)$ in the context of a measurement model with $n$ runs.
The required decision confidence $\gammadec \in [0, 1]$ is a parameter.
Let $\gammadecprime = 1 - (1 - \gammadec) / (\ct{F} - 1)$ be its Bonferroni correction.
Also, let $Z(\fg) = \DE(\fg) - \DT(\fg)$ be the rv representing the noise.
The decision rule picks
\begin{itemize}
\item randomly any $f^*\!\noscriptspace \in \sF$ such that $P(\de(f^*\lshd, g) \leq Z(f^*\lshd, g)) \geq \gammadecprime$ for all $g \in \sF \setminus \{f^*\}$,
\item $0$ if there is no such $f^*\!\noscriptspace$.
\end{itemize}
\end{definition}

Algebraically, the inequality $\DE(\fg) \leq Z(\fg)$ expands to $\DE(\fg) \leq \DE(\fg) - \DT(\fg)$, which simplifies to $\DT(\fg) \leq 0$.
Thus, the decision rule ensures that the data supports the conclusion that $f^*$ is no slower than any other program with confidence $\gammadecprime$ under the noise model $Z$.

While the exact noise distribution $Z$ is typically unknown in practice, we establish the theoretical properties of the decision rule here assuming it is known.In \Cref{sec:looking-ahead}, we discuss how to safely operationalize this rule using adversarial noise envelopes and finite sample budgets.

The rest of this subsection provides an analysis of the probabilistic guarantees of the decision rule.

\begin{definition}[Noise quantiles]
\label{def:noise-p-quantiles}
We define the $p$-quantile bound of the noise distribution $Z(\fg)$ as
\begin{align*}
Q_p(\fg) = \inf \{ x \in \sR \mid P(Z(\fg) \leq x) \geq p \}.
\end{align*}
\end{definition}

\begin{remark}[Equivalence to delta thresholding]
\label{rem:equivalence-thresholding}
The antisymmetry of $\DE$ and $\DT$ implies the antisymmetry of $Z$.
Thus, we can rewrite the probability condition in the decision rule as
\begin{align*}
P\bigl(\de(\fg) \leq Z(\fg)\bigr) &\geq \gammadecprime \\
P\bigl(\de(\fg) \leq -Z(g, f)\bigr) &\geq \gammadecprime \\
P\bigl(Z(g, f) \leq -\de(\fg)\bigr) &\geq \gammadecprime.
\end{align*}
By the definition of the quantile $\smash{Q_{\gammadecprime}}(g, f)$, this is the same as $-\de(\fg) \geq Q_{\gammadecprime}(g, f)$, which is equivalent to
\begin{gather*}
\de(\fg) \leq -Q_{\gammadecprime}(g, f).
\end{gather*}
\end{remark}

\vspace{-\baselineskip}

\begin{restatable}[Unambiguous region]{proposition}{thmunambiguousregion}
\label{thm:unambiguous-region}
Let $\gamma \geq \gammadec$ (the ``detection confidence''), and let $\gamma' = 1 - (1 - \gamma) / (\ct{F} - 1)$.
We define the \textit{separation threshold} $S_{\gamma}(\fg)$ as
\begin{align*}
S_{\gamma}(\fg) = - Q_{\gamma'}(\fg) - Q_{\gammadecprime}(g, f).
\end{align*}
If the true deltas for some $f \in \sF$ are in the corresponding unambiguous region, meaning
\begin{align*}
\DT(\fg) \leq S_{\gamma}(\fg) \quad \text{and} \quad \DT(\fg) < 0
\end{align*}
for all $g \neq f$, then this $f$ is the unique fastest program, and the decision rule picks $f$ with probability at least $\gamma$.
\end{restatable}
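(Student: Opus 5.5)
The plan has two parts. First, uniqueness is immediate: $\DT(\fg) < 0$ means $l(f) < l(g)$ for every $g \neq f$, so $f$ is the unique minimizer of $l$. Second, the probability bound. Here I will use \Cref{rem:equivalence-thresholding}: the rule's condition for $f$ against $g$ holds exactly when the realized delta satisfies $\de(\fg) \leq -Q_{\gammadecprime}(g, f)$. Picking $f$ requires these conditions for all $g \neq f$ and no competing candidate. So I first show that the event that all $\ct{F}-1$ conditions hold for $f$ has probability at least $\gamma$, and afterwards deal with other candidates.

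For the main step, fix $g \neq f$ and write $\DE(\fg) = \DT(\fg) + Z(\fg)$. By the definition of the quantile (\Cref{def:noise-p-quantiles}) and right-continuity of the CDF,
\begin{align*}
P\bigl(Z(\fg) \leq Q_{\gamma'}(\fg)\bigr) \geq \gamma'.
\end{align*}
On this event, the hypothesis $\DT(\fg) \leq S_{\gamma}(\fg)$ gives
\begin{align*}
\DE(\fg) \leq S_{\gamma}(\fg) + Q_{\gamma'}(\fg) = -Q_{\gammadecprime}(g, f),
\end{align*}
which is precisely the condition in \Cref{rem:equivalence-thresholding}. Each comparison therefore fails with probability at most $1-\gamma' = (1-\gamma)/(\ct{F}-1)$. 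A union bound over the $\ct{F}-1$ opponents shows that all conditions hold simultaneously with probability at least $\gamma$. This is the Bonferroni role of $\gamma'$.

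The remaining and main obstacle is ruling out a rival candidate $h \neq f$ that also qualifies, since the rule picks \emph{randomly} among candidates. I would argue that $h$ qualifying requires $\de(h, f) \leq -Q_{\gammadecprime}(f, h)$, which is $\de(\fC h) \geq Q_{\gammadecprime}(f, h)$. Meanwhile $f$ qualifying requires $\de(\fC h) \leq -Q_{\gammadecprime}(h, f)$. Both can hold only if $Q_{\gammadecprime}(f,h) \leq -Q_{\gammadecprime}(h,f)$. By antisymmetry $Z(h,f) = -Z(f,h)$, so $-Q_{\gammadecprime}(h,f)$ is essentially the $(1-\gammadecprime)$-quantile of $Z(f,h)$. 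Whenever $\gammadecprime > 1/2$ (i.e. $\gammadec$ is non-trivial), the $\gammadecprime$-quantile strictly exceeds the $(1-\gammadecprime)$-quantile, so the two conditions are incompatible except in the degenerate case of an atom at the boundary. On the good event, then, $f$ is the only candidate and is picked deterministically.

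I expect the delicate part to be handling this quantile-ordering edge case: atoms, and $\gammadecprime \leq 1/2$. If a clean argument fails, I would fall back on noting that for $\gammadec \geq 1/2$ we have $\gammadecprime \geq 1/2$, and treat ties at the boundary via the infimum definition of $Q_p$. Failing that, I would state the probability bound for the event that $f$ qualifies, and argue that this event excludes all other candidates.
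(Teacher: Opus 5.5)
Your core argument is exactly the paper's proof, and so is your uniqueness remark. The core argument uses right-continuity at $Q_{\gamma'}(f,g)$, substitutes $\DT(f,g) \le S_\gamma(f,g)$ to reach the threshold of \Cref{rem:equivalence-thresholding}, and finishes with the Bonferroni union bound. The paper stops there. It identifies ``$f$ meets the criterion against every $g$'' with ``the rule picks $f$'' and never considers a second qualifying candidate; elsewhere it only asserts that the rule enforces a unique winner. So your rival-candidate step goes beyond the paper, and you are right that something is needed there.

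Your ordering argument is sound where it applies. Take $p = \gammadecprime > 1/2$ and $Z = Z(f,g)$. The events $\{Z \le Q_p(f,g)\}$ and $\{Z \ge -Q_p(g,f)\}$ each have probability at least $p$, so they cannot be disjoint, and hence $-Q_p(g,f) \le Q_p(f,g)$. Equality holds only if $Z$ has an atom of mass at least $2p-1$ at $Q_p(f,g)$. Without such an atom, no $g$ can ever qualify alongside $f$; this holds deterministically, not just on your good event.

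The gap is your expectation that the remaining edge cases can be handled, and your fallback claim that the event ``$f$ qualifies'' excludes all other candidates. With random selection among qualifiers, the statement is false in exactly those cases.

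For the atom case, take $\ct{F} = 2$ (so $\gammadecprime = \gammadec$ and $\gamma' = \gamma$), with $\gammadec = 0.9$ and $\gamma = 0.99$. Let $Z(f,g)$ take the values $-1, 0, 1$ with probabilities $0.05, 0.85, 0.1$. Then $Q_{0.9}(f,g) = Q_{0.9}(g,f) = 0$ and $Q_{0.99}(f,g) = 1$, so $S_\gamma(f,g) = -1$. Set $\DT(f,g) = -1$, so the hypotheses hold. Program $f$ always qualifies, since $\de(f,g) \le 0$. But on $Z(f,g) = 1$ (probability $0.1$) we get $\de(f,g) = 0 \ge Q_{0.9}(f,g)$, so $g$ qualifies too. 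A uniform pick then selects $f$ with probability only $0.95 < \gamma$, even though your good event has probability one here.

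The same failure occurs for $\ct{F} = 2$ and $\gammadec < 1/2$, which is permitted since $\gammadec \in [0,1]$. With Gaussian $Z(f,g)$, the interval $[Q_{\gammadecprime}(f,g), -Q_{\gammadecprime}(g,f)]$ has nonempty interior. When $\DT(f,g) = S_\gamma(f,g)$, that interval is hit with positive probability inside your good event.

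To finish, you have two options. You can add hypotheses: $\gammadecprime > 1/2$, and for each $g \neq f$ no atom of $Z(f,g)$ of mass at least $2\gammadecprime - 1$ at $Q_{\gammadecprime}(f,g)$. Under these your argument is complete. Otherwise, settle for the weaker conclusion that $f$ meets the rule's criterion with probability at least $\gamma$, which is all the paper's proof actually establishes.
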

\noindent For the proofs of propositions in this section, see \Cref{sec:proofs-for-deltas-to-decisions}.

\Cref{tab:confusion-matrix-probabilistic} shows the probabilistic confusion matrix of the decision rule by regions defined by the true deltas $\DT(f, g) \leq 0$, where $f$ is one of the fastest programs and $g \neq f$.\footnotemark
\begin{itemize}
\item \textbf{Tie} ($\exists g \colon \DT(\fg)=0$):
For a specific fastest program $f$ to be picked, it must statistically ``beat'' every rival.
The bottleneck is the tied rival $g$: since $\DT(\fg)=0$, $f$ can only win if the noise erroneously exceeds the decision threshold.
This is a false positive event with probability at most $1 - \gammadecprime$.
Since the decision rule enforces a unique winner, the events of picking any specific program are disjoint.
Thus, the total probability $p_0$ of picking \textit{any} of the $\ct{T}$ fastest programs is at most the sum of these tail probabilities: $\ct{T}(1 - \gammadecprime)$.
Substituting the definition of $\smash{\gammadecprime}$, we verify $p_0 = \frac{\ct{T}}{\ct{F}-1} (1 - \gammadec)$.
\item \textbf{Ambiguous region} ($\forall g \colon \DT(\fg) < 0$ and $\exists g \colon \DT(\fg) > S_\gamma(\fg)$):
The action is Fastest or Abstain with probability $\gammadec$, but we cannot bound their probabilities separately.
It may be that the decision rule always abstains.
\item \textbf{Unambiguous region} ($\forall g \colon \DT(\fg) < 0$ and $\DT(\fg) \leq S_\gamma(\fg)$):
The fastest program is picked with confidence $\gamma$.
Abstention or picking a slower program have at most probability $1 - \gamma$ together.
\end{itemize}

\footnotetext{
Although we emphasize the decision-theoretic point of view throughout, it is worth pointing out that our decision rule corresponds to frequentist multiple hypothesis testing.
For a specific candidate $f^*$\lshd, deciding that it is the winner requires rejecting the union of $\ct{F} - 1$ pairwise null hypotheses, where each $H_{0, g}$ states that $f^*$ is not faster than $g$.
This is an \textit{intersection--union} test: we declare $f^*$ the winner only if \textit{every} pairwise null is rejected.
In this framework, we refer to the following global error types.
\begin{itemize}
\item \textbf{Type I Error (False Positive):}
This occurs when the global null hypothesis holds (i.e. there is no unique winner) yet the decision rule rejects it by picking one of the programs.
This is a mild error in benchmarking as there is no performance loss.
\item \textbf{Type II Error (False Negative):}
This occurs when the alternative hypothesis holds (i.e. there is a unique winner) but the decision rule fails to reject the null by abstaining.
This represents a lack of statistical power: the signal exists, but the test was too conservative or the noise too high to detect it with the required confidence.
\item \textbf{Type III Error (Sign Error):}
Unlike standard binary hypothesis testing, our setup allows for a third outcome: rejecting the null in the \textit{wrong direction} (declaring a slower program the winner).
\end{itemize}}

\begin{table}
\caption{The probabilistic confusion matrix for the decision rule (\Cref{def:decision-rule}) with a given decision confidence $\gammadec$.
The detection confidence $\gamma$ is purely for analysis, where it is assumed that $\gammadec \leq \gamma \leq 1$.
The number of programs is $\ct{F}$, the number of programs tied for fastest is $\ct{T}$.
We let $p_0 = \frac{\ct{T}}{\ct{F}-1} (1 - \gammadec)$ and $p_1 = \frac{\ct{F} - \ct{T}}{\ct{F}-1} (1 - \gammadec)$.
The deltas between the fastest and other programs are denoted by $\DT(\fg) \leq 0$ for all $g \neq f$.
The region \textbf{Tie} corresponds to the condition $\exists g \colon \DT(\fg)=0$; \textbf{Ambiguous} to $\forall g \colon \DT(\fg) < 0 \land \exists g \colon \DT(\fg) > S_\gamma(\fg)$; and \textbf{Unambiguous} to $\forall g \colon \DT(\fg) < 0 \land \DT(\fg) \leq S_\gamma(\fg)$, where $S_\gamma(\fg)$ denotes the separation bound for the pair $\fg$ defined in \Cref{thm:unambiguous-region}.
Cells corresponding to Type I, Type II and Type III errors are coloured \mglcolorbox{yellow!35}{yellow}, \mglcolorbox{orange!50}{orange} and \mglcolorbox{red!75}{red}, respectively, while those representing the correct action are white.}
\label{tab:confusion-matrix-probabilistic}
\centering
\newcommand{\CellHit}{\cellcolor{white}}
\newcommand{\CellTypeI}{\cellcolor{yellow!35}}
\newcommand{\CellTypeII}{\cellcolor{orange!50}}
\newcommand{\CellTypeIII}{\cellcolor{red!75}}

\begin{tabular}{l @{\quad} c c c}
\toprule
& \multicolumn{3}{c}{Region of $\DT(\fC *)$} \\
\cmidrule(l){2-4}
Action & Tie                                  & Ambiguous                                        & Unambig. \\
\midrule
Fastest & \CellTypeI   $\leq p_0$             & \CellHit                                         & \CellHit     $\gamma \leq$ \\
Abstain & \CellHit     $1 - (p_0 + p_1) \leq$ & \CellTypeII \multirow{-2}{*}{$\gammadec \leq$}   & \CellTypeII \\
Slower  & \CellTypeIII $\leq p_1$             & \CellTypeIII                $\leq 1 - \gammadec$ & \CellTypeIII \multirow{-2}{*}{$\leq 1 - \gamma$} \\
\bottomrule
\end{tabular}
\end{table}

\subsection{Asymptotics}

In this section, we prove that the decision rule makes asymptotically correct decisions if the per-configuration estimators exhibit a certain form of consistency.
For sample efficiency in the finite regime, we based the decision rule (\Cref{def:decision-rule}) on the notion of stochastic upper bounds defined by the noise quantiles (\Cref{rem:equivalence-thresholding}).
Asymptotically, we expect these noise quantiles to tighten around zero.
We formalize this as a one-sided variant of weak consistency.

\begin{definition}[Upper consistency]
We say that a sequence of estimators $\hat{\theta}_n$ of some $\theta \in \sR$ is upper consistent if the probability that the estimator exceeds the true value by any fixed margin vanishes as $n \to \infty$.
Formally,
\begin{align*}
\forall \omega > 0 \colon \lim_{n \to \infty} P(\hat{\theta}_n - \theta \leq \omega) = 1.
\end{align*}
\end{definition}

To tie the discussion to the decision rule, we point out that the upper consistency of $\DE(\fg)$ is equivalent to requiring that the noise quantiles (\Cref{def:noise-p-quantiles}) become non-positive in the limit.
Specifically, for any fixed confidence $p < 1$, the estimator $\DE(\fg)$ is upper consistent if and only if
\begin{align*}
\limsup_{n \to \infty} Q_p(\fg) \leq 0.
\end{align*}

In finite-sample analysis, we must also characterize the rate at which these bounds shrink.
Next, we show that if the per-configuration estimators $\DE^c$ admit upper bounds that scale as $\bigO(n^{\negone/2})$, then $\DE$ preserves this rate.
Crucially, we construct the suitewise confidence level by combining the per-configuration probabilities according to the dependency structure of the benchmark suite model.

\begin{restatable}[Suitewise upper consistency and rates]{proposition}{thmdectodeconsistencyrate}
\label{thm:dec-to-de-consistency-rate}
Let $\fg \in \sF$.
Assume that for each configuration $c \in \sC$, the estimator $\DE^c(\fg)$ of $\DT^c(\fg)$ admits an upper bound $U_n^c$ such that $P(\DE^c(\fg) - \DT^c(\fg) \leq U_n^c) \geq 1 - \alpha_c$ for some failure probability $\alpha_c \in (0, 1)$.
Assume further that $\max(0, U_n^c) = \bigO(\phi(n))$ uniformly for all $c \in \sC$.
Let the suitewise bound be $U_n = \sum_{c \in \sC} P(c) U_n^c$.
Then, provided the sum and product exist, $U_n$ is a probabilistic upper bound satisfying
\begin{align*}
P(\DE(\fg) - \DT(\fg) \leq U_n) \geq \prod_{\sS' \in \sS} \Pigl( 1 - \min\Bigl(1, \sum_{c \in \sS'} \alpha_c \Bigr) \Pigr),
\end{align*}
and $\max(0, U_n) = \bigO(\phi(n))$.
\end{restatable}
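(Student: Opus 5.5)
The argument rests on linearity. By \Cref{def:deltas}, $\DE(\fg) - \DT(\fg) = \sum_{c \in \sC} P(c)\,\bigl(\DE^c(\fg) - \DT^c(\fg)\bigr)$, because both $\LE(f)$ and $l(f)$ are $P(c)$-weighted averages of their per-configuration counterparts. The weights are nonnegative. So on the event $E = \bigcap_{c \in \sC} E_c$, where $E_c = \{\DE^c(\fg) - \DT^c(\fg) \leq U_n^c\}$, summing the per-configuration inequalities gives $\DE(\fg) - \DT(\fg) \leq U_n$. It therefore suffices to show $P(E) \geq \prod_{\sS' \in \sS}\bigl(1 - \min(1, \sum_{c \in \sS'} \alpha_c)\bigr)$.

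\textbf{The probability bound.} Write $E = \bigcap_{\sS' \in \sS} E_{\sS'}$ with $E_{\sS'} = \bigcap_{c \in \sS'} E_c$.

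\emph{Within a part.} Dependencies inside a part $\sS'$ are arbitrary, so I would use a union bound:
\begin{align*}
P(E_{\sS'}) \geq 1 - \sum_{c \in \sS'} \alpha_c .
\end{align*}
Since probabilities are nonnegative, this improves to $P(E_{\sS'}) \geq 1 - \min(1, \sum_{c \in \sS'} \alpha_c)$.

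\emph{Across parts.} By \Cref{def:benchmark-suite-model}, initial states in different parts are independent. Each configuration's run evolves through its own configuration model (\Cref{def:configuration-model}), with its own design and its own $T$ and $U^{\mathrm{next}}$ draws. Hence the collection of all variables of configurations in $\sS'$ is independent across parts. Each $E_{\sS'}$ is measurable with respect to its part's variables, assuming $U_n^c$ is deterministic or depends only on configuration $c$'s data. Independence then gives $P(E) = \prod_{\sS'} P(E_{\sS'})$.

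\emph{Infinite case.} If $\sC$ or $\sS$ is infinite, I would apply the same argument to finite subfamilies and pass to the limit by continuity of measure from above. The decreasing intersections and partial products converge, which is where ``provided the sum and product exist'' is used.

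\textbf{The rate.} I would use $\max(0, \sum_c P(c) U_n^c) \leq \sum_c P(c)\max(0, U_n^c)$. This holds because $\max(0,\cdot)$ is subadditive and positively homogeneous, and passes to countable sums where they converge. Uniformity gives a constant $K$ and an $n_0$ with $\max(0, U_n^c) \leq K\phi(n)$ for all $c$ and all $n \geq n_0$. Hence $\max(0, U_n) \leq K\phi(n)\sum_c P(c) = K\phi(n)$.

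\textbf{Main obstacle.} The delicate step is justifying that the events $E_{\sS'}$ are independent across parts. \Cref{def:benchmark-suite-model} states independence only for the initial states, so I must argue that all subsequent randomness (program choices, measurements and state transitions) is generated independently per configuration given those initial states. That independence is what makes the per-part probability spaces combine as a product.
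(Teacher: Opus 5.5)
Your proof is correct and follows the paper's argument essentially step for step. You cap Boole's inequality at $1$ within each part, multiply the per-part success probabilities using independence across parts, apply the convex-combination step on the intersection of the success events, and obtain the rate from $\max(0, U_n) \leq \sum_{c} P(c)\max(0, U_n^c) \leq K\phi(n)$. Your extra care is a reasonable tightening rather than a different route: you justify why independence of the initial states extends to whole per-part trajectories (the paper simply reads \Cref{def:benchmark-suite-model} as a product extension), and you handle infinite $\sC$.
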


\begin{corollary}
If $\lim_n \phi(n) = 0$, then $\DE^c$ is upper consistent for all $c$ and so is $\DE$.
\end{corollary}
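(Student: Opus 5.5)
The plan is to prove the corollary by instantiating \Cref{thm:dec-to-de-consistency-rate} with a sequence of failure probabilities that tends to zero, and then to read off upper consistency from the resulting bound.

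\textbf{Per-configuration consistency.} Fix $c$ and $\omega > 0$. By hypothesis, for each $\alpha_c \in (0,1)$ there is a bound $U_n^c$ with $P(\DE^c - \DT^c \leq U_n^c) \geq 1 - \alpha_c$ and $\max(0, U_n^c) = \bigO(\phi(n))$. Since $\phi(n) \to 0$, we have $\max(0,U_n^c) \leq \omega$ for all large $n$. By monotonicity, $P(\DE^c - \DT^c \leq \omega) \geq P(\DE^c - \DT^c \leq U_n^c) \geq 1 - \alpha_c$ eventually. Hence $\liminf_n P(\DE^c - \DT^c \leq \omega) \geq 1 - \alpha_c$.

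The main obstacle lies here. The assumption has to be read as holding for every $\alpha_c \in (0,1)$, with the constant in the $\bigO$ allowed to depend on $\alpha_c$. This matches the statement's equivalence with $\limsup_n Q_p \leq 0$ for every $p<1$. Under a single fixed $\alpha_c$, one would obtain only $\liminf \geq 1-\alpha_c$, not upper consistency. I would make this reading explicit and then let $\alpha_c \downarrow 0$. Doing so gives $\lim_n P(\DE^c - \DT^c \leq \omega) = 1$, which is upper consistency of $\DE^c$.

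\textbf{Suitewise consistency.} Fix $\omega>0$ and a target $\varepsilon>0$. Choose the $\alpha_c$ so that the product $\prod_{\sS'\in\sS}\bigl(1-\min(1,\sum_{c\in\sS'}\alpha_c)\bigr)$ is at least $1-\varepsilon$. When $\sC$ is finite or countable, taking $\alpha_c = \varepsilon P(c)$ works whenever the product is well defined. This is because $\prod(1-x_k)\geq 1-\sum x_k$ for $x_k\in[0,1]$, and $\sum_{\sS'}\sum_{c\in\sS'}\alpha_c = \varepsilon$.

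\Cref{thm:dec-to-de-consistency-rate} then gives $P(\DE - \DT \leq U_n)\geq 1-\varepsilon$ with $\max(0,U_n)=\bigO(\phi(n))$. The uniformity in $c$ is what allows this step, since it keeps the $\bigO$ bound valid for the weighted sum. Consequently $U_n \leq \omega$ for large $n$, and so $P(\DE-\DT\leq\omega)\geq 1-\varepsilon$ eventually. Letting $\varepsilon\to 0$ yields upper consistency of $\DE$.
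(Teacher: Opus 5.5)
Your proposal is correct and follows the route the paper intends: the paper gives no separate proof and reads the corollary off \Cref{thm:dec-to-de-consistency-rate}, and your requirement that the per-configuration bounds exist for every $\alpha_c \in (0,1)$, so that $\alpha_c \downarrow 0$ is allowed, makes explicit a step the paper leaves implicit (without it the corollary fails), matching the paper's remark that upper consistency means $\limsup_n Q_p \leq 0$ for every $p < 1$. The only caveat is that for countably infinite $\sC$ your choice $\alpha_c = \varepsilon P(c)$ needs the $\bigO(\phi(n))$ constant to stay uniform in $c$ even though $\alpha_c \to 0$ along $c$, which your reading (constants depending on $\alpha_c$) does not by itself supply; for finite $\sC$ the argument is complete.
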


Finally, we prove that the decision rule is consistent: given a sufficient number of runs, the unambiguous region expands to cover any program with a strictly negative true delta.

\begin{restatable}[Consistency of the decision rule]{proposition}{thmuncertainruleconsistency}
\label{thm:uncertain-rule-consistency}
Assume there is a unique fastest program $f \in \sF$.
If the estimator $\DE$ is upper consistent, then
\begin{align*}
\lim_{n \to \infty} P(\text{decision rule picks } f) = 1.
\end{align*}
\end{restatable}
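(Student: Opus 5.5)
The plan is to reduce the claim to \Cref{thm:unambiguous-region}. For each $\gamma \in [\gammadec, 1)$, I will show that for all sufficiently large $n$ the true deltas of the unique fastest $f$ lie in the unambiguous region. That proposition then gives $P(\text{decision rule picks } f) \geq \gamma$ for large $n$. Since $\gamma$ can be taken arbitrarily close to $1$, the $\liminf$ of the probability is $1$, and hence the limit is $1$.

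Uniqueness of the fastest program gives $\DT(\fg) < 0$ for every $g \neq f$. These quantities do not depend on $n$, because they are defined by the true losses. Since $\sF$ is finite, the margin $\eta = \min_{g \neq f} \abs{\DT(\fg)}$ is strictly positive. What remains is to show $\DT(\fg) \leq S_\gamma(\fg) = -Q_{\gamma'}(\fg) - Q_{\gammadecprime}(g, f)$ eventually. For that it suffices to have $Q_{\gamma'}(\fg) + Q_{\gammadecprime}(g, f) \leq \eta$ for large $n$.

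Both quantiles have fixed levels $p < 1$. This requires $\gamma' < 1$, which holds because $\gamma < 1$. It also requires $\gammadecprime < 1$. I will handle the edge case $\gammadec = 1$ separately, or note that the statement is vacuous or degenerate there. Upper consistency of $\DE$ is assumed for all ordered pairs, in particular for both $(\fg)$ and $(g, f)$. By the equivalence stated just before \Cref{thm:dec-to-de-consistency-rate}, this gives $\limsup_n Q_p(\fg) \leq 0$ and $\limsup_n Q_p(g,f) \leq 0$ for each such $p$. So for large $n$ each quantile is at most $\eta/2$. Taking the maximum of the finitely many thresholds over $g \neq f$, the unambiguous-region conditions hold simultaneously for all $g$.

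The main obstacle is the direction of upper consistency. Upper consistency only bounds the noise from above, whereas the separation threshold involves $Q(g,f)$, i.e.\ the quantile of $Z(g,f) = -Z(\fg)$, which concerns the lower tail of $Z(\fg)$. I will therefore interpret "$\DE$ is upper consistent" as holding for every ordered pair, which is what the hypothesis without a fixed pair suggests. If only a one-directional version were available, I would instead argue directly from \Cref{def:decision-rule} via \Cref{rem:equivalence-thresholding}. In that route, $f$ is picked when $\de(\fg) \leq -Q_{\gammadecprime}(g,f)$ for all $g$. The left side is at most $\DT(\fg) + Z(\fg) \leq -\eta + \eta/2$ with probability tending to one, by upper consistency of $(\fg)$. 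The right side is at least $-\eta/2$ eventually, by upper consistency of $(g,f)$. A union bound over the $\ct{F}-1$ rivals then finishes the argument, so both orientations are needed either way. I would also check that the rule picks $f$ rather than randomizing among several qualifying programs. This holds because the rule's condition for $g$ requires $\de(g,f) \leq -Q(f,g)$, which fails on this high-probability event.
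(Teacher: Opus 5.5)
Your proposal is correct and follows essentially the same route as the paper. Fix $\gamma \in [\gammadec, 1)$; the fixed, strictly negative deltas eventually satisfy $\DT(f,g) \leq S_\gamma(f,g)$ because the quantile terms are asymptotically non-positive; then \Cref{thm:unambiguous-region} applies, and you let $\gamma \to 1$.

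Your version is slightly more careful than the paper's, which appeals to \Cref{thm:dec-to-de-consistency-rate} and asserts that the quantiles vanish. You use only $\limsup_n Q_p \leq 0$ at levels $p<1$, and you make explicit that upper consistency is needed for both orientations $(f,g)$ and $(g,f)$. You also note that $\gammadec<1$ must be assumed. However, the case $\gammadec=1$ is not vacuous or degenerate, as you suggest: the claim can actually fail there, because $Q_1$ is an essential supremum, which upper consistency does not drive to zero.
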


\begin{remark}[Consistency and ties]
In the case of tied programs (where $\DT(\fg) = 0$), the probability of correctly abstaining does not converge to $1$ as $n \to \infty$, but rather remains lower-bounded by $1 - (p_0 + p_1) = 1 - \frac{\ct{F}}{\ct{F}-1}(1 - \gammadec)$ (see Tie / Abstain cell in \Cref{tab:confusion-matrix-probabilistic}).
This is a standard feature of fixed-confidence statistical tests: to force the false positive rate to zero asymptotically, one would need to let $\gammadec \to 1$.
\end{remark}

\subsection{Looking Ahead}
\label{sec:looking-ahead}

In our setting, the noise distribution $Z$ is typically unknown.
Consequently, we cannot directly evaluate the quantile thresholds $Q_p(\fg)$ in the decision rule.
To ensure robustness, we define the noise model used in the decision rule not as a single physical realization but as the \emph{adversarial envelope} over possible noise models $\mathcal{Z}$.
In practice, we want to detect when the actual noise is not in $\mathcal{Z}$.
Our proposed methods are based on the simplest such setup: requiring a hard bound on the magnitude of the noise.

\begin{definition}[Adversarial noise quantiles]
We define the adversarial $p$-quantile $Q_p^{\mathrm{adv}}$ as
\begin{align*}
Q_p^{\mathrm{adv}} = \smash[b]{\sup_{Z \in \mathcal{Z}, \fg \in \sF}} \inf \{ x \in \sR \mid P(Z(\fg) \leq x) \geq p \}.
\end{align*}
\end{definition}

Replacing the unknown noise quantiles with their adversarial counterparts makes the decision rule \emph{more conservative}.
Since the adversarial bound is a supremum, satisfying the decision condition $\de(\fg) \leq -\smash{Q^{\mathrm{adv}}_p}$ implies satisfying $\de(\fg) \leq -Q_p(g, f)$ for the true, unknown noise distribution.
With the choice of $U_n^c = \smash{Q^{\mathrm{adv}}_p}$, our results for both the finite and the asymptotic regimes remain valid.

In the context of our experimental designs, we will bound $\smash[b]{Q^{\mathrm{adv}}_p}$ to establish the upper consistency of the per-configuration estimators $\DE^c(\fg)$, which guarantees asymptotically correct decisions via \Cref{thm:dec-to-de-consistency-rate,thm:uncertain-rule-consistency}.

However, for practical benchmarking, we must satisfy user requirements on the confidence level and the maximum width of these stochastic upper bounds, corresponding to the minimum effect size that the decision rule can detect.
Furnished with known, vanishing probabilistic upper bounds, we can readily answer a user's question: \emph{Is this benchmarking method able to identify the fastest program?}
We might reply thus:
\emph{Yes, because the underlying estimator of the deltas is upper consistent, but you'll need to wait forever to be sure.}
A practical mind would then enquire along these lines:
\emph{I don't have the time.
How many runs are needed to detect a one percent difference with 0.95 probability?}
To which, the answer: \emph{Let me compute that for you from the bounds on the estimator.}
We will prove in \Crefrange{sec:delta-experiments}{sec:alpha-experiments} and validate experimentally in \Cref{sec:experiments} that the number of runs required grows at the theoretically optimal rate with our methods.

\begin{figure*}[t]
\centering
\begin{tikzpicture}
  \node[latent] (u1) {$U_1$};
  \node[obs, above=1.25cm of u1] (f1) {$F_1$};
  \node[det, fill=gray!25, right=0.05 of f1, yshift=-1cm] (d1) {$\DI_1$};
  \node[latent, right=1.0 of u1, yshift=+0.5cm] (e1) {$\Epsilonsub_1$};
  \node[latent, right=1.0 of f1, yshift=-0.5cm] (tau1) {$\Tausub_1$};
  \node[det, fill=gray!25, right=0.95 of e1, yshift=+0.5cm] (t1) {$T_1$};
  \node[latent, right=0.7 of t1, yshift=-1cm] (u2) {$U_2$};
  \edge {f1} {d1}
  \edge {u1} {e1}
  \edge {f1} {tau1}
  \edge {e1, tau1} {t1}
  \draw[->] (f1) to [bend left=45] (u2);
  \edge {t1, u1} {u2}

  \node[obs, above=1.25cm of u2] (f2) {$F_2$};
  \node[det, fill=gray!25, right=0.05 of f2, yshift=-1cm] (d2) {$\DI_2$};
  \node[latent, right=1.0 of u2, yshift=+0.5cm] (e2) {$\Epsilonsub_2$};
  \node[latent, right=1.0 of f2, yshift=-0.5cm] (tau2) {$\Tausub_2$};
  \node[det, fill=gray!25, right=0.95 of e2, yshift=+0.5cm] (t2) {$T_2$};
  \node[latent, right=0.7 of t2, yshift=-1cm] (u3) {$U_3$};
  \edge {f2} {d2}
  \edge {u2} {e2}
  \edge {f2} {tau2}
  \edge {e2, tau2} {t2}
  \draw[->] (f2) to [bend left=45] (u3);
  \edge {t2, u2} {u3}

  \node[obs, above=1.25cm of u3] (f3) {$F_3$};
  \node[det, fill=gray!25, right=0.05 of f3, yshift=-1cm] (d3) {$\DI_3$};
  \node[latent, right=1.0 of u3, yshift=+0.5cm] (e3) {$\Epsilonsub_3$};
  \node[latent, right=1.0 of f3, yshift=-0.5cm] (tau3) {$\Tausub_3$};
  \node[det, fill=gray!25, right=0.95 of e3, yshift=+0.5cm] (t3) {$T_3$};
  \edge {f3} {d3}
  \edge {u3} {e3}
  \edge {f3} {tau3}
  \edge {e3, tau3} {t3}

  \path (f1.north) to ++(0,0.5cm) coordinate (hwtop);
  \node[latent, right=0.7 of t3, yshift=-1cm] (u4) {$U_4$};
  \node[obs, above=1.25cm of u4] (f4) {$F_4$};
  \draw[->] (f3) to [bend left=45] (u4);
  \edge {t3, u3} {u4}

  \draw[rounded corners=0.2cm] (f1.north) to ++(0,0.5cm) to ++(1.5cm,0) coordinate (hwtop);
  \node (hwright) at ([xshift=0.5cm]f4.north west |- hwtop) {};
  \draw (hwtop) to (hwtop -| hwright);

  \draw[rounded corners=0.1cm, ->] (hwtop) to ([xshift=-0.53cm]f2.north west |- hwtop) to (f2.north west);
  \draw[rounded corners=0.2cm] (f2.north) to ([yshift=-0.4pt]f2.north |- hwtop) to ++(1.5cm,0) coordinate (hwtop);
  \draw[->] (hwtop) to ([xshift=0.1cm]hwtop -| hwright) node[det, fill=gray!25, right]{$\vectrv{N}$};

  \draw[rounded corners=0.1cm, ->] (hwtop) to ([xshift=-0.53cm]f3.north west |- hwtop) to (f3.north west);
  \draw[line width=0.8pt, rounded corners=0.1cm, shorten >=0.2cm] ([yshift=+0.2pt]hwtop) to ([yshift=+0.2pt,xshift=-0.53cm]f3.north west |- hwtop) to (f3.north west);
  \draw[rounded corners=0.2cm] (f3.north) to ([yshift=-0.4pt]f3.north |- hwtop) to ++(1.5cm,0) coordinate (hwtop);
  \draw (hwtop) to (hwtop -| hwright);

  \draw[rounded corners=0.1cm, ->] ([yshift=0.4pt]hwtop) to ([yshift=0.4pt,xshift=-0.53cm]f4.north west |- hwtop) to (f4.north west);
  \draw[line width=1.2pt, rounded corners=0.1cm, shorten >=0.2cm] ([yshift=0.4pt]hwtop) to ([yshift=0.4pt,xshift=-0.53cm]f4.north west |- hwtop) to (f4.north west);

  \node[right=0.2 of t3, yshift=0.0cm] (dots) {$x$};
  \fill[white] ([xshift=-0.1cm]u1.south -| dots.west) rectangle ([xshift=1pt, yshift=-5pt]hwtop -| f4.east);
  \fill[white] ([xshift=-0.1cm,yshift=-5pt]hwtop -| dots.west) rectangle ([xshift=0.1cm,yshift=+1.2pt]hwtop -| dots.east);
  \draw[gray, decorate, decoration={snake, amplitude=0.7mm, segment length=1.075cm}] ([xshift=0.1cm]dots.west |- u1) -- ([xshift=0.1cm,yshift=1pt]dots.west |- hwtop);
  \draw[gray, decorate, decoration={snake, amplitude=0.7mm, segment length=1.075cm}] ([xshift=0.35cm]dots.west |- u1) -- ([xshift=0.35cm,yshift=1pt]dots.west |- hwtop);

\end{tikzpicture}
\caption{The delta design (\Cref{def:delta-design}).
This constrains the configuration model (\Cref{fig:configuration-model}) but leaves the sampling strategy for $\vectrv{F}$ underspecified.
Under its ceteris paribus assumption (\Cref{sec:delta-experiments}), the observed [log] run time $T_i$ is a the sum of the unobserved true [log] run time $\Tausub_i$ and the environmental noise $\Epsilonsub_i$.
The dependence of $\Epsilonsub_i$ on $F_i$ is mediated solely through $\vectrv{F}_{<i}$; so, by the DPI, we have $I(\Epsilonsub_i; F_i) \leq I(\vectrv{F}_{<i}; F_i)$.
Importantly, the mutual information bound $I(\vectrv{F}_{<i}; F_i)$ is determined by the sampling process independently from the stateful environment, giving us control on the effect of the environmental noise $\Epsilon$.
The shorthands $\DI_i = \indic_{F_i=f} - \indic_{F_i=g}$ and $N_f = \sum_{i=1}^R \indic_{F_i = f}$ (\Cref{def:delta-shorthands}) are used in the proofs.}
\label{fig:delta-design}
\end{figure*}

\section{Delta Experiments}
\label{sec:delta-experiments}

To characterize the behaviour of an experiment design, our strategy is to derive stochastic upper bounds for $\DE^c$ that vanish with the number of runs $n$.
In the finite case, this sequence of bounds allows us to find the minimum runs required to meet a given sensitivity and confidence level.
For asymptotics, appealing to \Cref{thm:dec-to-de-consistency-rate,thm:uncertain-rule-consistency} helps derive the consistency of the decision-making process.

The experiment designs, presented in later sections, exhibit significant similarities, and to avoid unnecessary repetition, we will discuss their properties in as much generality as possible.
To this end, we now describe the class of delta designs, their common substrate, which constrains the configuration model (\Cref{def:configuration-model}) but leaves the sampling strategy for $\vectrv{F}$ underspecified.

First, we introduce alternative versions of our \emph{ceteris paribus} (``all things being equal'') assumption positing that environmental noise affects the loss additively and uniformly across all programs.

\begin{assumption}[Additive global effect]\label{ass:additive-global-effect}
Let $l(\fc) = \EB{T \given \fc}$ and $T \given \UC \fc$ of the measurement model (\Cref{def:configuration-model}) be generated as follows:
\begin{enumerate}
\item Draw the true run time $\tau \given \fc$.
\item Draw the environmental noise $\varepsilon \given u, c$.
\item Let the observed run time $t = \tau + \varepsilon$.
\end{enumerate}
Here, $\tau\halfnegkern, \varepsilon \in \sR$ are unobserved, their distributions unknown.
\end{assumption}

\begin{assumption}[Multiplicative global effect]\label{ass:multiplicative-global-effect}
Let $l(\fc) = \EB{\ln T \given \fc}$ and $T \given \UC \fc$ of the measurement model (\Cref{def:configuration-model}) be generated as follows:
\begin{enumerate}
\item Draw the true run time $\tau \given \fc$.
\item Draw the environmental noise $\varepsilon \given u, c$.
\item Let the observed run time $t = \tau \varepsilon$.
\end{enumerate}
Here, $\tau\halfnegkern, \varepsilon \in \sR^+$ are unobserved, their distributions unknown.
\end{assumption}

These losses cover the two most common cases \citep{mashey2004war}.
Choosing between them implies taking a side in the long-running arithmetic vs geometric mean argument \citep{mashey2004war}.
Crucially, our approach answers this question not by assuming a distribution for the raw run times $\Tau$ or the noise $\Epsilon$, but by deriving the estimator directly from the functional form of the noise interaction.

Since, in the multiplicative version, the noise is combined with the logarithm of the loss and the distributions of $\Tau$\! and $\Epsilon$\halfnegkern are unknown, we obtain essentially the same model as with the additive assumption.
Hence, in the follows, we simply assume the ceteris paribus holds and do not distinguish between the two cases.
We write $T$\supkern, $\Tau$\supkern, and $\Epsilon$\lshh, even though under \Cref{ass:multiplicative-global-effect} they represent $\ln T$\lsh, $\ln \Tau$, and $\ln \Epsilon$\lshh, respectively, as if we observed $\ln T$ instead of $T$\mathendkern{T}{.}

In practice, we prefer the multiplicative model because it is a good match for common effects that accumulate over the course of a run (e.g. CPU throttling).
For some effects and for some sets of programs, the additive model may be more appropriate, but as we argue in \Cref{sec:the-single-machine-case}, unbiased estimation with both multiplicative and additive effects is possible only with a single hardware and software configuration.

Both assumptions relate the more tractable $T \given \UC \fc$ to $T \given \fc$.
Playing the role of ceteris paribus, they assert sufficiency of control by requiring the effect of the uncontrolled variables to be the same on all programs.
Conversely, all variables whose effects vary by program must be controlled or randomized \citep{curtsinger2013stabilizer}.

\begin{remark}[Inevitability of a ceteris paribus]
The assumption that uncontrolled noise affects all programs uniformly is a necessary consequence of not modelling state dynamics, a key requirement for our method's broad applicability.
To see why, consider the case where a single uncontrolled state influences programs differently.
Not even with infinite data could we determine the better program, as any observed performance difference could be attributed to this state.
This is particularly clear when the unobserved uncontrolled state remains constant throughout the benchmarking process.
\end{remark}

Also note that the global-effect assumption is made for a given set of programs, and the more alike the programs in $\sFC$ the more similar the impact of uncontrolled states on them.
For example, if two programs do proportionally the same amount of CPU and I\narrowslash{}O work, then CPU throttling or concurrent I\narrowslash{}O may influence their performance similarly.

\begin{definition}[Delta design, see \Cref{fig:delta-design}]
\label{def:delta-design}
We say that an experiment design $d$ based on the measurement model is a delta design and denote it with $d \in \sD^\Delta$ if under $d$
\begin{itemize}
\item \Cref{ass:additive-global-effect} or \labelcref{ass:multiplicative-global-effect} holds,
\item $\norm{T}_\infty < \infty$,
\end{itemize}
and, given any $r \in \sN$ and for all $i \in [1, n]$,
\begin{itemize}
\item $F_i \indep (\vectrv{T}_{<i}, C) \given F_{<i}, n$,
\item $\operatorname{supp}(F_i \given n) = \sFC$ and
\item $\LE(\fc) = \TA_{f}$.
\end{itemize}
\end{definition}

The assumption that $\LE(\fc) = \TA_{f}$ (see \Cref{def:average-of-runs}) for all $f \in \sF$ determines $\DE^c = \DE(\fgc)$ in \Cref{def:deltas}, so far only partially specified for being dependent on the choice of $\LE$.

By this definition, the $F_i$ depend only on $\vectrv{F}_{<i}$ and assign non-zero probability to all programs.
This constrains but does not fully specify how $\vectrv{F}$ is sampled.

\begin{figure*}[t]
\centering
\scalebox{0.68}{
\begin{tikzpicture}
  \node[latent, xshift=-0.5cm] (u1) {$U_1$};
  \node[obs, above=1.25cm of u1] (f1) {$F_1$};
  \node[latent, right=0.7 of u1, yshift=+0.5cm] (e1) {$\Epsilonsub_1$};
  \node[latent, right=0.7 of f1, yshift=-0.5cm] (tau1) {$\Tausub_1$};
  \node[det, fill=gray!25, right=0.7 of e1, yshift=0.5cm] (t1) {$T_1$};
  \node[latent, right=0.25 of t1, yshift=-1.0cm] (u2) {$U_2$};
  \edge {u1} {e1}
  \edge {f1} {tau1}
  \edge {e1, tau1} {t1}
  \edge {u1, t1} {u2}
  \draw[->] (f1) to [bend left=50] (u2);

  \node[obs, above=1.25cm of u2] (f2) {$F_2$};
  \node[latent, right=0.7 of u2, yshift=+0.5cm] (e2) {$\Epsilonsub_2$};
  \node[latent, right=0.7 of f2, yshift=-0.5cm] (tau2) {$\Tausub_2$};
  \node[det, fill=gray!25, right=0.7 of e2, yshift=+0.5cm] (t2) {$T_2$};
  \node[latent, right=1 of t2, yshift=-1.0cm] (u3) {$U_3$};
  \edge {u2} {e2}
  \edge {f2} {tau2}
  \edge {e2, tau2} {t2}
  \edge {u2, t2} {u3}
  \draw[->] (f2) to [bend left=40] (u3);

  \node[obs, above=1.25cm of u3] (f3) {$F_3$};
  \node[latent, right=0.7 of u3, yshift=+0.5cm] (e3) {$\Epsilonsub_3$};
  \node[latent, right=0.7 of f3, yshift=-0.5cm] (tau3) {$\Tausub_3$};
  \node[det, fill=gray!25, right=0.7 of e3, yshift=+0.5cm] (t3) {$T_3$};
  \node[latent, right=0.25 of t3, yshift=-1.0cm] (u4) {$U_4$};
  \edge {u3} {e3}
  \edge {f3} {tau3}
  \edge {e3, tau3} {t3}
  \edge {u3, t3} {u4}
  \draw[->] (f3) to [bend left=50] (u4);

  \node[obs, above=1.25cm of u4] (f4) {$F_4$};
  \node[latent, right=0.7 of u4, yshift=+0.5cm] (e4) {$\Epsilonsub_4$};
  \node[latent, right=0.7 of f4, yshift=-0.5cm] (tau4) {$\Tausub_4$};
  \node[det, fill=gray!25, right=0.7 of e4, yshift=+0.5cm] (t4) {$T_4$};
  \node[latent, right=0.25 of t4, yshift=-1.0cm] (u5) {$U_5$};
  \edge {u4} {e4}
  \edge {f4} {tau4}
  \edge {e4, tau4} {t4}
  \edge {u4, t4} {u5}
  \draw[->] (f4) to [bend left=50] (u5);

  \node[obs, above=1.25cm of u5] (f5) {$F_5$};
  \node[latent, right=0.7 of u5, yshift=+0.5cm] (e5) {$\Epsilonsub_5$};
  \node[latent, right=0.7 of f5, yshift=-0.5cm] (tau5) {$\Tausub_5$};
  \node[det, fill=gray!25, right=0.7 of e5, yshift=+0.5cm] (t5) {$T_5$};
  \node[latent, right=0.25 of t5, yshift=-1.0cm] (u6) {$U_6$};
  \edge {u5} {e5}
  \edge {f5} {tau5}
  \edge {e5, tau5} {t5}
  \edge {u5, t5} {u6}
  \draw[->] (f5) to [bend left=50] (u6);

  \node[obs, above=1.25cm of u6] (f6) {$F_6$};
  \node[latent, right=0.7 of u6, yshift=+0.5cm] (e6) {$\Epsilonsub_6$};
  \node[latent, right=0.7 of f6, yshift=-0.5cm] (tau6) {$\Tausub_6$};
  \node[det, fill=gray!25, right=0.7 of e6, yshift=+0.5cm] (t6) {$T_6$};
  \edge {u6} {e6}
  \edge {f6} {tau6}
  \edge {e6, tau6} {t6}

  \plate[inner xsep=0.25cm, inner ysep=0.4cm, yshift=0.4cm] {} {
    (u1) (f1) (tau1) (e1) (t1) (u2) (f2) (tau2) (e2) (t2)} {}

  \plate[inner xsep=0.25cm, inner ysep=0.4cm, yshift=0.4cm] {} {
    (u3) (f3) (tau3) (e3) (t3) (u4) (f4) (tau4) (e4) (t4)
    (u5) (f5) (tau5) (e5) (t5) (u6) (f6) (tau6) (e6) (t6)} {}

  \draw[rounded corners=0.2cm] (f1.north) to ++(0,0.5cm) to ++(1.5cm,0) coordinate (hwtop);
  \draw[->,rounded corners=0.1cm] (hwtop) to ([xshift=-0.53cm]f2.north west |- hwtop) to (f2.north west);

  \draw[rounded corners=0.2cm] (f3.north) to (f3.north |- hwtop) to ++(1.5cm,0) coordinate (hwtop);
  \node (hwright) at ([xshift=-1cm]f6.north west |- hwtop) {};
  \draw (hwtop) to (hwtop -| hwright);

  \draw[rounded corners=0.1cm, ->] (hwtop) to ([xshift=-0.53cm]f4.north west |- hwtop) to (f4.north west);
  \draw[rounded corners=0.2cm] (f4.north) to ([yshift=-0.4pt]f4.north |- hwtop) to ++(1.5cm,0) coordinate (hwtop);
  \draw (hwtop) to (hwtop -| hwright);

  \draw[rounded corners=0.1cm, ->] (hwtop) to ([xshift=-0.53cm]f5.north west |- hwtop) to (f5.north west);
  \draw[line width=0.8pt, rounded corners=0.1cm, shorten >=0.2cm] ([yshift=+0.2pt]hwtop) to ([yshift=+0.2pt,xshift=-0.53cm]f5.north west |- hwtop) to (f5.north west);
  \draw[rounded corners=0.2cm] (f5.north) to ([yshift=-0.4pt]f5.north |- hwtop) to ++(1.5cm,0) coordinate (hwtop);

  \draw[rounded corners=0.1cm, ->] ([yshift=0.4pt]hwtop) to ([yshift=0.4pt,xshift=-0.53cm]f6.north west |- hwtop) to (f6.north west);
  \draw[line width=1.2pt, rounded corners=0.1cm, shorten >=0.2cm] ([yshift=0.4pt]hwtop) to ([yshift=0.4pt,xshift=-0.53cm]f6.north west |- hwtop) to (f6.north west);

  \node[right=2.0 of e6, yshift=-0.3cm] {$\dots$};
  \clip (-0.8,-0.6) rectangle + (24.25,3.8);
  \node[latent, right=1 of t6, yshift=-1.0cm] (u7) {$U_7$};
  \edge {u6, t6} {u7}
  \draw[->] (f6) to [bend left=40] (u7);
\end{tikzpicture}
}
\caption{The blocked experiment design $\beta([2, 4, \dots])$.
The individual blocks are delta designs (\Cref{sec:delta-experiments}) linked via their last and first uncontrolled states.
Run counts of programs within blocks are balanced.
The balancing couples $F_i$ and $F_j$ in the same block, but those in different blocks are independent.}
\label{fig:block-design}
\end{figure*}

\begin{definition}[$\Delta$ shorthands]
\label{def:delta-shorthands}
Let $\fC g \in \sFC$ $f \neq g$ and $c \in \sC$.
We introduce the following shorthands:
\begin{gather*}
\DIvec = \indic_{\vectrv{F}=f} - \indic_{\vectrv{F}=g},\\
N_f = \sum_{i=1}^n \indic_{F_i=f}, \qquad N_g = \sum_{i=1}^n \indic_{F_i=g},\\
\mc = \max\set[\big]{\norm[\big]{T \mid c}_\infty, \norm[\big]{\Tau \mid c}_\infty, \norm[\big]{\Epsilon \mid c}_\infty}.
\end{gather*}
\end{definition}

\begin{remark}[Shift invariance]
Note that $\abs{T}$ is bounded in delta designs, which implies the same for $\abs{\Tau}$ and $\abs{\Epsilon}$.
This can be easily seen by considering $\Tausub_1$ and $\Epsilonsub_1$, which are independent.
Hence, $\cramped{\mc < \infty}$.
Furthermore, $l(\fc)$ is unidentifiable under the additive noise assumption since the shifted $\Tau + \ct{C}$ and $\Epsilon - \ct{C}$ lead to the same observable behaviour for any $\ct{C} \in \sR$.
With this in mind, we could even assume that $\ct{C}$ is chosen to minimize $\max\set{\norm{\Tau + \ct{C} \given c}_\infty, \norm{\Epsilon - \ct{C} \given c}_\infty}$, in which case it can be shown that $\mc = \norm{T \given c}_\infty$.
\end{remark}

\begin{remark}[Independent vs uncorrelated]
If the run times were i.i.d. (see e.g. \citet{georges2007statistically}), their mean would be approximately normal due to the CLT.
Subject to the validity of this approximation, a frequentist CI based on $\DE^c$ would then be the same as the Bayesian credible interval for an appropriately chosen prior.
However, the $\DInorm_i T_i$ terms in $\DE^c$, are at best pairwise uncorrelated, not mutually independent, so the CLT does not apply \citep{svante1988some}.
\end{remark}

With the framework of delta experiments in place, we now present a block-based experiment design that instantiates this model.

\section{Blocked Experiments}
\label{sec:blocked-experiments}

A natural approach is to collect an equal number of measurements for all programs.
Under independent sampling, this is impractical because the difference in run counts $N_f - N_g$ behaves as a simple random walk, for which the expected return time to zero is infinite [Feller 1968].
Instead, the standard solution relies on sampling without replacement.

\begin{definition}[Blocking]
\label{def:blocking}
A \emph{blocking} $\vect{b}$ is a sequence of natural numbers that are divisible by $\ct{F}$ such that $1 \leq \len{\vect{b}} < \infty$.
We indicate block-relative indexing with a superscript.
In the context of a blocking $\vect{b}$, we define $\vect{x}^k = \vect{x}_{p:q}$, where $\vect{x}$ is a sequence, $p = \sum_{i=1}^{k-1} b_i + 1$ and $q = \sum_{i=1}^k b_i$.
\end{definition}

Next, we define blocked experiments, which have blockwise independent $\vectrv{F}$ and equal run counts within each block for all programs.

\begin{definition}[Block design, see \Cref{fig:block-design}]
\label{def:block-design}
A blocked experiment design for a given blocking $\vect{b}$ is denoted by $\beta(\vect{b})$.
In the single-block case, $\vect{b} = [b]$ and $\beta([b])$ (abbreviated as $\beta(b)$) is a delta design where $F_i$ are sampled without replacement from a multiset of $\frac{b}{\ct{F}}$ elements of each $f \in \sFP$
In the multi-block case, $\beta(\vect{b})$ is the ``concatenation'' of the single-block experiment designs corresponding to the elements of $\vect{b}$, linked via their last and first uncontrolled states.
\end{definition}

In the rest of this section, we establish the upper consistency of blocked experiments by bounding the first two moments of $\DE^c(\fg)$ and appealing to Cantelli's inequality (the one-sided Chebyshev inequality).
This is sufficient to prove that the asymptotic rate is optimal, but the resulting bound is loose.
For the finite case, we thus approximate the worst-case behaviour of the noise with dynamic programming, which allows direct comparison with simple randomized experiments, presented later in \Cref{sec:alpha-experiments}.

\subsection{Single-Block Experiments}
\label{sec:single-block-experiments}

Here, we state the theorems about the contraction rates of the variance and the bias of the $\DE$ estimator in single-block designs.
For the proofs, see \Cref{sec:proofs-for-single-block-experiments}.

\begin{restatable}{theorem}{thmbetavariancerate}
\label{thm:beta-variance-rate}
For the $\beta(n)$ design, $\Var[\DE^c]$ is $\bigO(n^\negone)$.
\end{restatable}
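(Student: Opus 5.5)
The plan is to exploit the fact that in $\beta(n)$ the run counts are deterministic, $N_f = N_g = m := n/\ct{F}$, so that $\DE^c = \frac{1}{m}\sum_{i=1}^n \DI_i T_i$. It then suffices to show $\E\bigl[(\sum_i \DI_i T_i - m\,\DT^c)^2\bigr] = \bigO(n)$. Under the ceteris paribus assumption (\Cref{ass:additive-global-effect} or \labelcref{ass:multiplicative-global-effect}) we have $T_i = \Tausub_i + \Epsilonsub_i$. By $\Var[A+B] \leq 2\Var[A] + 2\Var[B]$ we may therefore treat the two sums $A = \sum_i \DI_i \Tausub_i$ and $B = \sum_i \DI_i \Epsilonsub_i$ separately.

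\textbf{The $\Tau$ part.} Conditionally on $\vectrv{F}$, the $\Tausub_i$ are independent draws from $\Tau \given F_i, c$. Hence $A$ is a difference of two sums of $m$ i.i.d.\ variables whose law does not depend on the ordering of $\vectrv{F}$. So $\E[A] = m\,\DT^c$ and $\Var[A] \leq 2m\,\mc^2 = \bigO(n)$.

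\textbf{The $\Epsilon$ part.} Here the temporal dependence enters, and I will use a martingale decomposition with respect to the filtration $\mathcal{H}_{j}$ generated by $\vectrv{F}_{<j}$, $\vectrv{T}_{<j}$ and $\vectrv{U}_{\leq j}$.
\begin{itemize}
\item By the delta-design condition $F_j \indep (\vectrv{T}_{<j}, C) \given \vectrv{F}_{<j}$, together with the structure of the configuration model (in which $U_j$ is generated from past variables only), we get $\E[\DI_j \given \mathcal{H}_j] = \E[\DI_j \given \vectrv{F}_{<j}] =: \mu_j$.
\item Sampling without replacement gives $\mu_j = -S_{j-1}/(n-j+1)$ explicitly, where $S_{k} = \sum_{i\leq k}\DI_i$.
\end{itemize}
Write $\DI_j = D_j + \mu_j$. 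Since $\Epsilonsub_j$ is $\mathcal{H}_j$-measurable given the state, the terms $D_j \Epsilonsub_j$ are martingale differences. Their sum therefore has second moment $\sum_j \E[D_j^2 \Epsilonsub_j^2] \leq 4n\,\mc^2$.

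The remaining drift term is where the work lies, and I expect it to be the main obstacle. Minkowski's inequality gives
\begin{align*}
\Bigl\| \sum_j \mu_j \Epsilonsub_j \Bigr\|_2 \leq \mc \sum_{j=1}^n \frac{\|S_{j-1}\|_2}{n-j+1}.
\end{align*}
The partial sum $S_{k}$ is the sum of a without-replacement sample of size $k$ from a population containing $m$ entries $+1$, $m$ entries $-1$ and the rest zeros. Its mean is $0$ and its variance is at most $k(n-k)\frac{2m}{n(n-1)} \leq 2k(n-k)/n$. Hence $\|S_{j-1}\|_2 \leq \sqrt{2(n-j+1)}$ up to constants, and the sum is bounded by $\sqrt{2}\,\mc \sum_{k=1}^{n} k^{-1/2} \leq 2\sqrt{2n}\,\mc$.

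The drift term therefore also has second moment $\bigO(n)$. The crucial point is the factor $(n-j+1)$ from the hypergeometric variance, which cancels the growth of $1/(n-j+1)$ near the end of the block. Combining the three pieces gives $\Var[\sum_i \DI_i T_i] = \bigO(n)$, and dividing by $m^2 = n^2/\ct{F}^2$ yields $\Var[\DE^c] = \bigO(n^{\negone})$. The constant depends only on $\mc$ and $\ct{F}$, and $\mc < \infty$ by the shift-invariance remark.
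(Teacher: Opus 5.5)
Your proof is correct. For the $\Tau$ part it matches the paper's law-of-total-variance step, but for the environmental part it takes a genuinely different route.

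\textbf{The paper's route.} The paper bounds $\Var[\DIvec \cdot \Epsilonvec]$ by first identifying the worst-case noise:
\begin{itemize}
\item optimal noise is bang--bang (\Cref{thm:optimality-of-bang-bang});
\item a greedy sign-following adversary is optimal, by a backward induction that uses the no-flip property and exchangeability (\Cref{thm:greedy-zero-crossing,thm:worst-case-noise});
\item the greedy $\E[\Sigma_n^2]$ is then shown to be $\bigO(n)$ by counting zeros of $\Sigma_i$ and of the tied-down walk $L_i$. This step relies on a stochastic-dominance argument and on cited results for returns of random walks (\Cref{thm:beta-noise-var} and \Cref{thm:n-returns-in-tied-down-random-walk}).
\end{itemize}

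\textbf{Your route.} You never construct the adversary. The Doob decomposition $\DI_j = D_j + \mu_j$, taken with respect to the design's own randomization, splits $\sum_j \DI_j \Epsilonsub_j$ into two pieces:
\begin{itemize}
\item orthogonal martingale increments, whose second moment is at most $4n\mc^2$ however the noise is chosen;
\item a predictable drift, which you control in $L^2$ by Minkowski and the hypergeometric variance. This is the same computation the paper does with first moments for the bias in \Cref{thm:beta-bias-rate}.
\end{itemize}

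\textbf{What each buys.} Your argument is shorter and self-contained, needing no external random-walk lemmas. It gives explicit constants depending only on $\mc$ and $\ct{F}$. It also handles bias and variance with one estimate, since the bias is exactly $\frac{1}{m}\E[\sum_j \mu_j \Epsilonsub_j]$. Finally, it carries over directly to other designs where $\mu_j$ is explicit, such as multi-block ones. The paper's route, in exchange, exhibits an explicit worst-case noise process. That is in the spirit of its dynamic-programming analysis and could in principle give the exact worst-case variance rather than only its order.

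\textbf{One detail to tighten.} Under \Cref{ass:additive-global-effect}, $\Epsilonsub_j$ is drawn from $\Epsilon \mid U_j, c$ and need not be a function of $U_j$. So it is not $\mathcal{H}_j$-measurable as your filtration is written. Add $\Epsilonsub_{\leq j}$ to $\mathcal{H}_j$. The identity $\E[\DI_j \mid \mathcal{H}_j] = \mu_j$ still holds, because $F_j$ is drawn from a kernel of the observables $(\vectrv{F}_{<j}, \vectrv{T}_{<j})$ alone and $\Epsilonsub_j$ from a kernel of $(U_j, c)$ alone. The martingale-difference property of $D_j \Epsilonsub_j$ is therefore unaffected.
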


\vspace{-\baselineskip}

\begin{restatable}{theorem}{thmbetabiasrate}
\label{thm:beta-bias-rate}
For the $\beta(n)$ design, $\E[\DE^c - \delta^c] = \bigO(n^{\negone/2})$.
\end{restatable}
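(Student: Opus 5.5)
In the $\beta(n)$ design the run counts are deterministic: $N_f = N_g = m := n/\ct{F}$. I would therefore write the estimator exactly as a weighted sum, $\DE^c = \frac{1}{m}\sum_{i=1}^n \DI_i T_i$. Under the ceteris paribus assumption (\Cref{ass:additive-global-effect} or \labelcref{ass:multiplicative-global-effect}) we have $T_i = \Tau_i + \Epsilon_i$, so $\DE^c$ splits into a "true run time" part $\frac1m\sum_i \DI_i\Tau_i$ and a "noise" part $\frac1m\sum_i \DI_i\Epsilon_i$.

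\textbf{The true-run-time part is unbiased.} $\Tau_i$ depends only on $F_i$ (and $c$). Since exactly $m$ runs of $f$ and $m$ runs of $g$ occur, $\E[\sum_i \DI_i\Tau_i] = m\,(l(\fc) - l(\gc)) = m\,\delta^c$. The whole bias is therefore $\frac1m \E[\sum_i \DI_i \Epsilon_i]$, and I would bound it term by term.

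\textbf{Bounding the noise part.} The key fact is that $\Epsilon_i$ depends on $U_i$, which depends only on the past $(\vectrv{F}_{<i},\vectrv{T}_{<i},\vectrv{U}_{<i})$. Given the past, $F_i$ is a uniform draw from the multiset of not-yet-run programs. So $F_i$ is conditionally independent of $\Epsilon_i$ given that past, which yields
\begin{align*}
\E[\DI_i \Epsilon_i] = \E\bigl[\Epsilon_i\, \E[\DI_i \given \vectrv{F}_{<i}]\bigr], \qquad \E[\DI_i \given \vectrv{F}_{<i}] = \frac{-D_{i-1}}{n-i+1},
\end{align*}
where $D_k = \sum_{j\le k}\DI_j$. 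The numerator is the difference of remaining counts of $f$ and $g$, which equals $-D_{i-1}$ because both start at $m$. Using $\abs{\Epsilon_i}\le \mc$ from \Cref{def:delta-shorthands}, this gives
\begin{align*}
\abs{\E[\DE^c - \delta^c]} \le \frac{\mc}{m}\sum_{i=1}^{n} \frac{\E\abs{D_{i-1}}}{n-i+1}.
\end{align*}

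\textbf{Moments of $D_k$.} $D_k$ is the sum of a sample of size $k$ drawn without replacement from a population of $n$ values: $m$ ones, $m$ minus-ones and zeros elsewhere. This population has mean $0$ and variance $2m/n$. The finite-population correction gives
\begin{align*}
\E[D_k]=0, \qquad \Var[D_k] = k\,\frac{2m}{n}\,\frac{n-k}{n-1}.
\end{align*}
By Jensen, $\E\abs{D_k}\le\sqrt{\Var[D_k]} \lesssim \sqrt{2m\,k(n-k)}/n$. Substituting with $k=i-1$, the sum becomes of order
\begin{align*}
\frac{\sqrt{2m}}{n}\sum_{k=0}^{n-1}\sqrt{\frac{k}{n-k}} \approx \sqrt{2m}\int_0^1 \sqrt{\frac{x}{1-x}}\,dx = \frac{\pi}{2}\sqrt{2m}.
\end{align*}
The singularity at $x=1$ is integrable, and the Riemann-sum comparison can be done with an elementary monotonicity bound. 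Hence the bias is at most $\mc\,\bigO(\sqrt{m})/m = \bigO(m^{\negone/2}) = \bigO(n^{\negone/2})$, since $\ct{F}$ is fixed.

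\textbf{Expected obstacle.} The main care point is justifying the conditional-independence step rigorously inside the general model of \Cref{def:configuration-model}, where $U_i$ may depend on all earlier $T$ and $F$. The step works because the delta design requires $F_i \indep (\vectrv{T}_{<i},C)\given \vectrv{F}_{<i}$, so the sampling of $F_i$ never looks at the environment. A secondary care point is the tail of the sum near $i=n$: there $n-i+1$ is small, but $\E\abs{D_{i-1}}$ also shrinks, and the integrable $(1-x)^{-1/2}$ behaviour handles it.
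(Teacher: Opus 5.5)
Your proposal is correct and follows essentially the same route as the paper: you reduce the bias to $\frac{\ct{F}}{n}\sum_i \E[\DI_i \Epsilonsub_i]$, use the design's independence of $F_i$ from the environment to replace $\DI_i$ by its conditional mean $-L_{i-1}/(n-i+1)$ (your $D_{i-1}$), bound $\E\abs{L_{i-1}}$ via Jensen by the hypergeometric (finite-population) standard deviation, and control $\sum_k \sqrt{k/(n-k)}$ by the integral $n\pi/2$. The differences are cosmetic: you bound $\abs{\Epsilonsub_i}\le\mc$ directly instead of exhibiting the maximizing noise $-\sign(L_{i-1})$, and your population variance $2m/n$ covers general $\ct{F}$ directly, without the paper's implicit reduction to $\ct{F}=2$.
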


\subsection{Multi-Block Experiments}
\label{sec:multi-block-experiments}

\begin{restatable}[Multi-block bounds]{proposition}{thmmultiblockbounds}
\label{thm:multi-block-bounds}
Let $\vect{b}$ be a blocking and $c \in \sC$.
Assume that the per-block biases and standard deviations are bounded as
\begin{gather*}
\abs[\big]{\E[\DE^c\ \given \beta(b_k)] - \DT^c} \leq B_k,
\qquad
\Std[\DE^c \given \beta(b_k)] \leq S_k
\end{gather*}
for all $k \in [1, \len{\vect{b}}]$.
Then, these bounds combine linearly as
\begin{gather*}
\abs[\big]{\E[\DE^c] - \DT^c}
\leq \smash[t]{\sum_{k=1}^{\len{\vect{b}}}} \frac{b_k}{\sum_i b_i} B_k,
\qquad
\Std[\DE^c]
\leq \smash{\sum_{k=1}^{\len{\vect{b}}} \frac{b_k}{\sum_i b_i} S_k}.
\end{gather*}
\end{restatable}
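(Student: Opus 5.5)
The plan is to write the multi-block estimator $\DE^c$ as a convex combination of the per-block estimators, and then bound bias by linearity of expectation and standard deviation by Minkowski's inequality. In block $k$ of $\beta(\vect{b})$, every program is run exactly $b_k/\ct{F}$ times, so $N_f = N_g = \sum_i b_i/\ct{F}$ deterministically. Hence the run average $\TA_f$ over the whole experiment equals $\sum_k \frac{b_k}{\sum_i b_i}\TA_f^k$, where $\TA_f^k$ is the average over runs within block $k$. Writing $\DE^{c,k} = \TA_f^k - \TA_g^k$ for the block-$k$ estimator, this gives
\begin{align*}
\DE^c = \sum_{k=1}^{\len{\vect{b}}} w_k \DE^{c,k}, \qquad w_k = \frac{b_k}{\sum_i b_i},
\end{align*}
with $w_k \geq 0$ and $\sum_k w_k = 1$. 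This step uses only the equal run counts guaranteed by \Cref{def:block-design}.

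For the bias, linearity of expectation and $\sum_k w_k = 1$ give $\E[\DE^c] - \DT^c = \sum_k w_k(\E[\DE^{c,k}] - \DT^c)$, and the triangle inequality then gives the first bound. One subtlety needs care. The hypothesis bounds $\E[\DE^c \given \beta(b_k)]$, which is the bias of a stand-alone single-block experiment. Inside the concatenation, however, block $k$ starts from the initial state $U$ that block $k-1$ leaves behind, not from the environment's original $U_1$. I will argue that this is harmless. The configuration model lets $u_1$ be supplied arbitrarily by the execution environment. Since the programs in block $k$ are sampled independently of all earlier blocks (\Cref{def:block-design}), block $k$ conditioned on its initial state is exactly a $\beta(b_k)$ experiment with that initial state. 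So I read $B_k$ and $S_k$ as bounds that hold uniformly over the initial uncontrolled state. Under that reading, the conditional bias given the block's starting state is at most $B_k$ in absolute value, and averaging over the starting state preserves the bound.

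For the standard deviation, I use the fact that $\Std$ is the $L^2$ norm of the centered variable, hence a seminorm. Minkowski's inequality applied to $\DE^c - \E\DE^c = \sum_k w_k(\DE^{c,k} - \E\DE^{c,k})$ gives $\Std[\DE^c] \leq \sum_k w_k \Std[\DE^{c,k}]$ without needing any independence or covariance control between blocks. This is why the combination is linear rather than root-sum-of-squares: blocks are coupled through the shared state, so correlations could be as large as perfect, and the triangle inequality is the right tool. Equivalently, one can bound each covariance term by $\abs{\Cov[X,Y]} \leq \Std[X]\Std[Y]$ and recognise the resulting square $(\sum_k w_k S_k)^2$.

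I expect the main obstacle to be the step from the hypothesis $\Std[\DE^c \given \beta(b_k)] \leq S_k$ to a bound on the unconditional $\Std[\DE^{c,k}]$ inside the concatenation, when the block's starting state is random. By the law of total variance, $\Var[\DE^{c,k}] = \E[\Var[\DE^{c,k} \given U^k_1]] + \Var[\E[\DE^{c,k} \given U^k_1]]$. The first term is at most $S_k^2$ by the uniform reading of the hypothesis. The second term reflects how the conditional bias varies with the starting state, and it could add up to order $B_k^2$. I would handle this by stating the hypothesis as a bound on $\Std$ marginalised over whatever initial-state distribution the preceding blocks induce. Since the paper allows $U_1$ to be arbitrary, the per-block analysis already covers any such distribution, and this is consistent with how $\beta(b_k)$ is analysed in \Cref{thm:beta-variance-rate}. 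If that reading is not accepted, the fallback is to replace $S_k$ by $S_k + B_k$, which still combines linearly by the same Minkowski argument.
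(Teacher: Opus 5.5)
Your proposal is correct and matches the paper's proof, which likewise uses balancedness to write $\DE^c = \sum_k \frac{b_k}{\sum_i b_i}\DE^k$, with each $\DE^k$ a single-block estimator that differs only in its initial uncontrolled state. The paper then treats the blocks as pseudo-configurations under a single-part partition and invokes \Cref{thm:dec-to-bounded-de}, whose proof is exactly your argument: the bound holds uniformly over the initial state, which with the triangle inequality handles the bias, and the Cauchy--Schwarz (equivalently Minkowski) bound handles the standard deviation, with the same implicit reading of $S_k$ as valid for any initial-state distribution.
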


\vspace{-\baselineskip}

See \Cref{sec:proof-for-multi-block-experiments} for the proof.

\begin{remark}[Incomplete $\beta$-experiments]
A claim analogous to the above is true when the last block is incomplete.
Blocks that are yet to begin or do not have runs for both $f$ and $g$ are to be ignored.
\end{remark}

\subsection{Combining Per-Configuration Bounds}
\label{sec:combining-per-configuration-bounds}

With the bounds on the worst-case bias and variance of the per-configuration estimator $\DE^c$, we construct similar bounds for the suitewise average estimator $\DE$.
Then, we derive stochastic upper bounds for $\DE - \DT$.

\begin{restatable}[$\DE^c$ bounds to $\DE$]{proposition}{thmdectoboundedde}
\label{thm:dec-to-bounded-de}
Assume that $\abs{\E[\DE^c] - \DT^c} \leq b^c$ and $\Std[\DE^c] \leq s^c$ for all $c \in \sC$.
Then, under the benchmark suite model with partition $\sS$ of $\sC$,
\begin{gather*}
\abs[\big]{\EB{\DE} - \DT} \leq \sum_{c \in \sC} P(c) b^c\!,\\
\VarB{\DE} \leq \smash{\sum_{\sS' \in \sS} \PP[\Big]{\sum_{c \in \sS'} P(c) s^c}^2}\!.
\end{gather*}
\end{restatable}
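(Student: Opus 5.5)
The plan is to work directly from the definitions: $\DE = \sum_{c} P(c)\DE^c$ and $\DT = \sum_c P(c)\DT^c$. Both are linear combinations with fixed, nonnegative weights $P(c)$ summing to one. This holds because $\LE(f) = \sum_c P(c)\LE(\fc)$ and $l(f) = \E_C l(\fc)$, so $\DE - \DT = \sum_c P(c)(\DE^c - \DT^c)$.

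\textbf{Bias bound.} By linearity of expectation in the extended probability space of the benchmark suite model, $\E[\DE] - \DT = \sum_c P(c)(\E[\DE^c] - \DT^c)$. Note that $\E[\DE^c]$ computed in the suite model coincides with its value in the per-configuration model, because the marginal of each configuration's measurements is that of \Cref{def:configuration-model}. The triangle inequality then gives $\abs{\E[\DE]-\DT} \leq \sum_c P(c) b^c$ immediately.

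\textbf{Variance bound.} Group the sum by the partition $\sS$ and write $\DE = \sum_{\sS' \in \sS} X_{\sS'}$, where $X_{\sS'} = \sum_{c \in \sS'} P(c)\DE^c$. By \Cref{def:benchmark-suite-model}, configurations in different parts have independent initial uncontrolled states. The first step is to argue that this makes the whole measurement processes, and hence the $X_{\sS'}$, mutually independent across parts. Each configuration's sequence $(\vectrv{F},\vectrv{T},\vectrv{U})$ is generated from its own initial state through per-configuration kernels. For a delta design, $\vectrv{F}$ does not depend on $C$ or on $\vectrv{T}$, so no cross-configuration coupling arises. Independence then gives $\Var[\DE] = \sum_{\sS'} \Var[X_{\sS'}]$. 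Within a part, dependencies are arbitrary, so we use the bound $\Std[\sum_c a_c Y_c] \leq \sum_c \abs{a_c}\Std[Y_c]$. This follows from Cauchy--Schwarz, $\Cov(Y_c,Y_{c'}) \leq \Std[Y_c]\Std[Y_{c'}]$, or equivalently from Minkowski's inequality in $L^2$ applied to the centred variables. It yields $\Var[X_{\sS'}] \leq (\sum_{c\in\sS'} P(c)s^c)^2$, and summing over parts gives the claim.

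\textbf{Main obstacle.} The only step needing care is justifying cross-part independence of the $\DE^c$. Independence of the initial states is not by itself independence of the full trajectories. It does become so if the per-configuration randomness (program sampling, $T$ and $U^{\mathrm{next}}$ draws) is generated by independent sources for each configuration. I would make explicit that this is how the product extension of \Cref{def:benchmark-suite-model} is to be read. The sums are assumed to exist, as elsewhere in the paper; if $\sC$ is countably infinite, the Minkowski step still applies to the countable sum, by monotone convergence in $L^2$.
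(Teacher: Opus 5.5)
Your proposal is correct and follows essentially the same route as the paper: linearity and the triangle inequality for the bias, then independence across parts together with the Cauchy--Schwarz (Minkowski) bound within each part for the variance. The one minor difference is in the bias step. The paper additionally conditions on $U^c_1$ and argues that $b^c$ must hold for every initial-state distribution, point masses included, so it does not rely on your claim that the suite model preserves each configuration's marginal. The cross-part independence of whole trajectories, which you rightly flag, the paper simply reads off the ``otherwise independent'' clause of \Cref{def:benchmark-suite-model}.
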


See \Cref{sec:proofs-for-combining-per-configuration-bounds} for the proof of this and the following proposition.

\begin{restatable}[$\beta$ moment-based noise quantile]{proposition}{thmbetamomentbasedquantile}
\label{thm:beta-moment-based-quantile}
Let $b(n)$ and $s(n)$ denote the respective suitewise bounds on the absolute bias and standard deviation of $\DE$ for a blocked design $\beta(n)$, where $b(n) = \bigO(n^{\negone/2})$ and $s(n) = \bigO(n^{\negone/2})$.
For any confidence level $p \in (0, 1)$, the adversarial noise quantile $\smash[b]{Q_p^{\mathrm{adv}}}$ is bounded as
\begin{align*}
Q_p^{\mathrm{adv}} \leq b(n) + s(n) \smash[t]{\sqrt{\frac{p}{1 - p}}},
\end{align*}
which contracts at a rate of $\bigO(n^{\negone/2})$.
\end{restatable}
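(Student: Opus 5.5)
The plan is a direct application of Cantelli's inequality to the noise $Z = \DE - \DT$, followed by a check that the result holds uniformly over the adversarial envelope. Fix any admissible noise model in $\mathcal{Z}$ and any pair $\fg$. Write $Z = (\DE - \EB{\DE}) + (\EB{\DE} - \DT)$. By hypothesis the second term is bounded in absolute value by $b(n)$. The first term is centred with standard deviation at most $s(n)$; these suitewise bounds come from \Cref{thm:dec-to-bounded-de}, fed by the per-configuration rates of \Cref{thm:beta-variance-rate,thm:beta-bias-rate}.

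Cantelli's inequality states that for a centred rv $X$ with $\Var[X] \leq s^2$ and any $x > 0$,
\begin{align*}
P(X \geq x) \leq \frac{s^2}{s^2 + x^2}.
\end{align*}
Set $x = s(n)\sqrt{p/(1-p)}$. Then
\begin{align*}
\frac{s^2}{s^2 + x^2} = \frac{1}{1 + p/(1-p)} = 1 - p,
\end{align*}
so $P(\DE - \EB{\DE} < x) \geq p$. It follows that $P(Z < b(n) + x) \geq p$. Hence $b(n) + x$ lies in the set $\{x' \mid P(Z \leq x') \geq p\}$, and its infimum $Q_p(\fg)$ is at most $b(n) + s(n)\sqrt{p/(1-p)}$.

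To pass to the adversarial quantile, observe that the right-hand side does not depend on the particular $Z \in \mathcal{Z}$ or on the pair $\fg$. This is because $b(n)$ and $s(n)$ are worst-case (suitewise, uniform) bounds. Taking the supremum over $\mathcal{Z}$ and over pairs therefore preserves the inequality. Finally, for fixed $p$ the factor $\sqrt{p/(1-p)}$ is a constant, so the bound is $\bigO(n^{\negone/2}) + \bigO(n^{\negone/2}) = \bigO(n^{\negone/2})$.

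I expect the main obstacle to be one of bookkeeping rather than calculation. The bias and standard-deviation bounds must genuinely hold uniformly for every noise model in the envelope $\mathcal{Z}$, and not just for the true one; the $\bigO$ constants in \Cref{thm:beta-variance-rate,thm:beta-bias-rate} depend only on $\mc$ and the design, which is what makes this uniformity hold. Two degenerate cases also need handling. If $s(n) = 0$, Cantelli is not needed, since $Z \leq b(n)$ almost surely. If $p \to 1$, the factor $\sqrt{p/(1-p)}$ diverges, which is consistent with the statement's restriction to $p \in (0,1)$.
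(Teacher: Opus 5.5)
Your proposal is correct and essentially matches the paper's proof. The paper invokes its Cantelli-plus-bias-shift lemma (\Cref{thm:bounded-estimator-to-upper-bound}) on $Z = \DE - \DT$, which is exactly the computation you do inline, and then reads off the $\bigO(n^{-1/2})$ rate from the hypotheses on $b(n)$ and $s(n)$. Your explicit treatment of uniformity over the envelope $\mathcal{Z}$ and of the degenerate case $s(n)=0$ are small improvements on points the paper leaves implicit; in that case the paper's lemma would set $k=0$, outside the range where Cantelli's inequality applies.
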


Given the decision rule parameters $\gamma$ and $\gammadecprime$ and a desired positive sensitivity $\omega$, we can determine the smallest number of runs $n$ by analytically or numerically solving the separation threshold equation
\begin{align*}
-S_{\gamma}(\fg) = Q_{\gamma'}^{\textrm{adv}} + Q_{\gammadecprime}^{\textrm{adv}} \leq \omega.
\end{align*}

\begin{remark}[Optimal $\beta(n)$ separation rate]
Even though the $\beta(n)$ bias rate $\bigO(n^{\negone/2})$ proven above is worse than the $\bigO(n^\negone)$ with i.i.d. noise, the standard deviation rate is $\bigO(n^{\negone/2})$ in both cases.
Consequently, the overall rate for the separation threshold, which is a positive linear combination of the standard deviation and the bias, is optimal.
\end{remark}

\begin{remark}[One-sided variance]
Although the variance of $\DE^c$ vanishes asymptotically, the variance of its constituent $\LE(\fc)$ alone may not.
This is evident when the noise $\Epsilonsub_i$ are perfectly correlated, preventing it from averaging out.
This underlines the necessity of focussing on deltas instead of absolute performance.
\end{remark}

We have bounded the bias and the standard deviation of a block design for a given blocking, which is sufficient for making decisions under uncertaintly (see \Cref{sec:decisions-under-uncertainty} and \Cref{sec:looking-ahead}) for a given number of runs $n$, but to prove consistency and to find the number of runs required for a desired level of sensitivity at a minimum effect size, we need computable bounds for all $n$.
However, there are many possible blockings for a given $n$, so we must first decide how the block sizes shall evolve as the overall size increases.
We first study the asymptotics of constant block size designs and the issues inherent therein, then we move on to explore the effect of growing block sizes.

\subsection{Asymptotics with Constant Block Size}
\label{sec:asymptotics-with-constant-block-size}

In this section, we assume that the sizes of all blocks are $b \in \sN$, and instead of $\beta([b, \dots, b])$ we write $\beta(n, b)$, where $b$ divides $n$ to distinguish this from the general case.

A quick inspection of the bounds in \Cref{thm:multi-block-bounds} reveals that they stay constant as the number of blocks increases.
This is because the sums are of constant terms, determined by the block sizes, and $\len{\vect{b}} / \sum_i b_i = b^\negone$ is constant, too.
We now demonstrate that the problem does not lie within the bounds; rather, constant block size designs are biased, even asymptotically.

\begin{example}[$\beta(n, b)$ bias via $\cramped{F_i \shortrightarrow U_{i+1}}$]
Let $\sF= \set{\fg}$, $b = 2$.
Under $\beta(n, b)$, in any block $k$, the first run $F^k_{\halfnegkern1}$ is equally likely to be preceeded by $f$ and $g$.
However, if the second run $F^k_{\halfnegkern2} = f$, then $F^k_{\halfnegkern1} = g$, so runs of $f$ are more likely overall to be after a run of $g$ than after an $\fP$
Thus, with the choices $\Epsilonsub_i = U_i$, $U_{i+1} = \indic_{F_i=f}$ and $U_1=0$, the bias of $\TA_{f}$ varies by program, which renders $\DE^c$ biased.
\end{example}

All blocks have the same bias for all $n$, so $\beta(n, b)$ is asymptotically biased.
This bias is a potentially blocking issue, and we aim to understand existing practice by reverse engineering the assumptions that may allow blocking to work.
We discuss two prominent examples of constant block size designs, paired benchmarking and random interleaving, among related works in \Cref{sec:paired-benchmarking} and \Cref{sec:google-benchmark}.

The above example relied on propagating information via $\cramped{F_i \shortrightarrow U_{i+1}}$, but as the next one shows, the bias does not go away if we remove this connection.

\begin{example}[$\beta(n, b)$ bias via $\cramped{F_i \shortrightarrow T_i \shortrightarrow U_{i+1}}$]
Consider a set $\sF$ consisting of three programs with deterministic run times: $f$ and $g$ with $1\si{s}$, and $h$ with $2\si{s}$.
Suppose that $\Epsilonsub_i = 1$ if $U_i$ is even, and $0$ otherwise.
Let $u_1=0$.
Notice that all blocks start with an even $U_i$ because they take exactly $1\si{s}+1\si{s}+2\si{s}=4\si{s}$.
Enumerating the $6$ possible orderings of programs within blocks, simple counting reveals that $f$ and $g$ are run in an even state with probability $1/2$ but $h$ with $2/3$.
Hence, $h$ experiences higher bias through $\Epsilonsub_i$ than the other two.
\end{example}

It is also easy to see that the variance may not vanish.
In our first example, consider extending the uncontrolled state to a pair, where the new component retains the original $u_1$ indefinitely, and using this $u_1$ to modulate the bias.
In the second, randomize the parity of $u_1$.

So, with our current assumptions, constant block size designs can exhibit program bias even asymptotically.
Assuming that all programs affect the uncontrolled state the same way would help eliminate the bias, but one only needs to consider how CPU temperature is affected by run time to see that this is hardly a tenable proposition.
Alternatively, we could require that the uncontrolled state disperses, and the current state $U_i$ becomes indistinguishable from the next, $U_{i+1} \given F_i = f$, which would prohibit any short-term effect on performance.

Similarly, to guarantee that the variance vanishes, the inter-block dependencies must be constrained.
But the dependencies can propagate through the uncontrolled state, so we are back to the original problem discussed in \Cref{sec:absolute-performance}: lacking detailed knowledge of the dynamics of the environment, all such assumptions are suspect.
Therefore, to achieve consistent estimation in stateful environments, alternatives to constant block size designs are necessary.

\subsubsection{Paired Benchmarking}
\label{sec:paired-benchmarking}

Acknowledging the problems posed by stateful environments, \citet{bazhenov2023} proposes paired benchmarking of programs to reduce variance.
Just as paired testing is a kind of blocking in statistics, their proposed method is a special case of $\beta$-experiments with $b=\ct{F}=2$.
They demonstrate variance reduction empirically but do not address the bias issue or the question of consistency.
In our analysis, we demonstrated the potential for bias and provided sufficient conditions for asymptotic unbiasedness and proved consistency under some conditions.
To the extent that these conditions are less realistic than those of $\delta$-experiments, we prefer the latter, although we cannot rule out that better assumptions could change this.

In the context of A\narrowslash{}B testing, paired benchmarking is similar to the time-grouped randomization of \citet{wu2022non}, who propose assigning every pair of consecutive customers to treatment and control in randomized order to reduce variance.
In their setting, i.i.d. assumptions guarantee unbiasedness.
In contrast, our starting point is that stateful environments rule out independence.

\subsubsection{Google Benchmark}
\label{sec:google-benchmark}

The \citet{googlebenchmark} has the ability to run and measure a number of functions with random interleaving\footnotemark, with claims of up to $40$\% reduction in run-to-run variance but no discussion of bias apart from random interleaving approximating a ``real workload''.

We may use this library to benchmark alternative implementations of functions against each other, which amounts to a single block experiment $\beta(n)$\footnote{Actually, in Google Benchmark, the order is not fully random as a number of runs of the same function are grouped into chunks to avoid measurement problems due to tiny durations, and these chunks are shuffled, but that still fits within $\beta$-experiments.}.
Our reservations about the validity of assumptions supporting asymptotic unbiasedness apply.

\footnotetext{\url{https://github.com/google/benchmark/issues/1051}}

On the other hand, if the functions being benchmarked serve different purposes, we are more likely to be interested in absolute rather than relative performance figures, perhaps to be compared to the previous version of the program that has these functions.
Needless to say, comparing results from different experiments reintroduces all the problems that uncontrolled state poses.

Still, a large number of unrelated functions may help with dispersion of uncontrolled state, and relative benchmarking of alternative implementations randomly interleaved with many unmeasured, unrelated functions may empirically work better than with a more homogenous approach, such as inserting random waits.

\subsection{Asymptotics with Growing Blocks}

As shown in \Cref{sec:asymptotics-with-constant-block-size}, constant block size designs can exhibit non-vanishing bias and variance.
In general, in a multi-block setting, the per-block bounds on the bias and the variance combine linearly (\Cref{thm:multi-block-bounds}), so the block size must grow exponentially in order to have a single block dominate, which guarantees consistency.

\begin{theorem}[Growing-blocks rate]
\label{thm:growing-blocks-rate}
Let $(\vect{b}_n)$ be an infinite sequence of blockings (\Cref{def:blocking}) whose average block size increases without bound, $\lim \overline{\vect{b}_n} = \infty$.
Then, for all $c \in \sC$, the bias and the standard deviation vanish:
\begin{gather*}
\lim_{n \to \infty} \E[\DE^c -\DT^c \given \beta(\vect{b}_n)] = 0,\\
\lim_{n \to \infty} \Std[\DE^c \given \beta(\vect{b}_n)] = 0.
\end{gather*}
Furthermore, the rate for both is $\bigO(n^{\negone/2})$ if the block size grows exponentially, that is, $b_{n,k(n)} \propto n$ for some $k(n)$.
\end{theorem}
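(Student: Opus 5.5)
The plan is to combine the per-block rates of \Cref{thm:beta-variance-rate,thm:beta-bias-rate} with the linear combination rule of \Cref{thm:multi-block-bounds}. First, I will make the single-block rates uniform in the block size. Since $\mc < \infty$ and the proofs of \Cref{thm:beta-variance-rate,thm:beta-bias-rate} only involve $\mc$, $\ct{F}$ and $b$, there is a constant $K_c$, depending on $c$ only through $\mc$, such that for every $b$
\begin{gather*}
\abs[\big]{\E[\DE^c \given \beta(b)] - \DT^c} \leq K_c\, b^{\negone/2},
\qquad
\Std[\DE^c \given \beta(b)] \leq K_c\, b^{\negone/2}.
\end{gather*}
The trivial bounds $\abs{\DE^c} \leq 2\mc$ and $\abs{\DT^c} \leq 2\mc$ also hold, so we may take $B_k = S_k = \min(4\mc, K_c b_k^{\negone/2})$. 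Then \Cref{thm:multi-block-bounds} bounds both the bias and the standard deviation by
\begin{align*}
\sum_{k} \frac{b_{n,k}}{\sum_i b_{n,i}} K_c\, b_{n,k}^{\negone/2}
= \frac{K_c}{\sum_i b_{n,i}} \sum_k \sqrt{b_{n,k}} .
\end{align*}

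The key step is to bound this weighted average of $b_{n,k}^{\negone/2}$. By concavity of the square root (Jensen, or Cauchy--Schwarz), $\sum_k \sqrt{b_{n,k}} \leq \sqrt{m_n \sum_k b_{n,k}}$, where $m_n = \len{\vect{b}_n}$. The expression above is therefore at most $K_c \sqrt{m_n / \sum_k b_{n,k}} = K_c\, \overline{\vect{b}_n}^{\,\negone/2}$. This tends to $0$ when $\overline{\vect{b}_n} \to \infty$, which gives the first claim. Since the bias and standard deviation are controlled in the same way, the two limits follow together.

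For the rate, let $N_n = \sum_k b_{n,k}$ denote the total number of runs. The hypothesis $b_{n,k(n)} \propto N_n$ says that one block holds a constant fraction $\rho > 0$ of the runs. I will split the weighted sum into that dominant block and the rest. The dominant block contributes at most $\rho K_c (\rho N_n)^{\negone/2} = \bigO(N_n^{\negone/2})$. The remaining blocks each contribute at most $b_{n,k} N_n^{\negone} \cdot 4\mc$, so together they are bounded by a constant fraction $(1-\rho)$ of the worst case, which is not vanishing.

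This last point is the main obstacle: a single dominant block alone does not give the rate. I would instead read ``exponential growth'' as a geometric blocking $b_{n,k} \propto r^k$ with $r > 1$, which is the intended scenario. Then $\sum_k \sqrt{b_{n,k}}$ is a geometric series dominated by its last term, so it is $\bigO(\sqrt{b_{n,m_n}}) = \bigO(\sqrt{N_n})$. Dividing by $N_n$ gives $\bigO(N_n^{\negone/2})$. If this reading is too strong, the fallback is the weaker but general statement that the rate is $\bigO(\overline{\vect{b}_n}^{\,\negone/2})$, which equals $\bigO(n^{\negone/2})$ exactly when the number of blocks stays bounded.
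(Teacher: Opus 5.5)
The paper gives no proof of this theorem beyond the remark preceding it (per-block bounds combine linearly by \Cref{thm:multi-block-bounds}, so exponential growth lets a single block dominate). Your argument for the two limits is a correct, rigorous version of that route. The step the remark leaves out is your Cauchy--Schwarz bound $\frac{1}{N_n}\sum_k \sqrt{b_{n,k}} \leq (\len{\vect{b}_n}/N_n)^{1/2} = \overline{\vect{b}_n}^{\,-1/2}$. That bound is what makes a hypothesis on the \emph{average} block size sufficient, and it also shows that exponential growth matters only for the rate, not for consistency. The uniformity in $b$ you rely on does hold; for instance, the proof of \Cref{thm:beta-bias-rate} together with \Cref{thm:some-sum-big-o-n} gives the explicit bias bound $\ct{F}\pi\mc/(2\sqrt{b-1})$.

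Your doubt about the rate is justified, and it points to a defect in the statement rather than in your proof. As formalized, $b_{n,k(n)} \propto n$ does not give $\bigO(n^{-1/2})$, even together with $\overline{\vect{b}_n} \to \infty$. Here is a counterexample.
Take $\ct{F} = 2$, equal deterministic true run times, $U_1 = 0$, $U_{i+1} = U_i + \DI_i$ and $\Epsilonsub_i = -\mc \sign(U_i)$.
Every completed block is balanced, so $U_i$ equals the within-block imbalance. This fixed environment therefore plays the worst-case noise from the proof of \Cref{thm:beta-bias-rate} in every block at once, without knowing the blocking.
Each block of size $b$ then contributes a bias of the same sign and of exact order $b^{-1/2}$. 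Indeed, the terms with $b/4 \leq k \leq 3b/4$ alone contribute order $\sqrt{b}$ to $\sum_k \E\abs{L_k}/(b-k)$.
Now take a blocking of $n$ runs with one block of size $n/2$ and the rest in blocks of size about $2\log n$. It satisfies both hypotheses, yet its bias is of order $(\log n)^{-1/2}$.

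The correct sufficient condition is $\sum_k \sqrt{b_{n,k}} = \bigO(\sqrt{N_n})$. Your geometric reading satisfies it. So does the nested single-experiment case in which each appended block is at least a fixed fraction $\rho$ of the new total: then $N_k \leq (1-\rho)^{m-k} N_m$, hence $\sum_k \sqrt{b_k} \leq \sqrt{N_m}/(1-\sqrt{1-\rho})$. This is presumably what ``$b_{n,k(n)} \propto n$'' was meant to express.

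A minor point: your crude $4\mc$ bound on the non-dominant blocks can be replaced by your first-part estimate. That estimate does vanish, since their average size also diverges, just not at rate $n^{-1/2}$.
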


The theorem is stated in terms of a sequence of blockings, for the following reasons.
A blocked design defines behaviour in a single experiment with a finite number of runs, but for the analysis of asymptotics we obviously need infinitely many runs.
With a given constant block size, the extension to infinity is uniquely determined, but in the general case it is not.
One possibility is to work with a single blocking with infinitely many blocks, corresponding to a single experiment.
Here, we instead consider an infinite sequence of finite blockings, which subsumes single experiments as the special case where subsequent blockings can only extend previous ones with new blocks.
In addition, this approach is applicable to repeated experiments where, for example, there is a single block of increasing size.

\subsection{Dynamic Programming for $\beta$-Designs}
\label{sec:beta-dynamic-programming}

For a $\beta$-design, we compute the worst-case \emph{success probability} $P(\EAsub_f - \EAsub_g \leq \omega)$ for given $n \in \sN$ and $\omega > 0$.
This cumulative distribution function is the inverse of the quantile function $Q^{\mathrm{adv}}_p$; thus, we can find, for example by binary search, the minimum number of runs $n$ required to meet any given decision threshold $Q_{\gammadecprime}(g, f)$ at decision confidence $\gammadecprime$ (\Cref{rem:equivalence-thresholding}).
We can similarly solve numerically for $n$ to meet a given unambiguous region threshold $S_\gamma = - Q_{\gamma'}(\fg) - Q_{\gammadecprime}(g, f)$ (\Cref{thm:unambiguous-region}), with decision confidence $\gammadec$ and detection confidence $\gamma$.
In the common case where $\gamma = \gammadec$, this simplifies to finding the smallest $n$ such that $P(\EAsub_f - \EAsub_g \leq -S_\gamma / 2) \geq \gamma$.

The calculation is performed with dynamic programming.
After $r$ runs, the state is the quadruple $(\sigma, r, n_f, n_g)$, where
\begin{align*}
\sigma = \sum_{i=1}^r \di_i \epsilon_i,
\qquad n_f = \sum_{i=1}^r \indic_{f_i=f}\lshh,
\qquad n_g = \sum_{i=1}^r \indic_{f_i=g}\lshh.
\end{align*}
To keep the action space tractable, the noise is quantized to the set $\mathcal{Q} = \{0, \pm 1 / \ct{Q}, \pm 2 / \ct{Q}, \ldots, \pm 1\}$, where $\ct{Q} \in \sN$ is the number of quanta.
By increasing $\ct{Q}$, we can approximate the continuous case, where the noise can take any value in $[\negone, 1]$.
In \Cref{sec:experiments}, we will demonstrate experimentally that this approximation is very tight even with a moderate number of quanta.

Conveniently, due to balancedness and our worst-case focus, runs of programs other than $f$ and $g$ can be ignored, and we can assume that $\ct{F} = 2$ if $n$ is scaled by $2 / \ct{F}$.
Without loss of generality, we can also assume that $\mc = 1$ as $\omega$ can be scaled by $1/\mc$ otherwise.

The value of a state $s$ is the success probability $V(s) = P(\EAsub_f - \EAsub_g \leq \omega \given s)$.
In terminal states ($r=n$), where $\easub_f - \easub_g = \sigma \, \ct{F} / n$, the success probabilities are either $0$ or $1$.
For $r \in [1,n)$, the value update is
\begin{align*}
&V(\sigma, r, n_f, n_g)\\
&          \qquad = \min\limits_{q \in \mathcal{Q}} \Pigl(  \frac{n / 2 - n_f}{n - r} \; V(\sigma + q, r + 1, n_f + 1, n_g)\\
&\hphantom{\qquad = \min\limits_{q \in \mathcal{Q}} \biggl(} + \frac{n / 2 - n_g}{n - r} \; V(\sigma - q, r + 1, n_f, n_g + 1)\Pigr),
\end{align*}
where $n - r$, $n/2 - n_f$ and $n/2 - n_g$ are the number of runs left in total, for $f$ and for $g$, respectively.

The size of the state space is $\bigO(n^3 \ct{Q})$ because the range of $\sigma$ grows linearly with $r$, and $n_g = r - n_f$ is fully determined.
While this means that scalability is a problem, we will only use the DP approximation to compare the blocked design's efficiency to that of the simple randomized design, which we explore next.

\begin{assumption}
\label{ass:no-program-noise-in-dp}
Because the distribution of the program-specific noise $\Tausub_i$ is unknown, it cannot be marginalized out in the value update.
Furthermore, it cannot be lumped together with the environmental noise because, unlike $\Epsilonsub_i$, we have $\Tausub_i \centernot{\indep} F_i \given \divec_{< i}, \sigma_{i-1}$.
Thus, the DP calculation assumes no program-specific noise.
\end{assumption}

\section{Simple Randomized Experiments}
\label{sec:alpha-experiments}

Our first experiment design chooses a program independently and uniformly at random for each run.
In classical experiment design, this is called simple randomized design, where treatment assignments are sampled with replacement from the set of treatments ($\sFC$ here).

\begin{definition}[Simple randomized design]
\label{def:alpha-design}
A simple randomized experiment design (or $\alpha$-design) is a delta design where, for every run $i \in [1, n]$, the program $F_i$ is chosen uniformly and independently from the set of programs $\mathcal{F}$.
\end{definition}

To avoid division by zero in $\LE(\fc) = \TA_{f}$, we must also require $N_f > 0$, which introduces slight dependencies in $\vectrv{F}$\mathendkern{F}{.}

\subsection{Asymptotics}

\begin{definition}[Shorthands]
Let $L_n = \sum_{i=1}^n \DI_i$, $E_n = \sum_{i=1}^n \Epsilonsub_i$ and $\Sigma_n = \sum_{i=1}^n \DI_i \Epsilonsub_i$.
\end{definition}

\vspace{-\baselineskip}

\begin{restatable}{proposition}{thmalphanoisedifference}
\label{thm:alpha-noise-difference}
Under the $\alpha$-design with $\ct{F} = 2$, we can express the noise difference as
\begin{align*}
\EAsub_f - \EAsub_g
= \frac{2 n \Sigma_n - 2 L_n E_n}{n^2 - L_n^2}.
\end{align*}
\end{restatable}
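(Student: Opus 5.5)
This is a direct algebraic identity, so the plan is to express the averaged noise terms through the shorthands $L_n$, $E_n$, $\Sigma_n$ and simplify. Here $\EAsub_f$ denotes the average over runs (\Cref{def:average-of-runs}) of the environmental noise $\Epsilonsub_i$ restricted to runs of $f$, namely $\EAsub_f = \frac{1}{N_f}\sum_{i=1}^n \indic_{F_i=f}\Epsilonsub_i$, and likewise for $g$.

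With $\ct{F}=2$ every run is either $f$ or $g$, so $\indic_{F_i=f} + \indic_{F_i=g} = 1$ and $\indic_{F_i=f} - \indic_{F_i=g} = \DI_i$. Solving these two relations gives
\begin{align*}
\indic_{F_i=f} = \frac{1+\DI_i}{2}, \qquad \indic_{F_i=g} = \frac{1-\DI_i}{2}.
\end{align*}
Summing over $i$ yields the run counts
\begin{align*}
N_f = \frac{n+L_n}{2}, \qquad N_g = \frac{n-L_n}{2},
\end{align*}
and the noise sums
\begin{align*}
\sum_i \indic_{F_i=f}\Epsilonsub_i = \frac{E_n+\Sigma_n}{2}, \qquad \sum_i \indic_{F_i=g}\Epsilonsub_i = \frac{E_n-\Sigma_n}{2}.
\end{align*}
The factors of $1/2$ cancel in each ratio, so
\begin{align*}
\EAsub_f = \frac{E_n+\Sigma_n}{n+L_n}, \qquad \EAsub_g = \frac{E_n-\Sigma_n}{n-L_n}.
\end{align*}

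Subtracting over the common denominator $(n+L_n)(n-L_n) = n^2 - L_n^2$ gives the numerator
\begin{align*}
(E_n+\Sigma_n)(n-L_n) - (E_n-\Sigma_n)(n+L_n).
\end{align*}
Expanding, the terms $nE_n$ cancel and the terms $L_n\Sigma_n$ cancel. What remains is $2n\Sigma_n - 2L_nE_n$, which is the claimed expression.

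There is no real obstacle. The only point needing care is well-definedness: the denominators are nonzero because the design requires $N_f > 0$ and $N_g > 0$, which is equivalent to $\abs{L_n} < n$.
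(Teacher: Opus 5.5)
Your proof is correct and follows the same route as the paper's: write $\EAsub_f = (E_n+\Sigma_n)/(n+L_n)$ and $\EAsub_g = (E_n-\Sigma_n)/(n-L_n)$ using $\indic_{F_i=f} = (1+\DI_i)/2$ and $\indic_{F_i=g} = (1-\DI_i)/2$, then combine over the common denominator $n^2 - L_n^2$. Your well-definedness remark ($N_f, N_g > 0$ iff $\abs{L_n} < n$) also matches the paper's requirement $N_f > 0$ for the $\alpha$-design.
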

\noindent For the proof of this and other propositions in this section, see \Cref{sec:proofs-for-simple-randomized-experiments}.

\begin{restatable}[$\alpha$ noise $\sqrt{n}$-consistency]{theorem}{thmalphanoiserate}
\label{thm:alpha-noise-rate}
For any configuration $c \in \sC$ and programs $\fg \in \sF$, with the $\alpha$-design, $\EAsub_f - \EAsub_g = \bigO_p(n^{\negone/2})$ given $\vect{N} > 0$.
\end{restatable}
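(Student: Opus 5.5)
The plan is to avoid the closed form of \Cref{thm:alpha-noise-difference}, which assumes $\ct{F}=2$, and to work directly with general $\ct{F}$, restricting attention to the noise terms. Write $\EAsub_f = \frac{1}{N_f}\sum_{i=1}^n \indic_{F_i=f}\Epsilonsub_i$, and similarly for $g$. Split each indicator as $\indic_{F_i=f} = \frac{1}{\ct{F}} + D_i^f$ with $D_i^f = \indic_{F_i=f} - \frac{1}{\ct{F}}$. This gives
\begin{align*}
\EAsub_f - \EAsub_g = \frac{E_n}{\ct{F}}\Bigl(\frac{1}{N_f} - \frac{1}{N_g}\Bigr) + \frac{A_f}{N_f} - \frac{A_g}{N_g},
\qquad A_f = \sum_{i=1}^n D_i^f \Epsilonsub_i .
\end{align*}
For $\ct{F}=2$ this reproduces the expression in \Cref{thm:alpha-noise-difference}; in general it isolates the two sources of error. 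I would first prove the claim without the conditioning on $\vect{N}>0$, treating $F_i$ as i.i.d.\ uniform, and handle the conditioning at the end.

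The key structural step is to show that $A_f$ is a martingale with bounded increments. Under the $\alpha$-design, $F_i$ is drawn independently of everything generated before run $i$, namely $\vectrv{F}_{<i}$, $\vectrv{T}_{<i}$ and $\vectrv{U}_{\leq i}$. Under the ceteris paribus assumption, $\Epsilonsub_i$ depends only on $U_i$ (and $c$), so $\Epsilonsub_i$ is measurable with respect to the past together with its own fresh randomness, and is independent of $F_i$. Hence $\E[D_i^f\Epsilonsub_i \given \mathcal{G}_{i-1}, \Epsilonsub_i] = \Epsilonsub_i\,\E[D_i^f] = 0$, where $\mathcal{G}_{i-1}$ is the history of the first $i-1$ runs. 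Consequently the cross terms vanish and
\begin{align*}
\E[A_f^2] = \sum_{i=1}^n \E\bigl[(D_i^f)^2\Epsilonsub_i^2\bigr] \leq n\,\mc^2 ,
\end{align*}
which by Chebyshev gives $A_f = \bigO_p(n^{1/2})$. I expect this to be the main obstacle to state carefully. The difficulty is that $\Epsilonsub_i$ does depend on $F_{<i}$ through $\vectrv{U}$, so the $\Epsilonsub_i$ are neither independent nor identically distributed. What survives is only the one-step independence of $F_i$ from $(\Epsilonsub_i,\text{past})$, and the argument must rely on that alone. This is exactly the point made in \Cref{fig:delta-design} about $\Epsilonsub_i$ depending on $F_i$ only through $\vectrv{F}_{<i}$.

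The remaining terms are elementary.
\begin{itemize}
\item By the law of large numbers, $N_f/n \to 1/\ct{F}$ in probability, so $N_f \geq n/(2\ct{F})$ with probability tending to one. Hence $A_f/N_f = \bigO_p(n^{-1/2})$, and likewise for $g$.
\item For the first term, $\abs{E_n} \leq n\mc$, and $N_g - N_f$ is a sum of i.i.d.\ mean-zero bounded variables, hence $\bigO_p(n^{1/2})$. Therefore
\begin{align*}
\frac{E_n}{\ct{F}}\cdot\frac{N_g - N_f}{N_fN_g} = \bigO(n)\cdot\bigO_p(n^{1/2})\cdot\bigO_p(n^{-2}) = \bigO_p(n^{-1/2}).
\end{align*}
\end{itemize}
Summing the three terms gives $\EAsub_f - \EAsub_g = \bigO_p(n^{-1/2})$.

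Finally, we pass to the conditioning on $\vect{N}>0$. Its probability is $1 - \bigO(\ct{F}(1-1/\ct{F})^n) \to 1$. For any event $B$ we have $P(B \given \vect{N}>0) \leq P(B)/P(\vect{N}>0)$, so the tail bounds defining $\bigO_p$ carry over with at most a factor tending to one, and the statement follows.
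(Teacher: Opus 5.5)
Your argument is correct, and its skeleton matches the paper's. Both split the error into a martingale term of order $\sqrt{n}$ divided by run counts of order $n$. Both add the bounded sum $E_n$ times a count imbalance of order $\sqrt{n}$, divided by a product of counts of order $n^2$. You differ in how you reach general $\ct{F}$ and in how you treat the conditioning.

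The paper proves $\ct{F}=2$ from the closed form of \Cref{thm:alpha-noise-difference}. It uses Azuma--Hoeffding for $\Sigma_n$ and a concentration bound giving $n^2 - L_n^2 = \Omega_p(n^2)$. For $\ct{F}>2$ it restricts to the subsequence of runs in $\{f, g\}$, whose length is $\operatorname{Binomial}(n, 2/\ct{F})$ and hence $\Theta_p(n)$, and applies the two-program result there. It openly sets aside the conditioning on $\vect{N}>0$.

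Your centering $\indic_{F_i=f} = 1/\ct{F} + D_i^f$ handles all $\ct{F}$ at once, so you avoid that reduction. The reduction tacitly needs a check: conditioning on which positions are $\{f,g\}$-runs, including future ones, must leave each $f$-versus-$g$ label independent of the current noise and the past. You also actually dispose of the conditioning via $P(B \given \vect{N}>0) \leq P(B)/P(\vect{N}>0)$, rather than ignoring it. That transfer is sound. The event depends only on $\vectrv{F}$, and the environment's kernels see only past programs. So conditioning the i.i.d.\ design on it gives the same joint law as sampling $\vectrv{F}$ sequentially from its conditional distribution, and your unconditional bounds hold uniformly over noise processes.

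In exchange, your Chebyshev step gives only polynomial tails. These suffice for $\bigO_p$. The paper's Azuma--Hoeffding route gives exponential tails, which the later martingale approximation builds on, and its $\ct{F}=2$ closed form is reused in \Cref{thm:alpha-approximation}.
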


Note that the theorem's claim is about the worst-case: no matter how $\Epsilonsub_i$ are chosen, the effect of the possibly auto-correlated environmental noise on averages decays at the same asymptotic rate as in the independent case, which implies that this worst-case rate is also the best possible.

\begin{restatable}[$\alpha$ $\sqrt{n}$-consistency]{theorem}{thmalphasqrtnconsistency}
For any configuration $c \in \sC$ and programs $\fg \in \sF$, with the $\alpha$-design, $\DE^c - \delta^c = \bigO_p(n^{\negone/2})$ given $\vect{N} > 0$.
\end{restatable}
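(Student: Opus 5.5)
The plan is to split the estimation error into an environmental part and a program-specific part, and bound each at rate $n^{\negone/2}$. Under \Cref{ass:additive-global-effect} (or its log version, \Cref{ass:multiplicative-global-effect}), each observation satisfies $T_i = \Tausub_i + \Epsilonsub_i$. Averaging over the runs of $f$ and of $g$ gives
\begin{align*}
\DE^c - \DT^c = \bigl(\TauAsub_f - \TauAsub_g - \DT^c\bigr) + \bigl(\EAsub_f - \EAsub_g\bigr),
\end{align*}
where $\TauAsub_f$ and $\EAsub_f$ are the averages over runs (\Cref{def:average-of-runs}) of $\Tau$ and $\Epsilon$. The second bracket is $\bigO_p(n^{\negone/2})$ given $\vect{N} > 0$ by \Cref{thm:alpha-noise-rate}. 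Since a sum of two $\bigO_p(n^{\negone/2})$ terms is again $\bigO_p(n^{\negone/2})$, it suffices to show $\TauAsub_f - l(\fc) = \bigO_p(n^{\negone/2})$, and likewise for $g$.

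For the program-specific term, I would use that $\Tausub_i$ is drawn from $\Tau \given F_i, c$ independently of the past given $F_i$; in \Cref{fig:delta-design}, $\Tausub_i$ depends only on $F_i$. The steps are as follows.
\begin{myenum}
\item Condition on the whole sequence $\vectrv{F}$. Under the $\alpha$-design, $\vectrv{F}$ is independent of the environment (\Cref{def:delta-design}), so it is also independent of $\vectrv{\Tau}$.
\item Given $\vectrv{F}$, the values $\Tausub_i$ with $F_i = f$ are $N_f$ i.i.d.\ draws with mean $l(\fc)$ and $\abs{\Tau} \leq \mc$. Chebyshev's inequality then gives
\begin{align*}
P\bigl(\abs{\TauAsub_f - l(\fc)} > M N_f^{\negone/2} \given \vectrv{F}\bigr) \leq \mc^2 / M^2 .
\end{align*}
\item $N_f$ is binomial with parameters $n$ and $1/\ct{F}$, conditioned on $\vect{N} > 0$. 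Hence $N_f / n \to 1/\ct{F}$ in probability, for instance by Chebyshev again or Hoeffding, and the conditioning on $\vect{N}>0$ changes probabilities only by an exponentially small amount. So $N_f^{\negone/2} = \bigO_p(n^{\negone/2})$.
\item Combine the two bounds by a union bound on the event $N_f \geq n/(2\ct{F})$.
\end{myenum}

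The main obstacle I expect is justifying step 1, namely that the $\Tausub_i$ remain i.i.d.\ with mean $l(\fc)$ after conditioning on $\vectrv{F}$ and on the event $\vect{N}>0$. The realized $\Tausub_i$ influence later $U$, through $T_i$, but they do not influence later $F_j$ or later $\Tausub_j$. I would argue this from the factorization of the Bayesian network: marginalizing out $\vectrv{U}$ and $\vectrv{\Epsilon}$ leaves $\prod_i P(\Tausub_i \given F_i, c)$. One caveat is that $\vect{N}>0$ is an event on $\vectrv{F}$ alone, so conditioning on it keeps this structure. Finally, I would remark that $l(\fc) = \E[\Tau \given \fc]$ up to the unidentifiable shift from the shift-invariance remark, and that this shift cancels in the delta $\DT^c$.
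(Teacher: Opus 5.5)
Your proposal is correct. It uses the same decomposition as the paper: the environmental part $\EAsub_f - \EAsub_g$ is handled by \Cref{thm:alpha-noise-rate}, and the common shift $\E_U[\Epsilon]$ cancels in the delta exactly as the paper's $C_U$ does.

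The difference lies in the program-specific term. The paper re-invokes \Cref{thm:alpha-noise-rate}, with the centered values $\Tausub_i - \E[\Tau \mid F_i]$ playing the role of noise. That is, it reuses the martingale/Azuma--Hoeffding bound on $\Sigma_n$ together with the $\Omega_p(n^2)$ control of $n^2 - L_n^2$. You instead condition on $\vectrv{F}$, use that the $\Tausub_i$ are then independent draws from $\Tau \mid F_i, c$, and combine a conditional Chebyshev bound with binomial concentration of $N_f$.

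The paper's route is shorter and treats both noise sources uniformly. It is, however, slightly loose as written. The proof of \Cref{thm:alpha-noise-rate} factors $\E[\DI_i \Epsilonsub_i \mid \text{past}] = \E[\DI_i]\,\E[\Epsilonsub_i \mid \text{past}]$ using $\DI_i \indep \Epsilonsub_i$. That independence fails for $\Tausub_i - \E[\Tau \mid F_i]$ whenever programs have different run-time distributions. The argument goes through only because this centered quantity has zero conditional mean given $F_i$ and the past. Once that is noted, the paper's route extends verbatim to any program noise with that property.

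Your route avoids this subtlety and is more elementary. It needs only finite variance of $\Tau$ rather than bounded martingale increments. It also makes the required conditional-independence structure explicit, including why conditioning on $\vect{N} > 0$, an event on $\vectrv{F}$ alone, does not disturb it.

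One wording fix in step~1: $\vectrv{F}$ is not independent of $\vectrv{\Tau}$, since each $\Tausub_i$ depends on $F_i$. What you need is that no $F_j$ depends on the realized $\Tausub_{<j}$. Your later factorization $\prod_i P(\Tausub_i \mid F_i, c)$ shows exactly this, so state it that way.
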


\subsection{Finite-Sample Behaviour}

Having established the asymptotic rate, we want to know how many runs to perform for a given level of confidence $\gamma$ and minimum effect size $\omega$.
We propose three approaches for the $\ct{F} = 2$ case, which we finally combine into a bound for the general case.

\subsubsection{Dynamic Programming for $\alpha$-Designs}
\label{sec:alpha-dynamic-programming}

For an $\alpha$-design, we compute the worst-case \emph{success probability} $P(\EAsub_f - \EAsub_g \leq \omega)$ for given $n \in \sN$ and $\omega > 0$.
This parallels the DP approximation for the $\beta$-design (\Cref{sec:beta-dynamic-programming}), with the following notable differences:
\begin{itemize}
\item The state space is larger as we need to track partial sums and run counts for $f$ and $g$ separately.
\item We cannot ignore programs other than $f$ and $g$ as their run counts are stochastic; to keep the state space size manageable, we restrict our analysis to $\ct{F} = 2$.
\end{itemize}
To handle more programs efficiently, we present a martingale-based approximation to this DP (\Cref{app:martingale-approximation-for-several-programs}) in the next section.

After $r$ runs, the state is the quintuple $(\sigma_f, \sigma_g, r, n_f, n_g)$, where
\begin{align*}
&\sigma_f = \smash[t]{\sum_{i=1}^r} \indic_{f_i=f} \epsilon_i,
&\sigma_g = \smash[t]{\sum_{i=1}^r} \indic_{f_i=g} \epsilon_i\\
&n_f = \sum_{i=1}^r \indic_{f_i=f},
&n_g = \sum_{i=1}^r \indic_{f_i=g}.
\end{align*}

In terminal states ($r=n$), if $n_f > 0$ and $n_g > 0$, the value is $\indic_{\sigma_f / n_f - \sigma_g / n_g \leq \omega}$; otherwise, it is $0$ (undefined means constitute failure).
For $r \in [1,n)$, the value update is
\begin{align*}
&V(\sigma_f, \sigma_g, r, n_f, n_g)\\
&          \qquad = \min\limits_{q \in \mathcal{Q}} \bigl(  0.5 \, V(\sigma_f + q, \sigma_g, r + 1, n_f + 1, n_g)\\
&\hphantom{\qquad = \min\limits_{q \in \mathcal{Q}} \bigl(} + 0.5 \, V(\sigma_f, \sigma_g + q, r + 1, n_f, n_g + 1)\bigr).
\end{align*}

The size of the state space is $\bigO(n^4 \ct{Q}^2)$ because we must track two independent accumulated noise sums ($\sigma_f, \sigma_g$) alongside a $n_f$.
Note that $n_g = r - n_f$ because $\ct{F} = 2$.
This scales significantly worse than the $\beta$-design DP.
In the following, we propose cheap approximations and verify them against these dynamic programming results in the computationally feasible range.

\subsubsection{Approximations}
\label{sec:approximations}

To motivate the approximations developed later in this section, we first present a theorem characterizing the asymptotic behaviour of the estimator.
Given the illustrative purpose of the theorem and the heuristic nature of the approximations it motivates, it adopts $\bar{\mathcal{E}} \indep \DIvec$ as a simplifying assumption that greatly reduces the complexity of the exposition.
While the average noise $\bar{\Epsilon}$ is clearly not independent from the coin flips $\DIvec$ under our model, it can be viewed as a weak forgetting assumption.

\begin{restatable}[$\alpha$-approximation]{proposition}{thmalphaapproximation}
\label{thm:alpha-approximation}
Under an $\alpha$-design with $\ct{F} = 2$ and $n$ runs, let $\tilde{M} = \frac{2}{n} \Sigma_n$ and $\tilde{C} = \frac{2}{n} L_n \bar{\mathcal{E}}$.
Assume worst-case noise and that the mean noise is independent of the program choices, $\bar{\mathcal{E}} \indep \DIvec$.
Then,
\begin{gather*}
\EAsub_f - \EAsub_g = \tilde{M} - \tilde{C} + \bigO_p(n^{\neg 3/2}),\\
\Cov[\tilde{M}, \tilde{C}] = \Var[\tilde{C}],
\end{gather*}
and consequently for the variance
\begin{gather*}
\Var[\EAsub_f - \EAsub_g] \leq \Var[\tilde{M}] + \bigO(n^{\neg 2}).
\end{gather*}
\end{restatable}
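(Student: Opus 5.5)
The plan is to start from the exact representation in \Cref{thm:alpha-noise-difference}, $\EAsub_f - \EAsub_g = (2n\Sigma_n - 2L_nE_n)/(n^2 - L_n^2)$. Writing $E_n = n\bar{\mathcal{E}}$ turns the numerator into $2n(\Sigma_n - L_n\bar{\mathcal{E}})$. We then expand the denominator as $n^2(1 - L_n^2/n^2)^{-1}$, which gives
\begin{align*}
\EAsub_f - \EAsub_g = \frac{2}{n}\bigl(\Sigma_n - L_n\bar{\mathcal{E}}\bigr)\Bigl(1 + \frac{L_n^2}{n^2 - L_n^2}\Bigr) = (\tilde{M} - \tilde{C})\Bigl(1 + \frac{L_n^2}{n^2 - L_n^2}\Bigr).
\end{align*}
The $\DI_i$ are i.i.d.\ Rademacher under the $\alpha$-design with $\ct{F}=2$, so $L_n = \bigO_p(n^{1/2})$ and $L_n^2/(n^2 - L_n^2) = \bigO_p(n^{\negone})$ on the event $\vect{N} > 0$. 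It remains to bound $\tilde{M} - \tilde{C}$. Boundedness $\abs{\Epsilonsub_i} \leq \mc$ gives $\tilde{C} = \bigO_p(n^{\negone/2})$. For $\tilde{M}$, the key point is that $\DI_i$ is independent of $(\DIvec_{<i}, \Epsilonsub_{\leq i})$, because $\Epsilonsub_i$ depends on $F_i$ only through the past. Hence $\DI_i\Epsilonsub_i$ is a martingale difference sequence with bounded increments, and $\tilde{M} = \bigO_p(n^{\negone/2})$. The product remainder is therefore $\bigO_p(n^{\negone/2}\cdot n^{\negone}) = \bigO_p(n^{\neg 3/2})$. The dependence induced by conditioning on $\vect{N}>0$ has exponentially small probability and will be absorbed.

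For the covariance identity we use the assumption $\bar{\mathcal{E}} \indep \DIvec$. First, $\E[\tilde{C}] = \frac{2}{n}\E[L_n]\E[\bar{\mathcal{E}}] = 0$ and $\E[\tilde{M}] = 0$ by the martingale property, so it suffices to show $\E[\Sigma_n L_n \bar{\mathcal{E}}] = \E[L_n^2\bar{\mathcal{E}}^2]$. To do this we write $\Sigma_n L_n \bar{\mathcal{E}} = \sum_{i,j}\DI_i\DI_j\Epsilonsub_i\bar{\mathcal{E}}$. For the diagonal terms, $\DI_i^2=1$, so they sum to $n\bar{\mathcal{E}}^2$. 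The right-hand side is $\E[L_n^2]\E[\bar{\mathcal{E}}^2] = n\E[\bar{\mathcal{E}}^2]$, again by independence. The identity thus reduces to showing that the off-diagonal sum $\sum_{i\neq j}\E[\DI_i\DI_j\Epsilonsub_i\bar{\mathcal{E}}]$ vanishes.

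This off-diagonal vanishing is the main obstacle. The case $j > i$ is fine if, as the worst-case noise model suggests, $\Epsilonsub_i$ is determined before $\DI_j$ and the assumption is read as making $\DI_j$ independent of $(\DI_i,\Epsilonsub_i,\bar{\mathcal{E}})$. The case $j<i$ is harder, because $\Epsilonsub_i$ may depend on $\DI_j$. There I would try to use the worst-case framing, namely the adversarial envelope of \Cref{sec:looking-ahead}, where an adversary choosing $\Epsilonsub_i$ from the past has $\bar{\mathcal{E}}$ held independent of $\DIvec$. I would then argue by symmetry: flipping all signs $\DIvec \mapsto -\DIvec$ preserves the law of $\DIvec$ and, under the independence assumption, the law of $\bar{\mathcal{E}}$. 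Should exact cancellation fail, I would settle for showing that these cross terms contribute only $\bigO(n^{\neg 2})$ to the variance, which is all the final claim needs.

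The variance bound then follows by expansion:
\begin{align*}
\Var[\tilde{M} - \tilde{C}] = \Var[\tilde{M}] - 2\Cov[\tilde{M},\tilde{C}] + \Var[\tilde{C}] = \Var[\tilde{M}] - \Var[\tilde{C}] \leq \Var[\tilde{M}].
\end{align*}
The remainder term is $\bigO(n^{\neg 3/2})$ in $L^2$ as well, since all quantities are bounded on $\vect{N}>0$ and have moments of the right order. Its cross term with $\tilde{M} - \tilde{C}$ can be bounded using $\E[(\tilde{M}-\tilde{C})L_n^2/n^2] = \bigO(n^{\neg 3/2})$, and I would check that an extra symmetry factor of $n^{\neg 1/2}$ brings this down to $\bigO(n^{\neg 2})$. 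This yields $\Var[\EAsub_f - \EAsub_g] \leq \Var[\tilde{M}] + \bigO(n^{\neg 2})$.
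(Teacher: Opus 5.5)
Your skeleton is the paper's own. It uses the same expansion of the denominator (your exact factor $1 + L_n^2/(n^2 - L_n^2)$ is a tidier form of it), the same diagonal computation, and the same variance expansion. For the cross term with the remainder $R_n = (\tilde{M} - \tilde{C})L_n^2/(n^2 - L_n^2)$ you need no extra symmetry factor: Cauchy--Schwarz gives $\Std[\tilde{M} - \tilde{C}]\,\Std[R_n] = \bigO(n^{-1/2})\,\bigO(n^{-3/2})$, which suffices, as in the paper.

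The genuine gap is the step you flag, $\sum_{i \neq j} \E[\DI_i \DI_j \Epsilonsub_i \bar{\mathcal{E}}] = 0$, and neither of your remedies closes it. A global flip $\DIvec \mapsto -\DIvec$ leaves $\DI_i \DI_j$ invariant, so it can never force these terms to cancel; with adaptive noise the joint law is not flip-invariant anyway. The fallback fails too, because under the hypothesis as literally stated the off-diagonal terms need be neither zero nor small. For $n = 3$, put a fair sign $\eta$ into $U_1$ and let $\Epsilonsub_1 = 0$, $\Epsilonsub_2 = \eta \DI_1$, $\Epsilonsub_3 = \eta(\DI_2 - \DI_1)$. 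Each is a function of the past, and $\bar{\mathcal{E}} = \eta \DI_2 / 3$ is independent of $\DIvec$. Yet $\E[\Sigma_3 L_3 \bar{\mathcal{E}}] = 2/3$ while $\E[L_3^2 \bar{\mathcal{E}}^2] = 1/3$, so $\Cov[\tilde{M}, \tilde{C}] = 2\Var[\tilde{C}]$. A two-phase version scales this up. Use noise $\eta \mc/2$ on the first $n/2$ runs. Then, with $h = -1$ if $L_{n/2}^2 > n/2$ and $h = 1$ otherwise, steer $E_n$ to $\eta h\, n \mc/4$, which is feasible with $\lvert \Epsilonsub_i \rvert \leq \mc$. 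This keeps $\bar{\mathcal{E}} \indep \DIvec$ but gives $\Var[\tilde{M} - \tilde{C}] - \Var[\tilde{M}] = \mc^2 n^{-2} (\E\lvert L_{n/2}^2 - n/2 \rvert - n/4) \approx 0.23\,\mc^2/n$. So no $\bigO(n^{-2})$ repair exists.

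The paper does not supply the missing idea either. It simply asserts that ``independence implies $\Cov[\DI_i \Epsilonsub_i, \DI_j \bar{\mathcal{E}}] = 0$ for $i \neq j$'', tacitly reading its explicitly heuristic hypothesis as a stronger decoupling. To complete your argument you must state that stronger assumption explicitly. For instance, assume that for $i \neq j$ the later coin $\DI_{\max(i,j)}$ is independent of $(\DI_{\min(i,j)}, \Epsilonsub_i, \bar{\mathcal{E}})$ jointly. Then every off-diagonal expectation factors through $\E[\DI_{\max(i,j)}] = 0$. Under this assumption the case $j < i$ is no harder than $j > i$: only the decoupling of the later coin matters, not whether $\Epsilonsub_i$ depends on $\DI_j$. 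With it, your covariance identity and the variance bound go through as in the paper. Without it, the statement's second and third claims are not implied by $\bar{\mathcal{E}} \indep \DIvec$ alone.
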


We now present two practical bounds for experimental planning with the $\alpha$-design in the two-program case.

\begin{restatable}[Martingale approximation]{approximation}{approxmartingale}
\label{thm:martingale-approximation}
Under an $\alpha$-design with $\ct{F} = 2$, $n$ runs, with any $\omega > 0$,
\begin{align*}
P(\EAsub_f - \EAsub_g \leq \omega) \gtrsim 1 - \exp\left(\frac{-n \omega^2}{8 \mc^2}\right).
\end{align*}
\end{restatable}

See \Cref{sec:proofs-for-simple-randomized-experiments} for the derivations of this and the following approximations.

\begin{restatable}[Asymmetric approximation]{approximation}{approxasymmetric}
\label{thm:asymmetric-approximation}
Under an $\alpha$-design with $\ct{F} = 2$, $n$ runs, with any $\omega > 0$,
\begin{align*}
P(\EAsub_f - \EAsub_g \leq \omega) \gtrsim \biggl[2 \Phi\biggl(\frac{\sqrt{n}}{2\mc} \omega\biggr) - 1\biggr]^2.
\end{align*}
\end{restatable}

Comparison with the dynamic programming solution suggests that the asymmetric approximation (\Cref{thm:asymmetric-approximation}) is better, so that is our recommendation for the two-program case.
However, the martingale approximation (\Cref{thm:martingale-approximation}) admits a convenient closed-form generalization to more programs, as we show next.

\begin{restatable}[Martingale approximation for $\ct{F} > 2$]{approximation}{approxforseveralprograms}
\label{app:martingale-approximation-for-several-programs}
Under an $\alpha$-design with $n$ runs and any $\omega > 0$, provided that the martingale approximation (\Cref{thm:martingale-approximation}) holds,
\begin{align*}
P(\EAsub_f - \EAsub_g \leq \omega) \gtrsim 1 - \bigl(p e^{\neg \ct{K}} + 1 - p\bigr)^n,
\end{align*}
where $\ct{K} = \omega^2/(8\mc^2)$ and $p = 2/\ct{F}$.
\end{restatable}

This provides a computationally trivial $\bigO(1)$ approximation.

\begin{figure}[p]
\centering
\begin{tikzpicture}
  \begin{axis} [ylabel=$\gamma$, xlabel=$n$,
      xlabel near ticks,
      ylabel near ticks,
      xlabel shift={-2pt},
      ylabel shift={-2pt},
      xmin=2,
      ymin=0, ymax=1,
      legend pos=south east,
      legend style={nodes={scale=0.7, transform shape}},
      legend cell align={left},
      height=0.6*\columnwidth,
      width=0.98\columnwidth,
    ]
    \pgfplotstableread{data/threshold-0.5.tbl}{\sorted}
    \addplot[adp1] table [x=n, y=adp1] {\sorted};
    \addlegendentry{$\alpha$ DP1};
    \addplot[adp5] table [x=n, y=adp5] {\sorted};
    \addlegendentry{$\alpha$ DP5};
    \addplot[adp10] table [x=n, y=adp10] {\sorted};
    \addlegendentry{$\alpha$ DP10};
    \addplot[bdp1] table [x=n, y=bdp1] {\sorted};
    \addlegendentry{$\beta$ DP1};
    \addplot[bdp5] table [x=n, y=bdp5] {\sorted};
    \addlegendentry{$\beta$ DP5};
    \addplot[bdp10] table [x=n, y=bdp10] {\sorted};
    \addlegendentry{$\beta$ DP10};
  \end{axis}
\end{tikzpicture}
\caption{DP estimates of the worst-case confidence $\gamma$ with the detection threshold $\omega$ set to $0.5\mc$ for the $\alpha$ (\Cref{sec:alpha-dynamic-programming}) and $\beta$-designs (\Cref{sec:beta-dynamic-programming}) with $1$, $5$ and $10$ quanta.
The sequence of quantized DP estimates decreases to the worst case confidence from above.
Since the change from $5$ to $10$ is already quite small, we substitute the DP10 versions for the worst-case confidences.}
\label{fig:quantization-at-threshold-0.5}
\end{figure}

\begin{figure}[p]
\centering
\begin{tikzpicture}
  \begin{axis} [ylabel=$\gamma$, xlabel=$n$,
      xlabel near ticks,
      ylabel near ticks,
      xlabel shift={-2pt},
      ylabel shift={-2pt},
      xmin=2,
      ymin=0, ymax=1,
      legend pos=south east,
      legend style={nodes={scale=0.7, transform shape}},
      legend cell align={left},
      height=0.6*\columnwidth,
      width=0.98\columnwidth,
    ]
    \pgfplotstableread{data/threshold-0.25.tbl}{\sorted}
    \addplot[adp1] table [x=n, y=adp1] {\sorted};
    \addlegendentry{$\alpha$ DP1};
    \addplot[adp5] table [x=n, y=adp5] {\sorted};
    \addlegendentry{$\alpha$ DP5};
    \addplot[adp10] table [x=n, y=adp10] {\sorted};
    \addlegendentry{$\alpha$ DP10};
    \addplot[bdp1] table [x=n, y=bdp1] {\sorted};
    \addlegendentry{$\beta$ DP1};
    \addplot[bdp5] table [x=n, y=bdp5] {\sorted};
    \addlegendentry{$\beta$ DP5};
    \addplot[bdp10] table [x=n, y=bdp10] {\sorted};
    \addlegendentry{$\beta$ DP10};
  \end{axis}
\end{tikzpicture}
\caption{Dynamic programming estimates of the worst-case confidence $\gamma$ with the detection threshold $\omega$ set to $0.25\mc$.
As the threshold is halved compared to the previous figure, we see the graph being stretched $4$ times along the $n$ axis.
This stretching is approximate due to the discrete nature of the problem.}
\label{fig:quantization-at-threshold-0.25}
\end{figure}

\begin{figure}[p]
\centering
\begin{tikzpicture}
  \begin{axis} [ylabel=$\gamma$, xlabel=$n$,
      xlabel near ticks,
      ylabel near ticks,
      xlabel shift={-2pt},
      ylabel shift={-2pt},
      xmin=2,
      ymin=0, ymax=0.6,
      legend pos=north west,
      legend style={nodes={scale=0.7, transform shape}},
      legend cell align={left},
      height=0.6*\columnwidth,
      width=0.98\columnwidth,
    ]
    \pgfplotstableread{data/threshold-0.125.tbl}{\sorted}
    \addplot[adp1] table [x=n, y=adp1] {\sorted};
    \addlegendentry{$\alpha$ DP1};
    \addplot[adp5] table [x=n, y=adp5] {\sorted};
    \addlegendentry{$\alpha$ DP5};
    \addplot[adp10] table [x=n, y=adp10] {\sorted};
    \addlegendentry{$\alpha$ DP10};
    \addplot[bdp1] table [x=n, y=bdp1] {\sorted};
    \addlegendentry{$\beta$ DP1};
    \addplot[bdp5] table [x=n, y=bdp5] {\sorted};
    \addlegendentry{$\beta$ DP5};
    \addplot[bdp10] table [x=n, y=bdp10] {\sorted};
    \addlegendentry{$\beta$ DP10};
  \end{axis}
\end{tikzpicture}
\caption{Dynamic programming estimates of the worst-case confidence $\gamma$ with the detection threshold $\omega$ set to $0.125\mc$.
There is similar stretching of the graph as before.}
\label{fig:quantization-at-threshold-0.125}
\end{figure}

\begin{figure}[p]
\centering
\begin{tikzpicture}
  \begin{axis} [ylabel=$\gamma$, xlabel=$n$,
      xlabel near ticks,
      ylabel near ticks,
      xlabel shift={-2pt},
      ylabel shift={-2pt},
      xmin=2,
      ymin=0, ymax=1,
      legend pos=south east,
      legend style={nodes={scale=0.7, transform shape}},
      legend cell align={left},
      height=0.6*\columnwidth,
      width=0.98\columnwidth,
    ]
    \pgfplotstableread{data/threshold-0.5.tbl}{\sorted}
    \addplot[adp10] table [x=n, y=adp10] {\sorted};
    \addlegendentry{$\alpha$ DP10};
    \addplot[asymmetric] table [x=n, y=asymm.] {\sorted};
    \addlegendentry{$\alpha$ Asymmetric};
    \addplot[martingale] table [x=n, y=mart.] {\sorted};
    \addlegendentry{$\alpha$ Martingale};
    \addplot[bdp10] table [x=n, y=bdp10] {\sorted};
    \addlegendentry{$\beta$ DP10};
  \end{axis}
\end{tikzpicture}
\caption{Estimates of the worst-case confidence $\gamma$ with the detection threshold $\omega$ set to $0.5\mc$.
The dynamic programming approximations, $\alpha$ DP10 (\Cref{sec:alpha-dynamic-programming}) and $\beta$ DP10 (\Cref{sec:beta-dynamic-programming}), are with $10$ quanta ($\ct{Q}=10$).
The Asymmetric and the Martingale approximations for the $\alpha$-design are discussed in \Cref{sec:approximations}.}
\label{fig:threshold-0.5}
\end{figure}

\begin{figure}[p]
\centering
\begin{tikzpicture}
  \begin{axis} [ylabel=$\gamma$, xlabel=$n$,
      xlabel near ticks,
      ylabel near ticks,
      xlabel shift={-2pt},
      ylabel shift={-2pt},
      xmin=2,
      ymin=0, ymax=1,
      legend pos=south east,
      legend style={nodes={scale=0.7, transform shape}},
      legend cell align={left},
      height=0.6*\columnwidth,
      width=0.98\columnwidth,
    ]
    \pgfplotstableread{data/threshold-0.25.tbl}{\sorted}
    \addplot[adp10] table [x=n, y=adp10] {\sorted};
    \addlegendentry{$\alpha$ DP10};
    \addplot[asymmetric] table [x=n, y=asymm.] {\sorted};
    \addlegendentry{$\alpha$ Asymmetric};
    \addplot[martingale] table [x=n, y=mart.] {\sorted};
    \addlegendentry{$\alpha$ Martingale};
    \addplot[bdp10] table [x=n, y=bdp10] {\sorted};
    \addlegendentry{$\beta$ DP10};
  \end{axis}
\end{tikzpicture}
\caption{With the lower detection threshold of $\omega = 0.25\mc$, it becomes apparent that the $\beta$-design needs a significant number of samples to achieve non-zero confidence.}
\label{fig:threshold-0.25}
\end{figure}

\begin{figure}[p]
\centering
\begin{tikzpicture}
  \begin{axis} [ylabel=$\gamma$, xlabel=$n$,
      xlabel near ticks,
      ylabel near ticks,
      xlabel shift={-2pt},
      ylabel shift={-2pt},
      xmin=2,
      ymin=0, ymax=0.6,
      legend pos=north west,
      legend style={nodes={scale=0.7, transform shape}},
      legend cell align={left},
      height=0.6*\columnwidth,
      width=0.98\columnwidth,
    ]
    \pgfplotstableread{data/threshold-0.125.tbl}{\sorted}
    \addplot[adp10] table [x=n, y=adp10] {\sorted};
    \addlegendentry{$\alpha$ DP10};
    \addplot[asymmetric] table [x=n, y=asymm.] {\sorted};
    \addlegendentry{$\alpha$ Asymmetric};
    \addplot[martingale] table [x=n, y=mart.] {\sorted};
    \addlegendentry{$\alpha$ Martingale};
    \addplot[bdp10] table [x=n, y=bdp10] {\sorted};
    \addlegendentry{$\beta$ DP10};
  \end{axis}
\end{tikzpicture}
\caption{We halve the detection threshold in \Cref{fig:threshold-0.25} to $\omega = 0.125\mc$, roughly stretching the graph four times along the $n$ axis.
We can see the $\alpha$ approximations become lower bounds for large enough $n$, where the simplifying assumption $\bar{\Epsilon} \indep \DIvec$ made in \Cref{sec:approximations} begins to hold to a sufficient degree.
This crossover point is at low confidence levels, which are not important for the benchmarking decision problem.}
\label{fig:threshold-0.125}
\end{figure}

\section{Experiments}
\label{sec:experimental-results}

In this section, we empirically evaluate the worst-case confidence $\gamma$ of our experimental designs under adversarial noise, assuming $\ct{F}=2$.
We utilize the dynamic programming (DP) formulation to compute numerical bounds and validate the analytical approximations derived in \Cref{sec:approximations}.

\subsection{Impact of Quantization}

Because the dynamic programming approach relies on discretizing the state space, it inherently provides an upper bound on the true worst-case confidence.
In \Crefrange{fig:quantization-at-threshold-0.5}{fig:quantization-at-threshold-0.125}, we verify the effect of the number of quanta on the tightness of this bound.
We compute the DP estimates for the $\alpha$-design (\Cref{sec:alpha-dynamic-programming}) and the $\beta$-design (\Cref{sec:beta-dynamic-programming}) using 1, 5, and 10 quanta.

We expect the sequence of quantized DP estimates to decrease towards the true worst-case confidence.
However, the marginal improvement diminishes rapidly; the difference between the bounds computed with 5 and 10 quanta is already negligible across all tested sample sizes $n$.
This justifies the use of the 10-quanta estimate (DP10) as a highly accurate proxy for the true worst-case confidence in the remainder of our experiments.
Furthermore, across these three figures, as the detection threshold $\omega$ is progressively halved from $0.5\mc$ to $0.125\mc$, the graphs stretch approximately by a factor of 4 along the horizontal axis, empirically confirming the $\bigO(n^{\negone/2})$ scaling relationship.

\subsection{Validation of Analytical Approximations}

Next, in \Crefrange{fig:threshold-0.5}{fig:threshold-0.125}, we empirically validate the Asymmetric and the Martingale approximations at various fixed detection thresholds.
We benchmark these analytical bounds against the DP10 estimates.

We consistently observe that the Asymmetric approximation provides a much tighter bound than the Martingale approximation.
At higher sample sizes, where the simplifying assumption $\bar{\Epsilon} \indep \DIvec$ begins to hold sufficiently, the analytical approximations cross over to become true lower bounds.
However, as demonstrated in \Cref{fig:threshold-0.125}, this crossover point occurs at relatively low confidence levels, which are generally irrelevant for rigorous benchmarking decisions.

These figures also highlight the finite-sample vulnerabilities of the $\beta$-design.
Particularly at lower detection thresholds (e.g., $\omega = 0.25\mc$ in \Cref{fig:threshold-0.25}), the $\beta$-design requires a substantially larger number of samples to achieve strictly positive confidence compared to the $\alpha$-design.

\subsection{Asymptotic Behavior at Constant $K$}

Finally, to isolate the asymptotic convergence from the absolute threshold scaling, we evaluate the designs while holding the scale-invariant parameter $K = \omega \sqrt{n} / \mc$ constant.
In \Crefrange{fig:k-1}{fig:k-6} (see \Cref{sec:figures-for-experimental-results}), we plot the confidence for $K \in \{1, 2, \dots, 6\}$.
By definition, the Asymmetric and Martingale approximations remain perfectly constant for a given $K$.
In contrast, the true worst-case confidences (captured by the DP10 curves) are asymptotically $\bigO_p(\sqrt{n})$ and slowly flatten out towards the analytical limits as $n \to \infty$.

This constant-$K$ perspective clearly illustrates how exploitability depends ont the sample size.
For instance, at $K=2$ (\Cref{fig:k-2}) and $n=2$, the limit can only be exceeded if the adversary successfully guesses the order of the two single runs, capping the confidence at 0.5.
As $n$ increases while $K$ remains constant, the $\alpha$-design successfully stabilizes around this level because it leaks minimal useful information to the adversary.
Conversely, the $\beta$-design's sampling without replacement provides exploitable information, and its confidence decreases.

Ultimately, these plots highlight how quickly real performance is dominated by the asymptotic rate.
Because the analytical approximations provide reliable lower bounds at all but the smallest sample sizes and lowest confidence levels, relying on the DP approximation is necessary only in its computationally feasible range.

\subsection{Number of Runs Required}

From \Cref{fig:threshold-0.5}, we can see that the $\alpha$-design achieves detection threshold $0.5\mc$ with confidence 0.9 at $n=60$, 30 runs for each program.
From this point on, we can thus expect the number of runs required to approximately quadruple for each halving the detection threshold.
The single-block $\beta$-design requires significantly more runs to achieve the same confidence.

\section{Further Considerations}
\label{sec:further-considerations}

\subsection{Meanwhile in Classical Experiment Design}
\label{sec:meanwhile-in-experimental-design}

Our approach operates within the framework of randomized controlled experiments, which traditionally rely on the Stable Unit Treatment Value Assumption (SUTVA, \citet{cox1958planning,angrist1996identification}) to identify the Average Treatment Effect (ATE).
In our context, the ATE corresponds to $\DT(\fgc)$.
It is instructive to contrast our assumptions with SUTVA based on the definition of the experimental ``unit.''
\begin{itemize}
\item If we view the \emph{individual run} as the unit, SUTVA posits \emph{non-interference}: the outcome of run $i$ depends only on its own treatment ($f_i$) and not on the history ($\vect{f}_{<i}$).
In contrast, we explicitly acknowledge interference but impose an \emph{additivity} assumption, requiring the performance differential between programs to be constant while the bias introduced by the system state varies.

\item Viewing the \emph{configuration} as the unit (analogous to a patient), our framework resembles a \emph{repeated-measures design}.
Unlike in clinical trials, our units are infinitely reusable: the same configuration can be run multiple times.
Here, our assumption corresponds to the SUTVA \emph{consistency} requirement: the expected effect is stable across repeated applications of the same treatment.

\item If we view the \emph{uncontrolled state} as the unit, our designs share the vulnerabilities of A\narrowslash{}B testing in dynamic environments.
Because the system state evolves based on previous treatments, the experiment alters the state trajectory, rendering the sample non-representative.
In observational study methodology, such \emph{covariate shift} is typically addressed via re-weighting or post-stratification.
However, because the state in modern hardware is largely unobservable, these corrections are impossible.
Instead, our designs leverage symmetry to ensure that these state-dependent biases cancel out asymptotically, obviating the need to model the underlying distribution of strata.
\end{itemize}

\subsection{Concurrent Execution}

In our measurement model, programs run sequentially in a given configuration.
We can generalize this concurrent execution on multiple identical machines if we assume that the programs run on one have no effect on the others.
In this case, the measurements gathered can be simply averaged as if they came from a single machine.
However, if execution units have more tightly coupled states (e.g. a single multi-core machine), the interactions between concurrently running programs can depend on their alignment in time.
This can easily violate the crucial assumption that the uncontrolled state affects all programs the same way.

\subsection{The Single-Machine Case}
\label{sec:the-single-machine-case}

The decision problem $\argmin_{f \in \sF} l(f)$ is invariant to positive affine transformations of the loss (scaling and shifting).
\Cref{ass:additive-global-effect} fits this strictly, as the uncontrolled state introduces only an additive bias to the run time.
\Cref{ass:multiplicative-global-effect} is analogous, but it biases the logarithm of the run time.
Since the total loss is a weighted sum of per-configuration losses, an additive bias on the run times becomes an additive constant in the objective function, leaving the optimal choice of program $f$ unchanged.

However, simple invariance is lost if the scaling factor varies by configuration.
In general, if the multiplicative noise $\Alpha_c$ depends on $c$, the weighted sum does not reduce to a single global affine transformation.
Yet, in the specific case of a single machine (or a homogeneous cluster) -- where configurations share identical hardware and system software, differing only in the benchmark task or input -- it may be reasonable to assume the multiplicative noise is independent of the configuration.

Formally, suppose that in a $\delta$-experiment $L_i = \Alpha_i \Tausub_i + \Epsilonsub_i$, where $\Alpha_i > 0$ and $\Alpha_i \indep C$.
Then,
\begin{align*}
&\EB{L_i \given F_i = f\lsh, u_1}\\
&\qquad = \E_{C_i} \EB{L_i \given F_i = f\lsh, u_1, C = C_i}\\
&\qquad = \E_{C_i} \EB{\Alpha_i \Tausub_i + \Epsilonsub_i \given F_i = f\lsh, u_1, C = C_i}\\
&\qquad = \E_{C_i} \bigl[\EB{\Alpha_i \given u_1} \EB{\Tau \given \fc = C_i}\bigr] + \EB{\Epsilonsub_i \given u_1}\\
&\qquad = \EB{\Alpha_i \given u_1} \EB{\Tau \given f} + \EB{\Epsilonsub_i \given u_1}.
\end{align*}
It follows that $\EB{\TA_{f} \given u_1}$ is a positive affine transformation of $\EB{\Tau \given f}$, where the transformation parameters depend only on the noise $u_1$ and not on the program $f$.
Consequently, the decision problems in the scaled and unscaled domains are equivalent.

\subsection{Aggregation Over Configurations}

The preceding analysis focused on single-configuration estimates, $\LE(c,f)$.
We now address the estimation of expected performance over the full configuration space, $\sC$.
For finite and tractable $\sC$, this expectation is exact: $\sum_{c \in \sC} P(c) \LE(c,f)$.

In practice, however, $\sC$ is often effectively infinite or physically inaccessible.
More critically, the set of evaluable configurations is constrained by the manual effort required to construct them (e.g., authoring benchmarks or curating input datasets).
Consequently, standard practice restricts evaluation to a \emph{fixed} subset of configurations.
We offer further statistical interpretations of this practice:
\begin{itemize}
\item \textbf{Bias--variance tradeoff:} We accept a biased estimator (the fixed set) to eliminate the variance inherent in sampling configurations.
\item \textbf{Representative performance:} We assume that absolute performance on the fixed subset approximates the global expectation.
\item \textbf{Representative contrasts:} We assume that the \emph{performance differentials} between programs are preserved on the subset, even if absolute performance is not.
\end{itemize}

\subsection{Combining Experiments}

The results presented consider estimates $\smash{\LE(f)}$ and $\smash{\LE(g)}$ obtained from the same experiment.
In practice, we may have only a subset of $\sF$ available, as is the case in a continuous integration pipeline where all versions of a program up to the present are to be compared \citep{alcocer2015tracking, grambow2019continuous}.
We could then benchmark, for example, each new version against only the previous one, and to compare versions further apart, we would sum a chain of deltas across different experiments.
As long as the set of configurations is the same, and our assumptions hold, the deltas are additive \emph{in expectation}\footnote{Indicating different experiments with a subscript, $\E[\LE_1(f) - \LE_1(g) + \LE_2(g) - \LE_2(h)] = \E[\LE(f) - \LE(g)] + \E[\LE(g) - \LE(h)] = \E[\LE(f) - \LE(h)]$.} due to asymptotic unbiasedness, and we can plot performance changes relative to an arbitrary reference point.
The resulting estimator is consistent, but in the finite case, care should be taken because this procedure may accentuate experimental noise.
The resulting decrease in confidence can be accounted for by the Bonferroni correction (based on the chain length) in the worst case.

For plotting the performance over the course of development, we can instead benchmark all versions against the same reference point.
However, if distant versions are very different -- say, disk-based caching was added to a previously completely CPU-bound algorithm -- then the ceteris paribus assumptions may not hold.


\subsection{Differential Timing}

\citet{alexandrescu2015} mentions a benchmarking technique where the performance ratio of the baseline and the contender programs are estimated from two measurements: $t_{2nb}$, the run time of $2n$ runs of baseline, and $t_{nb+nc}$, the run time of $n$ runs of the baseline followed by $n$ runs of the contender.
Denoting the true runtimes with $b$ and $c$, the proposed formula to estimate the performance ratio is
\begin{align*}
\frac{t_{2nb}}{2t_{nb+nc} - t_{2nb}} \approx \frac{2nb}{2nb + 2nc - 2nb} = \frac{b}{c},
\end{align*}
with the idea that ``some overhead noises cancel out''.
The crucial difference from our proposal is that running the baseline $n$ times and then the contender $n$ times does not address the potential bias and non-vanishing variance caused by the uncontrolled state.

\subsection{Taking the Minimum}

Another option is to take the minimum measurement instead of the mean, based on the idea that the environment can only slow programs down.
\citet{chen2016robust} justify the minimum estimator and propose an automated scheme to find the optimal number of loop iterations (blocking) to balance timer resolution against noise probability, empirically demonstrating that this minimum estimator converges robustly to what they deem the true execution time.
However, their method implicitly relies on the noise process being memoryless and sparse.

\section{Conclusions}
\label{sec:conclusions}

We argued that the prevalence of uncontrolled states renders the estimation of individual program performance intractable.
In response, we formalized benchmarking as a decision problem and developed a largely self-contained framework based on blocked and simple randomized experiment designs.
We showed that focussing on deltas enables consistent estimation at an asymptotically optimal rate, sufficient to identify the fastest program while providing probabilistic guarantees in the finite regime.

We also analyzed alternative benchmarking methodologies (\Cref{sec:paired-benchmarking}, \Cref{sec:google-benchmark}) within the class of blocked experiments and found them theorectically less robust in the absence of strong assumptions about noise dynamics.
On the surface, these methods are appealing, as they empirically reduce the sample variance of the estimates; however, they account for neither the bias nor the asymptotic variance.
This oversight may render them inconsistent under unknown environmental dynamics.
Moreover, it precludes the development of probabilistic guarantees for the decision-making process.

Future work could focus on improving sample efficiency by relaxing the assumption of bounded noise magnitude.
Incorporating ancillary statistics offers a promising path forward.
Additionally, our theoretical results rely on the assumption that uncontrolled states affect all programs additively.
Where this does not hold, additional factors must be controlled \citep{mytkowicz2009producing}; detecting the presence of these factors, identifying them, and determining when they are significant remains an open challenge.

In summary, by enabling the reliable detection of small performance differences, our method supports the optimization of critical software infrastructure, including compilers, machine learning frameworks, and database systems.

\newpage

\section*{Acknowledgements}

We thank William Cohen for the valuable review.

{
  \ifdef{\groundskip}{}{\clearpage}
  \bibliography{paper}
  \bibliographystyle{plainnat}
  \ifdef{\groundskip}
        {\vskip\lastskip
         \addvspace{2\groundskip}}
        {}
}

\appendix

\clearpage

\section{Figures for Experimental Results}
\label{sec:figures-for-experimental-results}

\vspace{-\baselineskip}

\begin{figure}[H]
\centering
\begin{tikzpicture}
  \begin{axis} [ylabel=$\gamma$, xlabel=$n$,
      xlabel near ticks,
      ylabel near ticks,
      xlabel shift={-2pt},
      ylabel shift={-2pt},
      xmin=2,
      ymin=-0.02, ymax=0.2,
      ytick distance=0.1,
      legend columns=2,
      legend style={at={(0.96, 0.15)}, anchor=south east,
                    nodes={scale=0.7, transform shape}},
      legend cell align={left},
      height=0.6*\columnwidth,
      width=0.98\columnwidth,
    ]
    \pgfplotstableread{data/k-1.tbl}{\sorted}
    \addplot[adp10] table [x=n, y=adp10] {\sorted};
    \addlegendentry{$\alpha$ DP10};
    \addplot[asymmetric] table [x=n, y=asymm.] {\sorted};
    \addlegendentry{$\alpha$ Asymmetric};
    \addplot[martingale] table [x=n, y=mart.] {\sorted};
    \addlegendentry{$\alpha$ Martingale};
    \addplot[bdp10] table [x=n, y=bdp10] {\sorted};
    \addlegendentry{$\beta$ DP10};
  \end{axis}
\end{tikzpicture}
\vspace{-0.1\baselineskip}
\caption{Estimates of the worst-case confidence $\gamma$ for $K=1$, where $K = \omega \sqrt{n} / \mc$.
For a given $K$, the Asymmetric and Martingale approximations are constant, while both the $\alpha$ and $\beta$-designs are asymptotically $\bigO_p(\sqrt{n})$, so  their curves flatten out.
Note that the confidence for the $\beta$-design is constant zero here.
Also, the crossover point (as in \Cref{fig:threshold-0.125}) is at low confidences.}
\label{fig:k-1}
\end{figure}

\begin{figure}[H]
\centering
\begin{tikzpicture}
  \begin{axis} [ylabel=$\gamma$, xlabel=$n$,
      xlabel near ticks,
      ylabel near ticks,
      xlabel shift={-2pt},
      ylabel shift={-2pt},
      xmin=2,
      ymin=0.16, ymax=0.6,
      legend columns=2,
      legend style={at={(0.96, 0.15)}, anchor=south east,
                    nodes={scale=0.7, transform shape}},
      legend cell align={left},
      height=0.6*\columnwidth,
      width=1.02\columnwidth,
    ]
    \pgfplotstableread{data/k-2.tbl}{\sorted}
    \addplot[adp10] table [x=n, y=adp10] {\sorted};
    \addlegendentry{$\alpha$ DP10};
    \addplot[asymmetric] table [x=n, y=asymm.] {\sorted};
    \addlegendentry{$\alpha$ Asymmetric};
    \addplot[martingale] table [x=n, y=mart.] {\sorted};
    \addlegendentry{$\alpha$ Martingale};
    \addplot[bdp10] table [x=n, y=bdp10] {\sorted};
    \addlegendentry{$\beta$ DP10};
  \end{axis}
\vspace{-0.1\baselineskip}
\end{tikzpicture}
\caption{Estimates of the worst-case confidence $\gamma$ for $K=2$.
At $n=2$, we have $n_f = n_g = 1$ for both the $\alpha$ and $\beta$-designs and that $\omega=\smash{\sqrt{2}} \mc$.
This limit can only be exceeded if the adversarial noise process guesses the order of the single runs of $f$ and $g$, which has probability $0.5$.
As $n$ increases, the $\alpha$-design stabilizes around that level as there is minimal information to exploit.
On the other hand, the $\beta$-design gets increasingly more exploitable.}
\label{fig:k-2}
\end{figure}

\begin{figure}[H]
\centering
\begin{tikzpicture}
  \begin{axis} [ylabel=$\gamma$, xlabel=$n$,
      xlabel near ticks,
      ylabel near ticks,
      xlabel shift={-2pt},
      ylabel shift={-2pt},
      xmin=2,
      ymin=0.55, ymax=1,
      legend columns=2,
      legend style={at={(0.96, 0.93)}, anchor=north east,
                    nodes={scale=0.7, transform shape}},
      legend cell align={left},
      height=0.6*\columnwidth,
      width=0.98\columnwidth,
    ]
    \pgfplotstableread{data/k-3.tbl}{\sorted}
    \addplot[adp10] table [x=n, y=adp10] {\sorted};
    \addlegendentry{$\alpha$ DP10};
    \addplot[asymmetric] table [x=n, y=asymm.] {\sorted};
    \addlegendentry{$\alpha$ Asymmetric};
    \addplot[martingale] table [x=n, y=mart.] {\sorted};
    \addlegendentry{$\alpha$ Martingale};
    \addplot[bdp10] table [x=n, y=bdp10] {\sorted};
    \addlegendentry{$\beta$ DP10};
  \end{axis}
\end{tikzpicture}
\vspace{-0.1\baselineskip}
\caption{Estimates of the worst-case confidence $\gamma$ for $K=3$.}
\label{fig:k-3}
\end{figure}

\begin{figure}[H]
\centering
\begin{tikzpicture}
  \begin{axis} [ylabel=$\gamma$, xlabel=$n$,
      xlabel near ticks,
      ylabel near ticks,
      xlabel shift={-2pt},
      ylabel shift={-2pt},
      xmin=2,
      ymin=0.75, ymax=1,
      legend columns=2,
      legend style={at={(0.96, 0.07)}, anchor=south east,
                    nodes={scale=0.7, transform shape}},
      legend cell align={left},
      height=0.6*\columnwidth,
      width=0.98\columnwidth,
    ]
    \pgfplotstableread{data/k-4.tbl}{\sorted}
    \addplot[adp10] table [x=n, y=adp10] {\sorted};
    \addlegendentry{$\alpha$ DP10};
    \addplot[asymmetric] table [x=n, y=asymm.] {\sorted};
    \addlegendentry{$\alpha$ Asymmetric};
    \addplot[martingale] table [x=n, y=mart.] {\sorted};
    \addlegendentry{$\alpha$ Martingale};
    \addplot[bdp10] table [x=n, y=bdp10] {\sorted};
    \addlegendentry{$\beta$ DP10};
  \end{axis}
\end{tikzpicture}
\caption{Estimates of the worst-case confidence $\gamma$ for $K=4$.}
\label{fig:k-4}
\end{figure}

\begin{figure}[H]
\centering
\begin{tikzpicture}
  \begin{axis} [ylabel=$\gamma$, xlabel=$n$,
      xlabel near ticks,
      ylabel near ticks,
      xlabel shift={-2pt},
      ylabel shift={-2pt},
      xmin=2,
      ymin=0.92, ymax=1,
      legend columns=2,
      legend style={at={(0.96, 0.07)}, anchor=south east,
                    nodes={scale=0.7, transform shape}},
      legend cell align={left},
      height=0.6*\columnwidth,
      width=0.98\columnwidth,
    ]
    \pgfplotstableread{data/k-5.tbl}{\sorted}
    \addplot[adp10] table [x=n, y=adp10] {\sorted};
    \addlegendentry{$\alpha$ DP10};
    \addplot[asymmetric] table [x=n, y=asymm.] {\sorted};
    \addlegendentry{$\alpha$ Asymmetric};
    \addplot[martingale] table [x=n, y=mart.] {\sorted};
    \addlegendentry{$\alpha$ Martingale};
    \addplot[bdp10] table [x=n, y=bdp10] {\sorted};
    \addlegendentry{$\beta$ DP10};
  \end{axis}
\end{tikzpicture}
\caption{Estimates of the worst-case confidence $\gamma$ for $K=5$.}
\label{fig:k-5}
\end{figure}

\begin{figure}[H]
\centering
\begin{tikzpicture}
  \begin{axis} [ylabel=$\gamma$, xlabel=$n$,
      xlabel near ticks,
      ylabel near ticks,
      xlabel shift={-2pt},
      ylabel shift={-2pt},
      xmin=2,
      ymin=0.98, ymax=1,
      ytick distance=0.01,
      legend columns=2,
      legend style={at={(0.96, 0.07)}, anchor=south east,
                    nodes={scale=0.7, transform shape}},
      legend cell align={left},
      height=0.6*\columnwidth,
      width=0.98\columnwidth,
    ]
    \pgfplotstableread{data/k-6.tbl}{\sorted}
    \addplot[adp10] table [x=n, y=adp10] {\sorted};
    \addlegendentry{$\alpha$ DP10};
    \addplot[asymmetric] table [x=n, y=asymm.] {\sorted};
    \addlegendentry{$\alpha$ Asymmetric};
    \addplot[martingale] table [x=n, y=mart.] {\sorted};
    \addlegendentry{$\alpha$ Martingale};
    \addplot[bdp10] table [x=n, y=bdp10] {\sorted};
    \addlegendentry{$\beta$ DP10};
  \end{axis}
\end{tikzpicture}
\caption{Estimates of the worst-case confidence $\gamma$ for $K=6$.}
\label{fig:k-6}
\end{figure}

\newpage
\section{Proofs for Section Deltas to Dections}
\label{sec:proofs-for-deltas-to-decisions}

Here and in the following sections, we restate theorems and provide proof.
The proofs in this section accompany \Cref{sec:deltas-to-decision}.

\thmunambiguousregion*
\begin{proof}
The condition $\DT(\fg) < 0$ for all $g \neq f$ directly implies that $f$ is strictly faster than any other program.
Uniqueness follows from the antisymmetry of $\DT(\fg)$.

We now prove that the decision rule picks $f$ with probability at least $\gamma$.
By \Cref{rem:equivalence-thresholding}, $f$ is picked if $\de(\fg) \leq - \smash{Q_{\gammadecprime}}(g, f)$ for all $g \neq f$.
Expanding $\DE(\fg) = \DT(\fg) + Z(\fg)$, the condition for a given $g$ becomes
\begin{align*}
\DT(\fg) + Z(\fg) &\leq -Q_{\gammadecprime}(g, f).
\end{align*}
Substituting the upper bound for the true delta, $\DT(\fg) \leq S_{\gamma}(\fg) = -Q_{\gamma'}(\fg) - Q_{\gammadecprime}(g, f)$, the above condition is implied by
\begin{gather*}
-Q_{\gamma'}(\fg) - Q_{\gammadecprime}(g, f) + Z(\fg) \leq -Q_{\gammadecprime}(g, f),
\end{gather*}
which simplifies to $Z(\fg) \leq Q_{\gamma'}(\fg)$.
We need this sufficient condition to hold for all $\ct{F} - 1$ programs $g \neq f$.
By the definition of the quantile, $P(Z(\fg) \leq Q_{\gamma'}(\fg)) \geq \gamma'$ for any single $g$.
Taking a union bound, the probability that the condition fails for \textit{at least one} $g$ is at most
\begin{align*}
\sum_{g \neq f} (1 - \gamma') = (\ct{F} - 1) \frac{1 - \gamma}{\ct{F} - 1} = 1 - \gamma.
\end{align*}
Thus, the probability that the condition holds for all $g$ (and $f$ is picked) is at least $\gamma$.
\end{proof}

\thmdectodeconsistencyrate*
\begin{proof}
Let $E_c$ be the failure event $\DE^c(\fg) - \DT^c(\fg) > U_n^c$, where $P(E_c) \leq \alpha_c$.
For any part $\sS' \in \sS$, arbitrary dependencies may exist between configurations (\Cref{def:benchmark-suite-model}).
By Boole's inequality, the probability that at least one configuration in $\sS'$ fails is bounded by the sum of their individual failure probabilities: $P(\bigcup_{c \in \sS'} E_c) \leq \min(1, \sum_{c \in \sS'} \alpha_c)$.
Consequently, the probability that all configurations within $\sS'$ succeed is at least $1 - \min(1, \sum_{c \in \sS'} \alpha_c)$.
By the independence of parts, their success probabilities combine multiplicatively, yielding the stated probabilistic upper bound.

For the second claim, by the linearity of the estimators and the true deltas, we have $\DE(\fg) - \DT(\fg) = \sum_{c \in \sC} P(c) (\DE^c(\fg) - \DT^c(\fg))$.
If the success event $\DE^c(\fg) - \DT^c(\fg) \leq U_n^c$ occurs for all $c \in \sC$, then their convex combination guarantees $\DE(\fg) - \DT(\fg) \leq \sum_{c \in \sC} P(c) U_n^c = U_n$.
Since $\max(0, U_n^c) \leq K \phi(n)$ for some $K$ independent of $c$, and $\sum_{c \in \sC} P(c) = 1$, it follows that
\begin{align*}
\max(0, U_n)
&= \max\Bigl(0, \sum_{c \in \sC} P(c) U_n^c \Bigr) \leq \sum_{c \in \sC} P(c) \max(0, U_n^c) \\
&\leq K \phi(n) \sum_{c \in \sC} P(c) = K \phi(n).
\end{align*}
Thus, $\max(0, U_n) = \bigO(\phi(n))$.
\end{proof}

\thmuncertainruleconsistency*
\begin{proof}
By \Cref{thm:unambiguous-region}, the decision rule picks $f$ with probability at least $\gamma$ if $\DT(\fg) \leq S_{\gamma}(\fg)$ for all $g \neq f$, where $S_{\gamma}(\fg) = - Q_{\gamma'}(\fg) - Q_{\gammadecprime}(g, f)$.
By \Cref{thm:dec-to-de-consistency-rate}, both $Q_{\gamma'}$ and $Q_{\gammadecprime}$ vanish at rate $\bigO(\phi(n))$, thus $S_{\gamma}(\fg) \to 0$.

Since $f$ is the unique fastest program, $\DT(\fg) < 0$ for all $g \neq f$.
For any $\gamma < 1$, there exists an $N$ such that for all $n > N$, $S_{\gamma}(\fg) \geq \DT(\fg)$ holds for every $g \neq f$.
Consequently, the probability of picking $f$ is at least $\gamma$ for any $\gamma < 1$, which implies the limit is $1$.
\end{proof}

\newpage
\section{Proofs for Blocked Experiments}
\label{sec:proofs-for-blocked-experiments}

The following subsections parallel those of \Cref{sec:blocked-experiments} and provide proofs for the theorems stated there.

\subsection{Proofs for Single-Block Experiments}
\label{sec:proofs-for-single-block-experiments}

Here, we prove the theorems stated in \Cref{sec:single-block-experiments} about the contraction rates of the variance and the bias of the $\DE^c$ estimator.

To bound the bias and the variance of our estimators in single-block experiments, we explicitly construct the most adversarial noise processes $\Epsilonvec$\halfnegkern, which maximize $\abs{\E[\DIvec \cdot \Epsilonvec]}$ or $\Var[\DIvec \cdot \Epsilonvec]$ and are closely related to the bias and the variance of the estimator $\DE^c$ in blocked designs via the unscaled partial sum $\Sigma_i = \smash{\sum_{j=1}^i} \DI_j \Epsilonsub_j$ (thus $\Sigma_n = \DIvec \cdot \Epsilonvec = \DE^c(\fg) \frac{n}{\ct{F}}$).
We denote realizations of $\Sigma_i$ with $\sigma_i$.
For simplicity, we assume that $\mc = 1$, so $\abs{\Epsilonsub_i} \leq 1$ for all $i$; our results apply to other cases modulo rescaling.

\subsubsection{Worst-Case Variance in Single-Block Experiments}

Aiming to characterize the worst case variance $\Var[\DIvec \cdot \Epsilonvec] = \E[\Sigma_n^2] - \E[\Sigma_n]^2$\supkern, we focus on how to maximize $\E[\Sigma_n^2]$.
First, we show that optimal (i.e. worst-case) noise can only take extreme values, which goes by the name of bang--bang control in control theory.
Then, we identify the greedy strategy that maximizes the local expectation of $\Sigma_i^2$ given the state after run $i-1$.
Finally, we show that the greedy strategy is globally optimal and that $\E[\Sigma_n] = 0$ with it, so it also maximizes the variance.
Also, since there is a fixed number of $\DI_i = 0$ terms in a block, which affect neither the expectation nor the variance of $\Sigma_n = \DIvec \cdot \Epsilonvec$, without loss of generality, we assume $\ct{F} = 2$ in the proofs.

\begin{proposition}[Optimality of bang--bang control]
\label{thm:optimality-of-bang-bang}
In a delta design, for any choice of distribution for $\Epsilonvec$ that maximizes $\E[(\DIvec \cdot \Epsilonvec)^2]$, the noise must be extreme almost surely, that is, $P(\abs{\Epsilonsub_i} = 1) = 1$ for all $i \in [1, n]$.
\end{proposition}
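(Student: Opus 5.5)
Since the noise must be extreme for an \emph{arbitrary} choice of $i$, I would argue by a local perturbation: fix a distribution of $\Epsilonvec$ that maximizes $\E[\Sigma_n^2]$, pick a run $i$, and show that replacing $\Epsilonsub_i$ by a suitable random variable taking values in $\{-1,1\}$ does not decrease $\E[\Sigma_n^2]$. Moreover, it strictly increases it unless $\abs{\Epsilonsub_i}=1$ almost surely. The natural replacement conditions on everything observable to the adversary at time $i$, namely the history $\mathcal{H}_i = (\DI_{<i}, \Epsilonvec_{<i})$ (and $U_i$). Under the delta design, $F_i$, and hence $\DI_i$, is conditionally independent of $\Epsilonsub_i$ given $\mathcal{H}_i$, because $\Epsilonsub_i$ depends on $F_i$ only through $\vectrv{F}_{<i}$ (\Cref{fig:delta-design}). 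The key observation is convexity: once the conditional law of the future given $\mathcal{H}_i$ and $\Epsilonsub_i=e$ is fixed, $\E[\Sigma_n^2 \given \mathcal{H}_i, \Epsilonsub_i = e]$ is a convex function of $e$. Indeed, $\Sigma_n = \Sigma_{i-1} + \DI_i e + R$, where $R=\sum_{j>i}\DI_j\Epsilonsub_j$, so the conditional expectation is a quadratic in $e$ with leading coefficient $\E[\DI_i^2 \given \mathcal{H}_i]\geq 0$.

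The step I expect to be the main obstacle is that the future noise $\Epsilonsub_{>i}$ may depend on $\Epsilonsub_i$ itself, since the adversary's later strategy can react to the value it chose. I would handle this by writing any value $e\in[-1,1]$ as a mixture $e = p\cdot 1 + (1-p)\cdot(-1)$ with $p=(1+e)/2$. The adversary then plays $+1$ with probability $p$ and $-1$ otherwise, using fresh independent randomness. Afterwards it continues with the \emph{same} future strategy it would have used after observing $e$, that is, it feeds the future policy the original $e$ rather than the realized sign. The future policy is allowed to depend on auxiliary randomness, so this is legitimate in the unrestricted adversary model. With $R$'s conditional law fixed, the conditional second moment is
\begin{align*}
\E[(a + \DI_i x + R)^2 \given \mathcal{H}_i],
\end{align*}
where $a=\Sigma_{i-1}$. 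This is convex in $x$, so by Jensen's inequality the mixture yields at least the value at $x=e$. Strict convexity holds whenever $P(\DI_i\neq0\given\mathcal{H}_i)>0$, which is guaranteed by $\operatorname{supp}(F_i\given n)=\sF$ and the reduction to $\ct{F}=2$. The gain equals $\E[\DI_i^2\given\mathcal{H}_i](1-e^2)$, which is positive unless $\abs{e}=1$.

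To conclude, I would integrate this gain over $\mathcal{H}_i$ and the law of $\Epsilonsub_i$. Any optimizer with $P(\abs{\Epsilonsub_i}<1)>0$ for some $i$ could then be strictly improved, which is a contradiction. Because each modification is local, the argument applies to each $i$ separately, with no induction over $i$ needed, and yields $P(\abs{\Epsilonsub_i}=1)=1$ for all $i$. One caveat to check is that the modified process still respects the model: $\Epsilonsub_i$ must remain a function of $U_i$. Since $\sU$ is arbitrary, I would absorb the auxiliary coin and the carried value $e$ into an enlarged uncontrolled state, which \Cref{def:configuration-model} permits.
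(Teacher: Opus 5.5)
Your argument is correct and is essentially the paper's: perturb $\Epsilonsub_i$ to an extreme value while feeding the future noise strategy the original value, expand the square, and get a strict increase whenever $\abs{\Epsilonsub_i}<1$ with positive probability. The only difference is that the paper chooses the sign $\epsilon'=\pm1$ deterministically to make the cross term $2\delta\E[\Sigma_n\DI_i \given C]$ non-negative, whereas your mean-preserving randomization makes that term vanish, so the gain $\E[\DI_i^2\given\mathcal{H}_i](1-e^2)$ comes out directly and integrates cleanly over a possibly continuous law of $\Epsilonsub_i$.
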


\begin{proof}
Suppose that some choice of $\Epsilonvec$ maximizes $\E[\Sigma_n^2]$, but $P(\abs{\Epsilonsub_i} = 1) < 1$ for some $i$.
Then, there exists some $\divec_{<i}$ and $\sigma_{i-1}$ such that $P(\divec_{<i}, \sigma_{i-1}) > 0$ and $P(\abs{\Epsilonsub_i} < 1 \given \divec_{<i}, \sigma_{i-1}) > 0$.
Observe that due to the delta design's sequential structure, $\Epsilonvec$ also maximizes $\E[\Sigma_n^2 \given \divec_{<i}, \sigma_{i-1}]$.

We are going to show that if $\Epsilonsub_i = \epsilon \in (-1, 1)$, then its expectation can be increased under an alternative model with $\Epsilonsub_i = \epsilon'\! = \pm1$, so it cannot be optimal.
Denote the events $(\divec_{<i}, \sigma_{i-1}, \Epsilonsub_i = \epsilon)$ and $(\divec_{<i}, \sigma_{i-1}, \Epsilonsub_i = \epsilon')$ with $C$ and $C'$, respectively, and let $\delta = \epsilon' - \epsilon$.
In the alternative model, we keep $\DIvec_{i\leq} \given \divec_{<i}$ the same as in the original, and we let $\Epsilonvec_{i<} \given C'$ be the same as $\Epsilonvec_{i<} \given C$ in the original as if ignoring the clamping of $\Epsilonsub_i$ to $\epsilon'$ when generating $\Epsilonvec_{i<}$.
These two distributions define $\Sigma_n \given C'$ in the alternative model.
Denoting the expectation under the alternative model with $\E'$, we have that
\begin{align*}
\E'[\Sigma_n^2 \given C']
&= \E[(\Sigma_n + \DI_i \delta)^2 \given C]\\
&= \E[\Sigma_n^2 \given C] + \delta^2 + 2 \delta \E[\Sigma_n \DI_i \given C],
\end{align*}
where the sign of the third term is $\epsilon'\! \sign(\E[\Sigma_n \DI_i \given C])$, so at least one of $\epsilon'\! = 1$ and $\epsilon'\! = -1$ makes it non-negative.
Since $\epsilon \neq \epsilon'$\!, we have that $\delta^2 > 0$, so
\begin{align*}
\E'[\Sigma_n^2 \given C'] > \E[\Sigma_n^2 \given C],
\end{align*}
strictly increasing the expectation for any non-extreme $\epsilon$, which contradicts either that the original model was optimal or that it assigns non-zero probability to $\abs{\Epsilonsub_i} < 1$.
\end{proof}

\begin{definition}[Greedy noise]
In a delta design, we call the noise $\Epsilonvec$ greedy if $\Epsilonsub_1$ is drawn from a uniform distribution over $\{-1, 1\}$, while for $i \in [2, n]$, given any $\divec_{<i}$, $\sigma_{i-1}$ and $\epsilon_{i-1}$, with the shorthands $\psi^\sigma_i = \sign(\sigma_{i-1})$ and $\psi^\lambda_i = \sign(\E[\DI_i \given \divec_{<i}])$, the noise $\Epsilonsub_i = \psi^\sigma_i \psi^\lambda_i$ unless that is $0$, in which case $\Epsilonsub_i = \epsilon_{i-1}$ .
\end{definition}

Although $\Epsilonsub_i$ is conditioned only on $U_i$ in delta designs, the greedy strategy can be understood as encoding $\DIvec_{<i}$, $\Sigma_{i-1}$ and $\Epsilonsub_{i-1}$ in $U_i$.
As we next show, greedy noise myopically maximizes the squared distance from the origin in all steps as its name suggests.

\begin{proposition}[Greedy step optimality]
\label{thm:greedy}
In a delta design, choosing $\Epsilonsub_i$ greedily maximizes $\E[\Sigma_i^2 \given \divec_{<i}, \sigma_{i-1}]$ for all $i \in [1, n]$, $\divec_{<i}$ and $\sigma_{i-1}$.
\end{proposition}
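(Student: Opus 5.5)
The plan is to expand the one-step objective and optimize it pointwise. Write $\Sigma_i = \sigma_{i-1} + \DI_i \Epsilonsub_i$. Under the delta design, $F_i$ (hence $\DI_i$) depends only on $\vectrv{F}_{<i}$, and the noise $\Epsilonsub_i$ is chosen before $F_i$ is revealed. Consequently, once we condition on $\divec_{<i}$ and $\sigma_{i-1}$, the quantities $\Epsilonsub_i$ and $\DI_i$ are conditionally independent. Expanding the square gives
\begin{align*}
\E[\Sigma_i^2 \given \divec_{<i}, \sigma_{i-1}]
&= \sigma_{i-1}^2 + 2\sigma_{i-1}\,\E[\DI_i \given \divec_{<i}]\,\E[\Epsilonsub_i \given \divec_{<i}, \sigma_{i-1}] \\
&\quad + \E[\DI_i^2 \given \divec_{<i}]\,\E[\Epsilonsub_i^2 \given \divec_{<i}, \sigma_{i-1}].
\end{align*}
In the first two terms I use the conditional independence. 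In the last term I use only that the factors separate in the same way. All conditional distributions of $\DI_i$ are fixed by the sampling process, so the adversary controls only the law of $\Epsilonsub_i$ on $[-1,1]$.

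Next I maximize the two $\Epsilonsub_i$-dependent terms separately, since each is maximized by the same choice. The last term is at most $\E[\DI_i^2 \given \divec_{<i}]$, with equality iff $\abs{\Epsilonsub_i} = 1$ almost surely. This matches the bang--bang structure of \Cref{thm:optimality-of-bang-bang}, but here it is immediate because $\E[\Epsilonsub_i^2] \leq 1$. The middle term is linear in $\E[\Epsilonsub_i]$, so it is maximized by $\Epsilonsub_i = \sign(\sigma_{i-1})\sign(\E[\DI_i \given \divec_{<i}]) = \psi^\sigma_i \psi^\lambda_i$. That choice attains the bound $2\abs{\sigma_{i-1}}\abs{\E[\DI_i \given \divec_{<i}]}$ and is extreme, so it maximizes both terms simultaneously.

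The remaining case is $\psi^\sigma_i\psi^\lambda_i = 0$, which includes $i=1$ where $\sigma_0 = 0$. Then the middle term vanishes for every choice of $\Epsilonsub_i$, and any $\Epsilonsub_i \in \{-1,1\}$ is optimal. In particular, the greedy prescriptions (uniform on $\{-1,1\}$ at $i=1$, and $\epsilon_{i-1}$ otherwise) are optimal. Together these cases show that the greedy choice maximizes the one-step conditional expectation for every $i$, $\divec_{<i}$ and $\sigma_{i-1}$.

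The main obstacle is justifying the factorization, that is, the conditional independence of $\Epsilonsub_i$ and $\DI_i$ given $\divec_{<i}, \sigma_{i-1}$. The adversary may encode $\divec_{<i}$ and $\sigma_{i-1}$ in $U_i$, so $\Epsilonsub_i$ may depend on that history, but it carries no extra information about $F_i$. I would argue this from the graphical structure in \Cref{fig:delta-design}: $F_i \indep (\vectrv{T}_{<i}, C) \given \vectrv{F}_{<i}$, so every path from $U_i$ to $F_i$ passes through $\vectrv{F}_{<i}$. Conditioning on the full history $\vect{f}_{<i}$ then d-separates the two. To recover the statement conditioned only on $\divec_{<i}$, I would average over $\vect{f}_{<i}$, noting that in the $\ct{F}=2$ reduction $\divec_{<i}$ determines $\vect{f}_{<i}$.
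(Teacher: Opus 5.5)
Your proposal is correct and follows the paper's proof essentially step for step. Both expand $\Sigma_i^2$ and factor the cross term using the conditional independence of $\DI_i$ and $\Epsilonsub_i$ given the history. Both then note that $\abs{\Epsilonsub_i}=1$ maximizes the quadratic term, while $\Epsilonsub_i=\psi^\sigma_i\psi^\lambda_i$ maximizes the cross term, which vanishes when that product is zero. Your d-separation justification of the factorization, which the paper only asserts as a property of delta designs, is a useful addition.
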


\begin{proof}
Under the event $(\divec_{<i}, \sigma_{i-1})$,
\begin{align*}
\E[\Sigma_i^2]
&= \E[(\Sigma_{i-1} + \DI_i \Epsilonsub_i)^2]\\
&= \sigma_{i-1}^2 + \E[\DI_i^2 \Epsilonsub_i^2] + 2 \sigma_{i-1} \E[\DI_i \Epsilonsub_i]\\
&= \sigma_{i-1}^2 + \E[\DI_i^2 \Epsilonsub_i^2] + 2 \sigma_{i-1} \E[\DI_i] \E[\Epsilonsub_i],
\end{align*}
where we used that $\DI_i \indep \Epsilonsub_i \given \divec_{<i}$ in delta designs.
Here, the first term is constant, and the second term is maximized by any $\Epsilonsub_i \in \{-1, 1\}$.
As to the other non-constant term, $2 \sigma_{i-1} \E[\DI_i] \E[\Epsilonsub_i]$, it is maximized when $\E[\Epsilonsub_i \given \divec_{<i}, \sigma_{i-1}] = \psi^\sigma_i \psi^\lambda_i$, unless that is $0$, in which case it is always $0$ regardless of $\Epsilonsub_i$.
Since the greedy strategy meets these conditions, it maximizes $\E[\Sigma_i^2 \given \divec_{<i}, \sigma_{i-1}]$ for all $i \in [1, n]$, $\divec_{<i}$.
\end{proof}

\begin{assumption}[No flip]\label{ass:no-flip}
The more likely draw cannot flip sign in one step: $\Psi^\lambda_i \Psi^\lambda_{i+1} \neq -1$ under any $(\divec_{<i}, \sigma_{i-1})$.
\end{assumption}

Note that this assumption is true for all $\DIvec$ based on sampling without replacement, which includes all block designs.

\begin{proposition}[Greedy zero crossing]
\label{thm:greedy-zero-crossing}
In a delta design with greedy noise satisfying \Cref{ass:no-flip} after runs $[1, i - 1]$ ($i \in [1, n - 1]$), $\psi^\sigma_i \neq 0$ and $\psi^\lambda_i \neq 0$ implies $\Epsilonsub_{i+1} = \Epsilonsub_i$ under $(\divec_{<i}, \sigma_{i-1})$.
\end{proposition}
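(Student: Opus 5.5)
The plan is to unfold the greedy rule for two consecutive steps and show that the factor $\psi^\sigma\psi^\lambda$ does not change sign between step $i$ and step $i+1$.

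Under the event $(\divec_{<i}, \sigma_{i-1})$ with $\psi^\sigma_i \neq 0$ and $\psi^\lambda_i \neq 0$, the greedy definition gives $\Epsilonsub_i = \psi^\sigma_i \psi^\lambda_i$. For $\Epsilonsub_{i+1}$ we need the signs $\psi^\sigma_{i+1} = \sign(\sigma_i)$ and $\psi^\lambda_{i+1}$, where $\sigma_i = \sigma_{i-1} + \DI_i \Epsilonsub_i$. The claim reduces to
\begin{align*}
\psi^\sigma_{i+1}\psi^\lambda_{i+1} \in \{0,\ \psi^\sigma_i\psi^\lambda_i\}.
\end{align*}
The zero case suffices because the greedy rule then copies $\Epsilonsub_{i+1} = \epsilon_i$.

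The argument splits on the realized draw $\DI_i \in \{-1,0,1\}$.

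\textbf{Case $\DI_i = 0$.} Then $\sigma_i = \sigma_{i-1}$, so $\psi^\sigma_{i+1} = \psi^\sigma_i$.

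\textbf{Case $\DI_i = \psi^\lambda_i$.} Then $\DI_i\Epsilonsub_i = \psi^\sigma_i$, so $\abs{\sigma_i}$ grows away from zero and again $\psi^\sigma_{i+1} = \psi^\sigma_i$.

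\textbf{Case $\DI_i = -\psi^\lambda_i$.} Then $\sigma_i = \sigma_{i-1} - \psi^\sigma_i$, which moves toward the origin by at most one. Since the noise is bang--bang and $\sigma$ accumulates $\pm1$ or $0$ increments from a start of $0$, $\sigma_{i-1}$ is an integer with $\abs{\sigma_{i-1}} \geq 1$. Therefore $\sigma_i$ has the same sign as $\sigma_{i-1}$ or equals $0$. If it is zero, the copy rule applies and we are done. Otherwise the sign is preserved.

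So in every case $\psi^\sigma_{i+1} \in \{0, \psi^\sigma_i\}$. By \Cref{ass:no-flip}, $\psi^\lambda_{i+1} \in \{0, \psi^\lambda_i\}$. Together these give the displayed relation, and hence $\Epsilonsub_{i+1} = \Epsilonsub_i$.

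The main obstacle is the integrality claim in the third case, since it requires every $\Epsilonsub_j$ for $j < i$ to be exactly $\pm1$. I would first establish this by induction over the greedy construction. $\Epsilonsub_1 = \pm1$ by definition, and each later step produces either $\psi^\sigma\psi^\lambda = \pm1$ or a copy of the previous value. It follows that $\sigma_{i-1} \in \mathbb{Z}$.

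A secondary subtlety is that \Cref{ass:no-flip} must be applied conditionally on the realized $\DI_i$, i.e. to $\psi^\lambda_{i+1}$ under $(\divec_{\le i}, \sigma_i)$. I would read the assumption in that form, as the paper's remark about sampling without replacement suggests.
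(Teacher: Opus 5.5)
Your proposal is correct and follows the paper's proof. Both arguments get $\psi^\sigma_{i+1} \in \{0, \psi^\sigma_i\}$ from the integrality of the greedy partial sums together with $\sigma_{i-1} \neq 0$, get $\psi^\lambda_{i+1} \in \{0, \psi^\lambda_i\}$ from the no-flip assumption, and finish with the copy rule when the product vanishes. Your separate handling of the case $\DI_i = 0$ and your induction for integrality simply spell out what the paper covers by restricting to $\di_i \in \{-1, 1\}$ (i.e.\ $\ct{F}=2$) and asserting that all sums are integer.
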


\begin{proof}
Let $\di_i \in \{-1, 1\}$.
The greedy strategy deterministically sets $\epsilon_i = \psi^\sigma_i \psi^\lambda_i$ since neither sign is $0$.
Greedy noise is $\pm1$, so $\abs{\sigma_i - \sigma_{i-1}} = \abs{\di_i \epsilon_i} = 1$.
Furthermore, it follows that all sums are integer, so $\psi^\sigma_{i+1} \in \{\psi^\sigma_i, 0\}$ because $\sigma_{i-1} \neq 0$.
Similarly, from $\psi^\lambda_i \psi^\lambda_{i+1} \neq -1$ and $\psi^\lambda_i \neq 0$, we have $\psi^\lambda_{i+1} \in \{\psi^\lambda_i, 0\}$.
If neither $\psi^\sigma_{i+1}$ or $\psi^\lambda_{i+1}$ is zero, then $\psi_{i+1} = \psi^\sigma_{i+1} \psi^\lambda_{i+1} = \psi^\sigma_i \psi^\lambda_i = \psi_i$.
If either one is zero, then the greedy strategy repeats the noise from the most recent step where neither was zero, which is step $i$ in this case, so we have $\Epsilonsub_{i+1} = \Epsilonsub_i$ for all $\di_i$.
\end{proof}

\begin{proposition}[Worst-case noise]
\label{thm:worst-case-noise}
In any delta design, greedy noise that satisfies \Cref{ass:no-flip} maximizes $\Var[\DIvec \cdot \Epsilonvec]$.
\end{proposition}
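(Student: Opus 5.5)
The plan follows the outline announced at the start of this subsection: show that greedy noise globally maximizes $\E[\Sigma_n^2]$, and that under it $\E[\Sigma_n] = 0$. Since $\Var[\DIvec \cdot \Epsilonvec] \leq \E[\Sigma_n^2]$ for any noise, the greedy process then maximizes the variance too.

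\textbf{Step 1 (reduction).} By \Cref{thm:optimality-of-bang-bang}, I restrict attention to noise with $\Epsilonsub_i \in \{-1,1\}$ almost surely. In a delta design $\DI_i \indep \Epsilonsub_i \given \divec_{<i}, \sigma_{i-1}$, so
\begin{align*}
\E[\Sigma_n^2] = \E\Bigl[\sum_{i=1}^n \DI_i^2\Bigr] + 2 \sum_{i=1}^n \E\bigl[\Sigma_{i-1}\, \E[\DI_i \given \divec_{<i}]\, \E[\Epsilonsub_i \given \divec_{<i}, \sigma_{i-1}]\bigr].
\end{align*}
The first sum is fixed by the design: in a block it equals $2n/\ct{F}$. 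So the adversary only controls the cross terms, and each of these is at most $\E\bigl[\abs{\Sigma_{i-1}}\, \abs{\E[\DI_i \given \divec_{<i}]}\bigr]$.

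\textbf{Step 2 (global optimality).} I would show that greedy noise attains this termwise upper bound simultaneously for every $i$, and also maximizes each $\E\abs{\Sigma_{i-1}}\,\abs{\E[\DI_i\given\divec_{<i}]}$ over all admissible strategies. Pointwise attainment comes from \Cref{thm:greedy}: the greedy choice $\epsilon_i = \psi^\sigma_i\psi^\lambda_i$ makes the $i$-th cross term equal to $\abs{\sigma_{i-1}}\,\abs{\E[\DI_i\given\divec_{<i}]}$.

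The remaining task is to show greedy also maximizes the distribution of $\abs{\Sigma_{i-1}}$ where it matters. This is where \Cref{thm:greedy-zero-crossing} and \Cref{ass:no-flip} enter. While $\psi^\sigma$ and $\psi^\lambda$ are nonzero, greedy keeps $\epsilon$ constant, so $\abs{\Sigma}$ moves by $\pm1$ exactly according to whether the more likely label is drawn. I would argue by induction backwards in $i$, i.e.\ dynamic programming, with value function $W_i(\divec_{<i},\sigma_{i-1})$ equal to the maximal future contribution. The claim to establish is that $W_i$ depends on $\sigma_{i-1}$ only through $\abs{\sigma_{i-1}}$ and is nondecreasing in it.

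Given that monotonicity, the best move at step $i$ is the one that maximizes both the immediate term and the expected future $\abs{\Sigma_i}$. Greedy does exactly this: it pushes $\abs{\Sigma}$ outward with probability $\max(p,1-p)$, where $p$ is the probability of the more likely draw. No-flip ensures this choice stays consistent across steps, so there is no trade-off between present and future gain.

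I expect the monotonicity and symmetry of $W_i$ to be the main obstacle. I would attempt it with a coupling argument: compare states $\abs{\sigma}$ and $\abs{\sigma}+1$, and show that any strategy started from the smaller state can be mimicked from the larger one without decreasing the sum of future cross terms. The case $\sigma = 0$ needs separate handling via the tie-breaking rule of repeating $\epsilon_{i-1}$.

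\textbf{Step 3 (zero mean).} Greedy is symmetric under the global sign flip $\Epsilonvec \mapsto -\Epsilonvec$: $\Epsilonsub_1$ is uniform on $\{\pm1\}$, and every later choice is equivariant, because $\psi^\sigma$ flips with $\Sigma$ while $\divec$ is untouched. Hence $\Sigma_n \overset{d}{=} -\Sigma_n$, so $\E[\Sigma_n]=0$ and $\Var[\DIvec\cdot\Epsilonvec]=\E[\Sigma_n^2]$, which is maximal by Step 2. For an arbitrary noise process, $\Var \leq \E[\Sigma_n^2] \leq$ the greedy value, which completes the argument.
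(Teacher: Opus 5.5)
Your Steps 1 and 3 are sound and match the paper: the bang--bang reduction, then the sign-flip symmetry from the uniform $\Epsilonsub_1$, which gives $\E[\Sigma_n]=0$ and hence $\Var[\Sigma_n] \le \E[\Sigma_n^2] \le$ the greedy value. The gap is Step 2, and it goes beyond the monotonicity lemma you flag.

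First, the claim that greedy maximizes each $\E\bigl[\abs{\Sigma_{i-1}}\,\abs{\E[\DI_i \given \divec_{<i}]}\bigr]$ individually is false. Take $\ct{F}=2$, $\beta(10)$ and $i=3$. Greedy sets $\Epsilonsub_2 = \sign(\DI_1\Epsilonsub_1)\cdot(-\DI_1) = -\Epsilonsub_1$, so $\Sigma_2 = \Epsilonsub_1(\DI_1-\DI_2)$ while $L_2 = \DI_1+\DI_2$. One of the two is always zero, and since $\E[\DI_3 \given \divec_{<3}] = -L_2/8$, the term vanishes. Constant noise $\Epsilonsub_i \equiv 1$ gives $\Sigma_2 = L_2$ and the term $\E[L_2^2]/8 = 2/9 > 0$. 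So there is no termwise domination, and greedy can only win by trading terms against each other. Concretely, switching to the non-greedy choice at $i=2$ and playing greedily afterwards loses $4/9$ in the $i=2$ contribution to $\E[\Sigma_n^2]$ and regains exactly $4/9$ in the $i=3$ contribution.

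Second, for the same reason, evenness and monotonicity of $W_i$ in $\abs{\sigma}$ would not close the argument. The continuation value depends on which label was drawn, not only on $\abs{\Sigma_i}$. Let $s=\abs{\sigma_{i-1}}\ge 1$, let $q$ be the probability of the more likely label, and let $W^{+}$, $W^{-}$ be the continuations after the more and the less likely label. Greedy gives $qW^{+}(s+1)+(1-q)W^{-}(s-1)$; non-greedy gives $qW^{+}(s-1)+(1-q)W^{-}(s+1)$.

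Monotonicity makes both increments nonnegative, but it does not order $q\bigl[W^{+}(s+1)-W^{+}(s-1)\bigr]$ against $(1-q)\bigl[W^{-}(s+1)-W^{-}(s-1)\bigr]$. In fact $W^{-}$ tends to be the steeper one: the less likely label enlarges the imbalance $\abs{L}$ and hence all later weights $\abs{\E[\DI_j \given \divec_{<j}]}$. Greedy pairs the outward move with the imbalance-reducing draw, so a present/future trade-off does exist, and \Cref{ass:no-flip} by itself does not remove it.

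The missing idea is the paper's interchange argument, run inside a backward induction on the full value $V_i=\max\E[\Sigma_n^2\given\divec_{<i},\sigma_{i-1}]$. The cases $\sigma_{i-1}=0$ (evenness) and $\E[\DI_i\given\divec_{<i}]=0$ (indifference) are immediate. Otherwise, \Cref{thm:greedy-zero-crossing} gives $\Psi_{i+1}=\Psi_i$; this is where no-flip is actually used. Then ``non-greedy at $i$, then greedy'' has the same law as ``greedy at $i$, then non-greedy at $i+1$'', by exchangeability of $(\DI_i,\DI_{i+1})$ given $\divec_{<i}$. The induction hypothesis at step $i+1$ then shows that the non-greedy value at $i$ is at most the greedy value. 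This compares the two branches directly and needs no monotonicity in $\abs{\sigma}$.
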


\begin{proof}
Let $V_i(\divec_{<i}, \sigma_{i-1}) = \max_{\Epsilonvecsub_{i\leq}} \E[\Sigma_n^2 \given \divec_{<i}, \sigma_{i-1}]$ be the maximum squared distance achievable from the given state by any noise strategy for the remaining runs.
By \Cref{thm:optimality-of-bang-bang}, we restrict our attention to bang--bang strategies and show by backward induction that the greedy strategy maximizes $V_1(\emptyset, 0) = \max_{\Epsilonvec} \E[\Sigma_n^2]$.

In the inductive base cases $n+1$ and $n$, $V_{n+1}(\divec_{\leq n}, \sigma_n) = \sigma_n^2$ leaves no noise to choose, making all strategies optimal, while for $V_n$, \Cref{thm:greedy} establishes greedy optimality for choosing $\Epsilonsub_n$.

Inductive step.
Let $1 \leq i < n$, and suppose that greedy $\Epsilonsub_j$ is optimal for all $j > i$ in all states $\divec_{<j}$, $\sigma_{j-1}$.
Let $\sigma_{i-1}$ and $\divec_{<i}$ be any state, $\psi^\sigma_i = \sign(\sigma_{i-1})$ ($i \in [1, n]$), the direction from $\sigma_{i-1}$ leading away from zero, and $\psi^\lambda_i = \sign(\E[\DI_i \given \divec_{<i}])$, the more likely flip.

\begin{itemize}
\item If $\psi^\sigma_i = 0$ (i.e. $\sigma_{i-1} = 0$), then $\abs{\sigma_i} = 1$ under any bang--bang control.
By symmetry, $V_j(\divec_{<j}, \sigma_{j-1})$ is an even function of $\sigma_j$, yielding $V_{i+1}((\divec_{<i}, \di_i), 1) = V_{i+1}((\divec_{<i}, \di_i), -1)$.
Consequently, all bang--bang strategies are optimal.

\item
If $\psi^\lambda_i = 0$, then $P(\DI_i = 1 \given \divec_{<i}) = P(\DI_i = -1 \given \divec_{<i})$, so $\DI_i \Epsilonsub_i \given \divec_{<i}$ has the same distribution for any bang--bang $\Epsilonsub_i$.
Hence, once again, all bang--bang controls have the same value and are optimal.

\item This leaves the case where $\psi^\sigma_i \neq 0$ and $\psi^\lambda_i \neq 0$.
Note that $i \neq 1$ here because $i = 1$ implies $\sigma_0 = 0$ and $\psi^\sigma_1 = 0$.
First, we address the deterministic case.
With bang--bang strategies, there are only two values for $\Epsilonsub_i$ to consider: $-1$ and $1$.
One of these corresponds to the greedy choice and the other to the opposite, non-greedy choice.
We now show that the greedy choice is always at least as good as the non-greedy choice.
For $i \in [2, n]$, let $G_i$ and $N_i$ be the maximum achievable when using the greedy and the non-greedy strategy, respectively, to determine the noise $\Epsilonsub_i$ in run $i$:
\begin{align*}
&G_i(\divec_{<i}, \sigma_{i-1})\\
&\qquad= \E[V_{i+1}((\divec_{<i},\DI_i), \sigma_{i-1} + \DI_i \Psi_i) \given \divec_{<i}, \sigma_{i-1}]\\
&N_i(\divec_{<i}, \sigma_{i-1})\\
&\qquad= \E[V_{i+1}((\divec_{<i},\DI_i), \sigma_{i-1} - \DI_i \Psi_i) \given \divec_{<i}, \sigma_{i-1}],
\end{align*}
where $\Psi_i$ denotes the greedy noise.
We also define $G_{n+1}(\divec_{\leq n}, \sigma_n) = N_n(\divec_{\leq n}, \sigma_n) = \sigma_n^2$.

From $\Psi_{i+1} = \Psi_i$ by \Cref{thm:greedy-zero-crossing} and the exchangeability of $\DIvec$, we have that taking a non-greedy then a greedy step is the same as taking a greedy then a non-greedy step, but the induction hypothesis states that in the latter step, greedy is not worse than non-greedy, so it must be not worse in the former step, either.
Formally, under the event $(\divec_{<i}, \sigma_{i-1})$,
\begin{align*}
&N_i(\divec_{<i}, \sigma_{i-1})\\
&\qquad = \E[V_{i+1}((\divec_{<i},\DI_i), \sigma_{i-1} - \DI_i \Psi_i)]\\
\explainstep{by the induction hypothesis, $G_{i+1}$ is optimal}
&\qquad = \E[G_{i+1}((\divec_{<i}, \DI_i), \sigma_{i-1} - \DI_i \Psi_i)]\\
&\qquad = \E[G_{i+2}((\divec_{<i}, \DI_i, \DI_{i+1}), \sigma_{i-1} - \DI_i \Psi_i + \DI_{i+1} \Psi_{i+1})]\\
\explainstep{since $\Psi_i = \Psi_{i+1}$ and $\DIvec_{i\leq} \given \divec_{<i}$ is exchangeable}
&\qquad = \E[G_{i+2}((\divec_{<i}, \DI_i, \DI_{i+1}), \sigma_{i-1} + \DI_i \Psi_i - \DI_{i+1} \Psi_{i+1})]\\
&\qquad = \E[N_{i+1}((\divec_{<i}, \DI_i), \sigma_{i-1} + \DI_i \Psi_i)]\\
\explainstep{by the induction hypothesis again}
&\qquad \leq \E[G_{i+1}((\divec_{<i},\DI_i), \sigma_{i-1} + \DI_i \Psi_i)]\\
&\qquad = G_i(\divec_{<i}, \sigma_{i-1}).
\end{align*}
Since the greedy value is always at least as good as the non-greedy one, the deterministic greedy strategy is not worse than any non-deterministic mixture, either.
\end{itemize}

We have thus shown that if greedy is optimal for steps $j > i$, then it is optimal for $i$, which concludes the induction.
Hence, greedy is optimal for all $i \in [1, n]$ and maximizes $V_1(\emptyset, 0) = \max_{\Epsilonvec} \E[\Sigma_n^2]$.

Since greedy maximizes $\E[\Sigma_n^2]$ and $\E[\Sigma_n] = 0$ by the symmetry provided by the first step, it follows that greedy also maximizes $\Var[\Sigma_n] = \E[\Sigma_n^2] - \E[\Sigma_n]^2$.
\end{proof}

\begin{proposition}
\label{thm:beta-noise-var}
For the $\beta(n)$ design, $\Var[\DIvec \cdot \Epsilonvec] = \bigO(n)$.
\end{proposition}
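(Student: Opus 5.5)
The plan is to bound $\E[\Sigma_n^2]$ directly, since $\Var[\DIvec \cdot \Epsilonvec] \leq \E[\Sigma_n^2]$. I will use a one-step recursion that holds for \emph{any} admissible noise, so \Cref{thm:worst-case-noise} is needed only as reassurance and not in the argument itself. As in the preceding proofs, I take $\ct{F} = 2$ (the $\DI_i = 0$ runs do not move $\Sigma$) and $\mc = 1$, so $\abs{\Epsilonsub_i} \leq 1$. In the delta design, $\DI_i \indep \Epsilonsub_i \given \divec_{<i}, \sigma_{i-1}$. The one-step expansion from the proof of \Cref{thm:greedy} therefore gives, under $(\divec_{<i}, \sigma_{i-1})$,
\begin{align*}
\E[\Sigma_i^2] \leq \sigma_{i-1}^2 + 1 + 2\abs{\sigma_{i-1}}\,\abs[\big]{\E[\DI_i \given \divec_{<i}]}.
\end{align*}

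The key design-specific input is the drift of $\DI_i$ under sampling without replacement. Let $L_{i-1} = \sum_{j<i} \DI_j$. The remaining multiset has $n/2 - N_f$ copies of $f$ and $n/2 - N_g$ copies of $g$, so $\E[\DI_i \given \divec_{<i}] = -L_{i-1}/(n-i+1)$. Moreover, $L_{i-1}$ is a centred hypergeometric-type count with $\E[L_{i-1}^2] = (i-1)(n-i+1)/(n-1)$. Taking expectations and applying Cauchy--Schwarz, with $a_i = \E[\Sigma_i^2]$, yields
\begin{align*}
a_i \leq a_{i-1} + 1 + 2\sqrt{a_{i-1}}\,\sqrt{\frac{i-1}{(n-1)(n-i+1)}}.
\end{align*}

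Let $A = \max_i a_i$. Summing the recursion over $i$ gives $A \leq n + 2\sqrt{A}\, S_n$, where $S_n = \sum_{i=1}^{n} \sqrt{(i-1)/((n-1)(n-i+1))}$. Substituting $x = i/n$ turns $S_n$ into approximately $\sqrt{n}\int_0^1 \sqrt{x/(1-x)}\,dx = \tfrac{\pi}{2}\sqrt{n}$. The singularity at $x=1$ is integrable, and the last discrete term is only $O(1)$, so $S_n = \bigO(\sqrt{n})$. The inequality $A \leq n + \bigO(\sqrt{n})\sqrt{A}$ then forces $\sqrt{A} = \bigO(\sqrt{n})$, hence $A = \bigO(n)$ and $\Var[\DIvec\cdot\Epsilonvec] \leq a_n = \bigO(n)$.

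The main obstacle is the drift term, which grows like $\abs{L_{i-1}}/(n-i+1)$. It is large near the end of the block, where sampling without replacement is most predictable and an adversary gains the most. The crucial check is that the product of the two factors, $\sqrt{\E[L_{i-1}^2]}/(n-i+1) \approx \sqrt{(i-1)/(n(n-i+1))}$, still sums to only $\bigO(\sqrt{n})$, and that the endpoint terms, where $n-i+1$ is $1$ or $2$, contribute $\bigO(1)$. If that were to fail, I would fall back to computing $\E[\Sigma_n^2]$ explicitly for the greedy noise of \Cref{thm:worst-case-noise}. By \Cref{thm:greedy-zero-crossing}, that noise keeps its sign between zero crossings, so the explicit computation would proceed one excursion at a time.
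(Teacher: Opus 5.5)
Your proof is correct, and it takes a genuinely different and considerably shorter route than the paper.

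The paper first pins down the extremal noise. It gets bang--bang noise from \Cref{thm:optimality-of-bang-bang}, then greedy noise from \Cref{thm:worst-case-noise}, which needs \Cref{ass:no-flip}, \Cref{thm:greedy-zero-crossing} and an exchangeability-based backward induction. It then argues combinatorially that under greedy noise $\abs{\Sigma_n} \leq 2(Z^\sigma_n + Z^\lambda_n)$, where $Z^\sigma_n$ and $Z^\lambda_n$ count visits of $\Sigma$ and of $L$ to zero. Finally, it bounds the second moments of these counts by stochastic domination against a simple random walk and by the tied-down walk results of \Cref{thm:n-returns-in-simple-random-walk,thm:n-returns-in-tied-down-random-walk}.

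You instead run a second-moment recursion that holds for every admissible bounded noise process at once. Its only design-specific inputs are the drift $-L_{i-1}/(n-i+1)$ and the hypergeometric variance of $L_{i-1}$, both already used in the proof of \Cref{thm:beta-bias-rate}. Its only analytic input is \Cref{thm:some-sum-big-o-n}, which makes your estimate of $S_n$ exact rather than approximate: $S_n = (n-1)^{-1/2}\sum_{k=0}^{n-1}\sqrt{k/(n-k)} \leq \pi n/(2\sqrt{n-1})$. Solving $A \leq n + 2\sqrt{A}\,S_n$ gives the explicit bound $\E[\Sigma_n^2] \leq \mc^2\bigl(S_n + \sqrt{S_n^2 + n}\bigr)^2$, uniform over noise processes and asymptotically about $11.8\,n\,\mc^2$. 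So the fallback to greedy noise you mention is never needed.

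What your route buys:
\begin{itemize}
\item It is self-contained and yields an explicit constant; in effect it is the second-moment analogue of the paper's bias proof.
\item It extends verbatim to $\ct{F} > 2$: condition on $\vectrv{F}_{<i}$, and the variance of $L_{i-1}$ simply gets multiplied by $2/\ct{F}$.
\item It also covers any design whose conditional drifts $\E[\DI_i \given \vectrv{F}_{<i}]$ have $L^2$ norms summing to $\bigO(\sqrt{n})$.
\end{itemize}

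The paper's route buys structural insight instead. It identifies the extremal noise process, which is the same object the dynamic programming of \Cref{sec:beta-dynamic-programming} optimizes over, and it locates the adversary's gains at visits of $L$ and $\Sigma$ to zero. For the $\bigO(n)$ claim itself, however, it rests on considerably more delicate steps than yours.
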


\begin{proof}
We can restrict our attention to the greedy strategy because the no-flip assumption (\Cref{ass:no-flip}) holds for blocked designs; thus, by \Cref{thm:worst-case-noise}, greedy noise maximizes variance.
With the greedy algorithm, $\Sigma_i$ is a random walk that takes steps $\DI_i \Epsilonsub_i$ and is biased away from the simple random walk $W_i$ that takes $\pm1$ steps with uniform probability.
The greedy first step is randomized, and $\E[\Sigma_i] = 0$ for all $i$ due to the resulting symmetry.
Intuitively, $\Var[\Sigma_n] \geq \Var[W_n] = \bigO(n)$.
However, what we need to show is that it is \emph{at most} of that order.
Let the rv $Z^\sigma_n = \{i \in [1, n] \colon \Psi^\sigma_i = 0 \}$ and $Z^\lambda_n = \{i \in [1, n] \colon \Psi^\lambda_i = 0 \}$.

\begin{itemize}
\item
We first show that $\abs{\Sigma_n} \leq 2 (Z^\sigma_n + Z^\lambda_n)$.
Split runs (indices) into segments where $\Psi^\lambda_i \neq 0$, and observe that each segment has the same number of $1$s and $\neg1$s among its $\DI_i$.
Note that as long as $\psi^\sigma_i \neq 0$, the number of hits and misses are almost equal.
Almost because the first $\DI_i$ in the segment is going to be the rare one (equally likely at the time though), but we may get lucky and get it correct, turning a miss into a hit.
Thus, the number of hits is either equal to the number of misses or greater by $2$.
Note that when the sum is zero, a miss has the same effect as a hit: it increases the sum by one in absolute value.
So every time the sum is zero, a miss may be turned into a hit in effect.

\item
Next, we show that $\E[(Z^\sigma_n)^2] = \bigO(n)$.
Recall that $\Psi^\sigma_i = \sign(\Sigma_{i-1})$.
For $\beta$-designs, $\Sigma_i$ is a random walk whose distance from zero is biased to increase due to the greedy construction.
Specifically, conditional on any history, the probability of $\Sigma_i$ moving away from zero is always at least $1/2$, meaning $P(\abs{\Sigma_i} = \abs{\Sigma_{i-1}} + 1 \given \abs{\Sigma_{i-1}} > 0) \geq 1/2$.
Because this outward probability is bounded from below by that of a simple symmetric random walk $W_i$, the conditional distribution of the duration of any excursion from zero for $\Sigma_i$ is stochastically lower-bounded by the excursion duration of $W_i$.
By induction over the excursions, the time of the $k$-th return to zero for $\Sigma_i$ is stochastically larger than that for $W_i$.
Consequently, the total number of returns by step $n$ is stochastically smaller, meaning $Z^\sigma_n \preceq_{\mathrm{st}} Z_n^W$.
Since $\E[(Z_n^W)^2]$ is $\bigO(n)$ by \Cref{thm:n-returns-in-simple-random-walk}, it follows that $\E[(Z^\sigma_n)^2]$ is also $\bigO(n)$.

\item
Third, we show that $\E[(Z^\lambda_n)^2] = \bigO(n)$.
Let $L_i = \sum_{j=1}^i \DI_j$ for $i \in [0, n]$.
Observe that due to the balancedness of blocks, $L_i = 0$ iff $\smash{\Psi^\lambda_{i+1}} = 0$, so the number of times $L_i$ returns to zero is equal to $Z^\lambda_n$.
For $\beta$-designs, $L_i$ is a tied-down random walk (i.e. $L_n = 0$), and from \Cref{thm:n-returns-in-tied-down-random-walk} \citep{godreche2017longest}, we have that $\E[(Z^\lambda_n)^2] = \bigO(n)$.
\end{itemize}

Using the above,
\begin{align*}
&\Var[\DIvec \cdot \Epsilonvec \given n]\\
&\qquad = \E[(\DIvec \cdot \Epsilonvec)^2 \given n]\\
&\qquad \leq \E[(2 Z^\sigma_n + 2 Z^\lambda_n)^2]\\
&\qquad = 4 \PP[\big]{\E[(Z^\sigma_n)^2] + \E[(Z^\lambda_n)^2] + 2 \E[Z^\sigma_n Z^\lambda_n]}\\
\explainstep{by the geometric--arithmetic mean inequality}
&\qquad \leq 4 \PP[\big]{\E[(Z^\sigma_n)^2] + \E[(Z^\lambda_n)^2] + \E[(Z^\sigma_n)^2 + (Z^\lambda_n)^2]}\\
&\qquad = 8 \PP[\big]{\E[(Z^\sigma_n)^2] + \E[(Z^\lambda_n)^2]}\\
\explainstep{because both expectations are $\bigO(n)$}
&\qquad = \bigO(n).\qedhere
\end{align*}
\end{proof}

\thmbetavariancerate*
\begin{proof}
Connecting greedy to the overall estimator variance for the balanced case yields
\begin{align*}
\Var[\DE^c]
&= \Var[\DInormvec \cdot \vect{T}]
= \Var[\DInormvec \cdot \Tauvec + \DInormvec \cdot \Epsilonvec]\\
\end{align*}
Applying the law of total variance to the first term yields
\begin{align*}
\Var[\DInormvec \cdot \Tauvec]
&= \Var \E[\DInormvec \cdot \Tauvec \given \vectrv{F}] +
   \E \Var[\DInormvec \cdot \Tauvec \given \vectrv{F}]
\explainstep{$\Tauvec$ are i.i.d given $\vectrv{F}$, so $\E[\DInormvec \cdot \Tauvec \given \vectrv{F}]$ is constant}
&= \E \Var[\DInormvec \cdot \Tauvec \given \vectrv{F}] \\
\explainstep{by $\Tauvec$ being i.i.d given $\vectrv{F}$ and by balancedness}
&= \bigO(n^\negone).
\end{align*}
Second, $\Var[\DIvec \cdot \Epsilonvec]$ is maximized by greedy as per \Cref{thm:worst-case-noise}, and we have shown it to be $\bigO(n)$ in \Cref{thm:beta-noise-var}.
For blocked designs, $\Var[\DInormvec \cdot \Epsilonvec] = \ct{F}^2 n^{\neg 2} \Var[\DIvec \cdot \Epsilonvec]$, and consequently $\Var[\DInormvec \cdot \Epsilonvec] = \bigO(n^\negone)$.
Finally, the covariance is upper bounded by the geometric mean of the first two terms:
\begin{align*}
\Cov[\DInormvec \cdot \Tauvec\supkern,\, \DInormvec \cdot \Epsilonvec]
\leq \sqrt{\Var[\DInormvec \cdot \Tauvec] \Var[\DInormvec \cdot \Epsilonvec]}
= \bigO(n^\negone).
\end{align*}
Hence, $\Var[\DE]$ is $\bigO(n^\negone)$.
\end{proof}

\subsubsection{Worst-Case Bias in Single-Block Experiments}

Here, we prove the rate of decay for the bias, via the construction of the worst-case noise process, which also allows to bound the bias in the finite regime.

\thmbetabiasrate*
\begin{proof}
We first show that since the noise is additive, the bias is proportional to $\sum_{i=1}^n \E[\DI_i \Epsilonsub_i]$:
\begin{align*}
\E[\DE^c - \delta^c]
&= \E[(\TauAsub_f + \EAsub_f) - (\TauAsub_g + \EAsub_g) - \delta^c]\\
&= \E[\EAsub_f - \EAsub_g + (\TauAsub_f - \TauAsub_g) - \delta^c]
\explainstep{by $\E[\TauAsub_f - \TauAsub_g] =(l(\fc) - \E_U[\Epsilon]) - (l(g, c) - \E_U[\Epsilon])= \delta^c$}
&= \E[\EAsub_f - \EAsub_g]\\
&= \frac{\ct{F}}{n} \sum_{i=1}^n \E[\DI_i \Epsilonsub_i].
\end{align*}
The terms of the sum can be maximized individually, which also maximizes the absolute value of $\E[\DE^c - \delta^c]$.
The ceteris paribus assumptions (\Cref{ass:additive-global-effect,ass:multiplicative-global-effect}) posit that $\Epsilonsub_i$ is generated from $U_i$, but since $\DI_i \indep U_i \given \DIvec_{<i}$, it suffices to choose $\Epsilonsub_i$ based on $\DIvec_{<i}$ only.
The law of total expectation gives $\E[\DI_i \Epsilonsub_i] = \E[\E[\DI_i \Epsilonsub_i \given \DIvec_{<i}]]$, where the inner expectation is maximized by $\Epsilonsub_i = \sign(\E[\DI_i \given \DIvec_{<i}]) = -\sign(L_{i-1})$, where $L_{i-1} = \sum_{j=1}^{i-1} \DI_j$ is the sign of the more likely value of $\DI_i$ given the history so far.

With these optimal $\Epsilonsub_i$, the terms in the sum can be resolved as follows.
\begin{align*}
\E[\DI_i \Epsilonsub_i]
&= \E[\E[\DI_i \Epsilonsub_i \given \DIvec_{<i}]]\\
&= \E[\Epsilonsub_i \E[\DI_i \given \DIvec_{<i}]]\\
&= \E\Pigl[-\sign(L_{i-1}) \frac{-L_{i-1}}{n-i+1}\Pigr]\\
&= \frac{\E[\abs{L_{i-1}}]}{n-i+1}.
\end{align*}
Plugging these back into the sum yields
\begin{align*}
\abs[\big]{\E[\DE^c - \delta^c]}
&\leq \frac{\ct{F}}{n} \sum_{k=0}^{n-1} \frac{\E[\abs{L_k}]}{n-k}
\explainstep{by $\E[L_k] = 0$ and Jensen's inequality, $\E[\abs{L_k}] \leq \sqrt{\Var[L_k]}$}
&\leq \frac{\ct{F}}{n} \sum_{k=0}^{n-1} \frac{\sqrt{\Var[L_k]}}{n-k}
\explainstep{$(L_k+k)/2$ follows a hypergeometric distribution}
&= \frac{\ct{F}}{n} \sum_{k=0}^{n-1} \frac{1}{n-k} \sqrt{\frac{k(n-k)}{n-1}}\\
&= \frac{\ct{F}}{n \sqrt{n-1}} \sum_{k=0}^{n-1} \sqrt{\frac{k}{n-k}}.
\end{align*}
As shown in \Cref{thm:some-sum-big-o-n}, the sum is $\bigO(n)$, from which the claim follows immediately.
\end{proof}

\subsection{Proofs for Multi-Block Experiments}
\label{sec:proof-for-multi-block-experiments}

Here, we prove the theorems stated in \Cref{sec:multi-block-experiments}.

\thmmultiblockbounds*
\begin{proof}
We express $\smash{\DE^c}$ as the sum of the per-block terms using the balancedness of blocks.
Under $\cC$
\begin{align*}
\DE^c\noscriptspace
= \frac{\ct{F}}{\sum_i b_i} \DIvec \cdot \vectrv{T}
= \frac{\ct{F}}{\sum_i b_i} \sum_{k=1}^{\len{\vect{b}}} \DIvec^k\! \cdot \vectrv{T}^k\noscriptspace
= \sum_{k=1}^{\len{\vect{b}}} \frac{b_k}{\sum_i b_i} \DE^k\supkern,
\end{align*}
where $\DE^k\noscriptspace = \smash[t]{\frac{\ct{F}}{b_k}\DIvec^k\! \cdot \vectrv{T}^k}$\supkern.
Since $\smash{\DE^k}$ is normalized by the number of runs of $f$ and $g$ (they are all $\smash{\frac{b_k}{\ct{F}}}$), it is equivalent to $\smash{\DE^c} \given \beta(b_k)$, the estimator for configuration $c$ in a single-block experiment of size $b_k$ except for its initial uncontrolled state.
Furthermore, $\smash{\DE^c}$ is a convex combination of the $\DE^k$ terms, weighted by $P(k) = \frac{b_k}{\sum_i b_i}$.
Based on these observations, we can view individual blocks as per-configuration experiments, $P(k)$ as the prior over these pseudo-configurations, and $\DE^c$ as the suitewise estimate $\DE$.
To account for the dependencies between the initial states, we choose the trivial partition $\set{[1,k]}$ for $\sS$.
With these substitutions, \Cref{thm:dec-to-bounded-de} states that the bounds for the blocks combine linearly.
\end{proof}

\subsection{Proofs Combining Per-Configuration Bounds}
\label{sec:proofs-for-combining-per-configuration-bounds}

Here, we prove the theorems stated in \Cref{sec:combining-per-configuration-bounds}.

\thmdectoboundedde*
\begin{proof}
From $\DE = \sum_{c \in \sC} P(c) \DE^c$, then from $\delta = \E[\delta^C]$ and the law of total expectation with respect to $U^c_1$, we have
\begin{align*}
\abs[\big]{\EB{\DE} - \DT}
&=         \abs[\pig]{\sum_{c \in \sC} P(c)            \EB{\DE^c} - \DT} \\
&=         \abs[\pig]{\sum_{c \in \sC} P(c) \PP[\pig]{ \EB{\DE^c} - \DT^c}}\\
&\leq             \sum_{c \in \sC} P(c) \abs[\pig]{\EB{\DE^c} - \DT^c} \\
&=                     \sum_{c \in \sC} P(c) \abs[\pig]{ \EB[\big]{ \EB{\DE^c \given U^c_1} - \DT^c}}\\
&\leq             \sum_{c \in \sC} P(c) \EB[\pig]{ \abs[\big]{ \EB{\DE^c \given U^c_1} - \DT^c}}.
\end{align*}
Although $b^c$ is a bound in the context of the measurement model, with the initial state marginalized out, the distribution of $U_1$ (as provided by the execution environment) is unknown and arbitrary.
Hence, the bound must apply to any distribution, including deterministic ones, from which it follows that $\abs{ \E[\DE^c \given U^c_1] - \DT^c} \leq b^c$\supkern.
Taking the expectation and a convex combination with respect to $P(c)$ maintains this bound, establishing the first claim.

Next, using $\Var[X + Y] = \Var[X] + \Var[Y]$ for independent, and $\Var[X + Y] \leq (\Std[X] + \Std[Y])^2$ by Cauchy--Schwarz for non-independent rvs, we get
\begin{align*}
\VarB[\big]{\DE}
&= \VarB[\pig]{\sum_{\sS' \in \sS} \sum_{c \in \sS'} P(c) \DE^c}\\
&= \sum_{\sS' \in \sS} \VarB[\pig]{ \sum_{c \in \sS'} P(c) \DE^c}\\
&\leq \sum_{\sS' \in \sS} \PP[\Big]{ \sum_{c \in \sS'} P(c) \StdB[\big]{\DE^c}}^2\!.\qedhere
\end{align*}
\end{proof}

\thmbetamomentbasedquantile*
\begin{proof}
Applying \Cref{thm:bounded-estimator-to-upper-bound} to the noise random variable $Z = \DE - \DT$, we obtain the desired bound.
From \Cref{thm:beta-variance-rate} and \Cref{thm:beta-bias-rate}, combined through \Cref{thm:dec-to-bounded-de}, we establish that both $b(n)$ and $s(n)$ are $\bigO(n^{\negone/2})$.
Hence, their positive linear combination also scales as $\bigO(n^{\negone/2})$.
\end{proof}

\newpage
\section{Proofs for simple randomized experiments}
\label{sec:proofs-for-simple-randomized-experiments}

The proofs in this section accompany \Cref{sec:alpha-experiments}.

\thmalphanoisedifference*
\begin{proof}
We express $\EAsub_f$ and $\EAsub_g$ in terms of $E_n$, $L_n$ and $\Sigma_n$ as
\begin{align*}
\EAsub_f
= \frac{\sum_{i=1}^n \indic_{F_i = f} \Epsilonsub_i}{\sum_{i=1}^n \indic_{F_i = f}}
= \frac{(E_n + \Sigma_n)/2}{(n + L_n)/2}
= \frac{E_n + \Sigma_n}{n + L_n},\\
\EAsub_g
= \frac{\sum_{i=1}^n \indic_{F_i = g} \Epsilonsub_i}{\sum_{i=1}^n \indic_{F_i = g}}
= \frac{(E_n - \Sigma_n)/2}{(n - L_n)/2}
= \frac{E_n - \Sigma_n}{n - L_n}.
\end{align*}
Combining $\EAsub_f$ and $\EAsub_g$, we get
\begin{align*}
\EAsub_f - \EAsub_g
= \frac{E_n + \Sigma_n}{n + L_n} - \frac{E_n - \Sigma_n}{n - L_n}.
\end{align*}
Finding the common denominator yields the claim.
\end{proof}

\thmalphanoiserate*
\begin{proof}
We first prove the claim for the case $\ct{F} = 2$.
In this proof, we expediently ignore the condition $\vect{N} > 0$, whose probability tends to $1$ exponentially fast.

We first establish probabilistic rates for $L_n$, $E_n$ and $\Sigma_n$.
\begin{itemize}
\item By the central limit theorem, $L_n = \bigO_p(\sqrt{n})$.
\item $E_n$, a sum of bounded terms, is trivially $\bigO_p(n)$.
\item Since $\E[\DI_i \Epsilonsub_i \given \DIvec_{<i}, \Epsilonvecsub_{<i}] = \E[\DI_i] \E[\Epsilonsub_i \given \DIvec_{<i}, \Epsilonvecsub_{<i}] = 0$, the sum $\Sigma_n$ is a martingale with bounded increments $\abs{\DI_i \Epsilonsub_i} \leq \mc$.
By the Azuma--Hoeffding inequality, $\Sigma_n = \bigO_p(\sqrt{n})$.
\end{itemize}

For the numerator in \Cref{thm:alpha-noise-difference}, a valid asymptotic rate (an upper bound) is the highest of the stochastic orders of the terms $2 n \Sigma_n$ and $2 L_n E_n$.
Using the previously established rates, both terms scale as $n^{3/2}$, so the numerator is $\bigO_p(n^{3/2})$.

Moving on to the denominator, since $L_n = \sum_{i=1}^n \DI_i$ is a sum of bounded i.i.d. variables, the strong law of large numbers guarantees $L_n / n \to 0$ almost surely.
We use concentration inequalities (Hoeffding's or Chernoff's) to show that for any $d \in (0, 1)$,
\begin{align*}
P(\abs{L_n} \geq n \sqrt{1-d}) \to 0 \quad \text{exponentially as } n \to \infty.
\end{align*}
This implies that $P(L_n^2 < (1-d)n^2) \to 1$.
Therefore, the denominator is bounded below by $d n^2$ with probability approaching $1$, establishing the necessary stochastic stability for division:
$n^2 - L_n^2 = \Omega_p(n^2)$.

Finally, combining the numerator, which is bounded in probability at rate $n^{3/2}$, with the denominator, which is bounded away from zero at rate $n^2$, we get that
\begin{align*}
\EAsub_f - \EAsub_g = \frac{\bigO_p(n^{3/2})}{\Omega_p(n^2)} = \bigO_p(n^{3/2 - 2}) = \bigO_p(n^{\negone/2}),
\end{align*}
which completes the proof for $\ct{F} = 2$.

For $\ct{F} > 2$, we consider the subsequence of runs where $F_i \in \{\fg\}$.
Let $M$ be the random count of such runs.
Conditioned on these runs, the experiment is equivalent to the $\ct{F}=2$ case with sample size $M$, so the difference is $\bigO_p(M^{\negone/2})$.
Since $F_i$ is uniform, $M \sim \operatorname{Binomial}(n, 2/\ct{F})$, implying $M = \Theta_p(n)$.
Therefore, the rate $\bigO_p(n^{\negone/2})$ holds generally.
\end{proof}

\thmalphasqrtnconsistency*
\begin{proof}
Once again we ignore the exponentially decaying effect of the $\vect{N} > 0$ condition.
Expanding the definitions of $\DE^c$ and $\delta^c$ and regrouping, we get
\begin{align*}
\DE^c - \delta^c
&= (\EAsub_f - \EAsub_g) + [(\TauAsub_f - l(\fc)) - (\TauAsub_g - l(g, c))].
\end{align*}
From \Cref{thm:alpha-noise-rate}, we already know that the first term is $\bigO_p(n^{\negone/2})$.
As to the second term, we apply \Cref{thm:alpha-noise-rate} a second time, with $\EAsub_f = (\TauAsub_f - \E[\Tau \given F = f])$ and $\EAsub_g = (\TauAsub_g - \E[\Tau \given F = g])$, which have expectation zero.
Since the theorem holds for any noise process, it holds in our case where $\Tausub_i \given F_i = f$ are independent by the properties of delta designs.
Considering that $l(\fc) = \E[\Tau \given F = f] + C_U$ and $l(g, c) = \E[\Tau \given F = g] + C_U$, where $C_U = \E_U[\Epsilon]$, we have that the second term is also $\bigO_p(n^{\neg 1/2})$.
\end{proof}

\thmalphaapproximation*
\begin{proof}
We consider the noise difference
\begin{align*}
\EAsub_f - \EAsub_g = \frac{2 n \Sigma_n - 2 L_n E_n}{n^2 - L_n^2}
\end{align*}
from \Cref{thm:alpha-noise-difference}, where $\Sigma_n$ and $L_n$ are martingales.

First, we approximate the denominator.
Since the $\DI_i$ are i.i.d. Rademacher variables, $L_n$ follows a simple symmetric random walk with $L_n = \bigO_p(n^{1/2})$.
We approximate the denominator as $1/n^2$ with error
\begin{align*}
\frac{1}{n^2 - L_n^2} - \frac{1}{n^2}
= \frac{L_n^2}{n^2(n^2-L_n^2)}
= \bigO_p(n^{\neg 3}).
\end{align*}
Considering that the numerator is $\bigO_p(n^{3/2})$, we approximate the noise difference as
\begin{align*}
\EAsub_f - \EAsub_g
&= \frac{2n \Sigma_n - 2 L_n E_n}{n^2 - L_n^2}\\
&= \frac{2n \Sigma_n - 2 L_n E_n}{n^2} + \bigO_p\bigl(n^{\neg 3/2}\bigr)\\
&= \tilde{M} - \tilde{C} + \bigO_p\bigl(n^{\neg 3/2}\bigr),
\end{align*}
which proves the first claim.
Note that this approximation slightly underestimates the variance for finite $n$; thus, the results derived here apply only for sufficiently large $n$, which we determine empirically.

Next, we calculate $\Var[\tilde{C}]$.
By the independence assumption from the theorem statement and using $\E[L_n] = 0$ as well as $\E[L_n^2] = \E[(\sum_i \DI_i)^2] = \E[\sum_i \DI_i^2] = n$ (since $\E[\DI_i \DI_j] = 0$ for $i \neq j$), we express the variance of the correction term as
\begin{align*}
\Var[\tilde{C}]
&= \E[\tilde{C}^2] - \E[\tilde{C}]^2\\
&= \frac{4}{n^2} \PP[\big]{\E[L_n^2 \bar{\Epsilon}^2] - \E[L_n \bar{\Epsilon}]^2}\\
&= \frac{4}{n^2} \PP[\big]{\E[L_n^2] \E[\bar{\Epsilon}^2] - \E[L_n]^2 \E[\bar{\Epsilon}]^2}\\
&= \frac{4}{n} \E[\bar{\Epsilon}^2].
\end{align*}

Since independence implies $\Cov[\DI_i \Epsilonsub_i, \DI_j \bar{\Epsilon}] = 0$ for $i \neq j$, we rewrite the covariance as
\begin{align*}
\Cov[\tilde{M}, \tilde{C}]
&= \frac{4}{n^2} \Cov[\Sigma_n, L_n \bar{\Epsilon}]\\
&= \frac{4}{n^2} \Cov\BB[\pig]{\sum_i \DI_i \Epsilonsub_i, \pigl(\sum_j \DI_j\pigr) \bar{\Epsilon}}\\
&= \frac{4}{n^2} \sum_i \sum_j \Cov[\DI_i \Epsilonsub_i, \DI_j \bar{\Epsilon}]\\
&= \frac{4}{n^2} \sum_i \Cov[\DI_i \Epsilonsub_i, \DI_i \bar{\Epsilon}]\\
&= \frac{4}{n^2} \sum_i \bigl(\E[\underbrace{\DI_i^2}_{(\pm 1)^2} \Epsilonsub_i \bar{\Epsilon}] - \underbrace{\E[\DI_i \Epsilonsub_i]}_{0} \E[\DI_i \bar{\Epsilon}]\bigr)\\
&= \frac{4}{n^2} \sum_i \E[\Epsilonsub_i \bar{\Epsilon}]\\
&= \frac{4}{n} \E[\bar{\Epsilon}^2].
\end{align*}
Thus, we have proved that $\Cov[\tilde{M}, \tilde{C}] = \Var[\tilde{C}]$.

Substituting these evaluations back into the variance expansion, we have
\begin{align*}
\Var[\EAsub_f - \EAsub_g]
&= \Var\BB[\big]{\tilde{M} - \tilde{C} + \bigO_p\bigl(n^{\neg 3/2}\bigr)}\\
&= \Var[\tilde{M} - \tilde{C}] + \Var\BB[\big]{\bigO_p\bigl(n^{\neg 3/2}\bigr)}\\
&\hphantom{={}} + 2\Cov\BB[\big]{\tilde{M} - \tilde{C}, \bigO_p\bigl(n^{\neg 3/2}\bigr)}\\
&= \Var[\tilde{M} - \tilde{C}] + \bigO\bigl(n^{\neg 3}\bigr)\\
&\hphantom{={}} + 2\Cov\BB[\big]{\bigO_p\bigl(n^{\neg 1/2}\bigr), \bigO_p\bigl(n^{\neg 3/2}\bigr)}\\
&\leq \Var[\tilde{M} - \tilde{C}] + \bigO(n^{\neg 3}) + 2\sqrt{\bigO\bigl(n^\negone\bigr) \bigO\bigl(n^{\neg 3}\bigr)}\\
&= \Var[\tilde{M}] + \Var[\tilde{C}] - 2\Cov[\tilde{M}, \tilde{C}] + \bigO\bigl(n^{\neg 2}\bigr)\\
&= \Var[\tilde{M}] - \Var[\tilde{C}] + \bigO\bigl(n^{\neg 2}\bigr).
\end{align*}
Since $\Var[\tilde{C}] \geq 0$, the third claim follows.
\end{proof}

\approxmartingale*
\begin{proof}[Derivation]
While decreased variance does not strictly imply a tighter exponential tail, motivated by the variance bound $\Var[\EAsub_f - \EAsub_g] \lesssim \Var[\tilde{M}]$ from \Cref{thm:alpha-approximation}, we heuristically posit that the concentration of the noise difference is no worse than that of its martingale approximation $\tilde{M}$.
We therefore bound the success probability using the Azuma--Hoeffding inequality, yielding
\begin{align*}
P(\EAsub_f - \EAsub_g \leq \omega) \gtrsim P(\tilde{M} \leq \omega) \geq 1 - \exp\left(\frac{-n \omega^2}{8 \mc^2}\right).
\end{align*}
\let\qed\relax
\end{proof}

\approxasymmetric*
\begin{proof}[Derivation]
We begin with the simple product bound
\begin{align*}
P(\EAsub_f - \EAsub_g \leq \omega) \geq P(\tilde{M} \leq \omega/2) \; P(\tilde{C} \geq -\omega/2).
\end{align*}
Because it treats the two failure modes as independent, this bound is overly conservative, as we verified against the dynamic programming solution for $n < 200$.
Given the positive covariance between $\tilde{M}$ and $\tilde{C}$ established in \Cref{thm:alpha-approximation}, we might be tempted to write
\begin{align*}
P(\EAsub_f - \EAsub_g \leq \omega) \approx P(\tilde{M} \leq \omega) \; P(\tilde{C} \geq -\omega),
\end{align*}
but this simple form overlooks two important adversarial factors.
First, if the martingale $\frac{2}{n} \Sigma_i > \omega$ at any $i$, the adversary can set the rest of the noise to $0$ to guarantee that the final value $\tilde{M} > \omega$.
Hence, we need the probability that the running maximum of the martingale never exceeds the tolerance.
By the reflection principle and using that the worst-case noise for the martingale is bang--bang, this is
\begin{align*}
&P\bigl(\forall i \colon k \Sigma_i \leq \omega\bigr)
= 2 P(k W_n \leq \omega) - 1\\
&\qquad = 2 P(W_n \leq k^\negone \omega) - 1,
\end{align*}
where $W_n$ is the simple symmetric random walk and $k = 2\mc n^\negone$.

Second, the optimal adversarial noise to push $\tilde{C} = \frac{2}{n} L_n \bar{\Epsilon}$ below $-\omega$ is $\Epsilon_i = -\sign(L_n)\mc$ for all $i$.
Since zero crossings in a random walk are rare by the arcsine law, an adversary can predict the final sign of the imbalance with high probability early in the sequence and commit to a sign for $\bar{\mathcal{E}}$ that maximizes the error.
Thus, we must bound the \emph{magnitude} of the imbalance to ensure the correction term stays above $-\omega$.
Noting that $L_n$ also behaves as a simple symmetric random walk, we derive
\begin{align*}
&P(\tilde{C} \geq -\omega)
\geq P(k \abs{L_n} \leq \omega)\\
&\qquad = P(\abs{W_n} \leq k^\negone \omega)
= 2 P(W_n \leq k^\negone \omega) - 1.
\end{align*}
Leveraging the positive correlation to treat the product as a conservative lower bound, we combine the previously derived bounds and take a normal approximation.
Thus, our asymmetric bound is
\begin{align*}
&P(\EAsub_f - \EAsub_g \leq \omega)\\
&\qquad \geq P(\forall i \colon k \Sigma_i \leq \omega) \; P(\tilde{C} \geq -\omega)\\
&\qquad = \bigl[2 P(W_n \leq k^\negone \omega) - 1\bigr]^2\\
&\qquad \approx \biggl[2 \Phi\biggl(\frac{\sqrt{n}}{2\mc} \omega\biggr) - 1\biggr]^2.
\end{align*}
\let\qed\relax
\end{proof}

\approxforseveralprograms*
\begin{proof}[Derivation]
Under an $\alpha$-design with any $\omega > 0$, provided that the martingale approximation (\Cref{thm:martingale-approximation}) holds,
\begin{align*}
P(\EAsub_f - \EAsub_g \leq \omega) \gtrsim 1 - \bigl(p e^{\neg \ct{K}} + 1 - p\bigr)^n,
\end{align*}
where $\ct{K} = \omega^2/(8\mc^2)$ and $p = 2/\ct{F}$.
Note that with two programs, $p = 1$ and the above recovers the martingale bound.

We prove the proposed bound by expressing the tail probability as the expectation over $N_f + N_g \sim \operatorname{Binomial}(n, p)$ and using the two-program martingale bound, yielding
\begin{align*}
&P(\EAsub_f - \EAsub_g \geq \omega)\\
&\qquad = \sum_{m=0}^n P(N_f + N_g = m) P(\EAsub_f - \EAsub_g \geq \omega \given N_f + N_g = m)\\
&\qquad \lesssim \sum_{m=0}^n \binom{n}{m} p^m (1-p)^{n-m} e^{-m \ct{K}}\\
&\qquad = \sum_{m=0}^n \binom{n}{m} \bigl(p e^{\neg \ct{K}}\bigr)^m (1-p)^{n-m}\\
\explainstep{by the binomial theorem}
&\qquad = \bigl(p e^{\neg \ct{K}} + 1 -p\bigr)^n\noscriptspace.
\end{align*}
\let\qed\relax
\end{proof}

\newpage

\section{Mathematical Preliminaries}

\begin{lemma}[Bounded estimator to stochastic upper bound]
\label{thm:bounded-estimator-to-upper-bound}
Let $\smash{\hat{\theta}}$ be an estimator of $\theta \in \sR$ with bounded absolute bias and standard deviation.
That is, $\abs{\EB{\hat{\theta}} - \theta} \leq b$ and $\StdB{\hat{\theta}} \leq s$.
Then, for all $0 < p < 1$, we have that
\begin{align*}
P\PP[\big]{\hat{\theta} - \theta \leq m} \geq p,
\end{align*}
where $m = s \smash[b]{\sqrt{\frac{p}{1 - p}}} + b$.
\end{lemma}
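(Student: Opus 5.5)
The plan is to center the estimator and apply Cantelli's inequality (the one-sided Chebyshev inequality), then absorb the bias into the threshold. Write $\hat{\theta} - \theta = (\hat{\theta} - \EB{\hat{\theta}}) + (\EB{\hat{\theta}} - \theta)$. The second term is deterministic and bounded above by $b$. Hence the event $\hat{\theta} - \EB{\hat{\theta}} \leq s\sqrt{p/(1-p)}$ implies $\hat{\theta} - \theta \leq m$, and it suffices to show that this centered event has probability at least $p$.

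For the centered part, let $X = \hat{\theta} - \EB{\hat{\theta}}$, which has mean zero and variance $\sigma^2 \leq s^2$. Cantelli's inequality gives, for any $t > 0$,
\begin{align*}
P(X > t) \leq P(X \geq t) \leq \frac{\sigma^2}{\sigma^2 + t^2} \leq \frac{s^2}{s^2 + t^2}.
\end{align*}
The last step holds because $x \mapsto x/(x+t^2)$ is increasing in $x$. Setting $t = s\sqrt{p/(1-p)}$ gives $t^2 = s^2 p/(1-p)$, so
\begin{align*}
\frac{s^2}{s^2 + t^2} = \frac{1}{1 + p/(1-p)} = 1 - p.
\end{align*}
Therefore $P(X \leq t) \geq p$, and combining this with the bias step yields the claim.

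For completeness I would either cite Cantelli's inequality or prove it quickly. For any $u \geq 0$, apply Markov's inequality to $(X+u)^2$: this gives $P(X \geq t) \leq \EB{(X+u)^2}/(t+u)^2 = (\sigma^2 + u^2)/(t+u)^2$. Choosing $u = \sigma^2/t$ gives $\sigma^2/(\sigma^2+t^2)$.

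The only real obstacle is the degenerate case $s = 0$ (or $\sigma = 0$). Here $t = 0$ and Cantelli's inequality with $t > 0$ does not apply directly. But then $X = 0$ almost surely, so $P(X \leq 0) = 1 \geq p$ trivially. I would treat this case separately at the outset.
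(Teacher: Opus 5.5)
Your proposal is correct and follows the paper's proof essentially step for step: Cantelli's inequality on the centered estimator, monotonicity of $x \mapsto x/(x+k^2)$ to replace the variance by $s^2$, absorbing the bias bound $b$ into the threshold, and solving $s^2/(s^2+k^2) = 1-p$ for $k$. You also treat the degenerate case $s = 0$ separately; there $k = 0$ and Cantelli's inequality for $k > 0$ does not apply, and the paper's proof passes over this case silently.
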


\begin{proof}
The claim follows from Cantelli's inequality and a bias correction.
For any $k > 0$, the inequality states that
\begin{align*}
P\PP[\pig]{\hat{\theta} - \EB{\hat{\theta}} \geq k} \leq \frac{\VarB{\smash[t]{\hat{\theta}}}}{\VarB{\hat{\theta}} + k^2}.
\end{align*}
Because the function $x \mapsto x / (x + k^2)$ is monotonically increasing for $x > 0$, it then follows that
\begin{align*}
P\PP[\pig]{\hat{\theta} - \EB{\hat{\theta}} \geq k} \leq \frac{s^2}{s^2 + k^2}.
\end{align*}
Taking the complement yields
\begin{align*}
P\PP[\pig]{\hat{\theta} - \EB{\hat{\theta}} < k} \geq 1 - \frac{s^2}{s^2 + k^2}.
\end{align*}
We decompose the estimation error into the centered variable and the bias as
\begin{align*}
\hat{\theta} - \theta = (\hat{\theta} - \EB{\hat{\theta}}) + (\EB{\hat{\theta}} - \theta).
\end{align*}
By assumption, $\EB{\hat{\theta}} - \theta \leq b$, so the event $\hat{\theta} - \EB{\hat{\theta}} < k$ implies the event $\hat{\theta} - \theta < k + b$, which ensures that
\begin{align*}
P\PP[\pig]{\hat{\theta} - \theta < k + b} \geq P\PP[\pig]{\hat{\theta} - \EB{\hat{\theta}} < k}.
\end{align*}
Combining the inequalities and relaxing the strict inequality to a non-strict one yields
\begin{align*}
P\PP[\pig]{\hat{\theta} - \theta \leq k + b} \geq 1 - \frac{s^2}{s^2 + k^2}.
\end{align*}
We set the right-hand side to equal $p$ and solve for $k$, getting
\begin{align*}
k = s \sqrt{\frac{p}{1 - p}}.
\end{align*}
Substituting this $k$ yields the desired form
\begin{align*}
P\PP[\bigg]{\hat{\theta} - \theta \leq s \sqrt{\frac{p}{1 - p}} + b} \geq p,
\end{align*}
which concludes the proof.
\end{proof}

The following lemma basically restates results from \citet{katzenbeisser1986note,katzenbeisser1986alternative} and (2.49) from \citep{godreche2017longest} for the first two uncentered moments of the number of returns to the origin in tied-down simple random walks.
Note that although the cited works state this result for independent rvs, the conditioning on balanced block counts (tying down) renders them only exchangeable, and this is what we require here.

\begin{lemma}[Number of returns in tied-down random walks]
\label{thm:n-returns-in-tied-down-random-walk}
Let $n$ be an even number.
Consider the tied-down random walk $(S_i)_0^n$, a sequence of partial sums of exchangeable rvs taking the values $\pm1$ subject to $S_n = 0$.
Let $Z_n = \abs{\{i \in [1, n] \colon S_i = 0\}}$, the number of returns to the origin.
Then,
\begin{align*}
\E[Z_n \given S_n = 0] = \bigO(\sqrt{n}),\\
\E[Z_n^2 \given S_n = 0] = \bigO(n).
\end{align*}
\end{lemma}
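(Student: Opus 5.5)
The plan is to reduce to a uniform random arrangement and then compute both moments by direct counting with Stirling-type bounds. Write $n = 2m$.

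\textbf{Reduction to a uniform arrangement.} For a finite exchangeable sequence of $\pm1$ values, conditioning on the counts of each value leaves the arrangement uniformly distributed. The event $S_n = 0$ fixes exactly $m$ steps of each sign. Therefore, under $S_n = 0$, the path is uniform over the $\binom{2m}{m}$ balanced arrangements. This is the only point where exchangeability, as opposed to independence, enters. From here on everything is elementary counting.

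\textbf{First moment.} Returns can only occur at even times. By linearity,
\begin{align*}
\E[Z_n \given S_n = 0] = \sum_{k=1}^{m} P(S_{2k} = 0 \given S_{2m} = 0) = \sum_{k=1}^{m} \frac{\binom{2k}{k}\binom{2m-2k}{m-k}}{\binom{2m}{m}}.
\end{align*}
I will use the standard bounds $\frac{4^k}{2\sqrt{k}} \leq \binom{2k}{k} \leq \frac{4^k}{\sqrt{\pi k}}$ for $k \geq 1$, together with $\binom{0}{0} = 1$. With these, each summand is at most a constant times $\sqrt{m}/\sqrt{k\,\max(1, m-k)}$. Comparing the sum $\sum_{k=1}^{m-1} (k(m-k))^{-1/2}$ to the integral $\int_0^1 (x(1-x))^{-1/2}\,dx = \pi$ shows it is $O(1)$. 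The endpoint term $k = m$ contributes exactly $1$. Hence $\E[Z_n \given S_n = 0] = O(\sqrt{m}) = O(\sqrt{n})$.

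\textbf{Second moment.} I will expand $Z_n^2 = Z_n + 2\sum_{1 \leq j < k \leq m} \indic_{S_{2j}=0}\indic_{S_{2k}=0}$. The joint probability factorises by counting the three independent segments:
\begin{align*}
P(S_{2j} = 0, S_{2k} = 0 \given S_{2m} = 0) = \frac{\binom{2j}{j}\binom{2k-2j}{k-j}\binom{2m-2k}{m-k}}{\binom{2m}{m}}.
\end{align*}
With the same binomial bounds, this is at most $C\sqrt{m}\,\big(j\,(k-j)\,\max(1, m-k)\big)^{-1/2}$.

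\textbf{Main obstacle: summing the double sum uniformly.} I will split off the boundary case $k = m$ and treat the interior $k < m$ separately.
\begin{itemize}
\item For $k = m$, the sum is $\sum_j (j(m-j))^{-1/2} = O(1)$, as in the first-moment computation.
\item For $k < m$, $\sum_{j<k<m} (j(k-j)(m-k))^{-1/2}$ is a Riemann sum of order $m^2 \cdot m^{-3/2}$ times the finite Dirichlet-type integral $\iint_{0<x<y<1} (x(y-x)(1-y))^{-1/2}\,dx\,dy$, which equals $\pi^{3/2}/\Gamma(3/2) = 2\pi$. The sum is therefore $O(\sqrt{m})$.
\end{itemize}
To make the comparison rigorous despite the singularities, I will use the fact that $t \mapsto t^{-1/2}$ is decreasing, so each lattice term is bounded by an integral over an adjacent cell. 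This costs only a constant factor. Multiplying by the prefactor $\sqrt{m}$ gives a double sum of $O(m)$. Adding the first moment, $O(\sqrt{n})$, yields $\E[Z_n^2 \given S_n = 0] = O(n)$.

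As a fallback I would cite the exact formulas for these moments in \citet{katzenbeisser1986note} and (2.49) of \citet{godreche2017longest}, which are valid here by the reduction in the first step.
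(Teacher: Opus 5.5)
Your proof is correct, but it is not the paper's route. The paper gives no argument of its own. It cites the exact moment formulas of Katzenbeisser--Panny and eq.~(2.49) of Godr\`eche for the tied-down simple random walk, and merely asserts that exchangeability is ``what we require here.''

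You do two things the paper does not.
\begin{enumerate}
\item You make the transfer explicit. An exchangeable $\pm1$ sequence conditioned on $S_n=0$ is uniform over the $\binom{2m}{m}$ balanced arrangements. That is exactly the law of an i.i.d.\ walk conditioned on $S_n=0$, and it is precisely the fact that licenses the paper's citation.
\item You bound both moments directly, using hypergeometric counting, the central-binomial bounds, and Riemann-sum comparison with the Beta integral $\pi$ and the Dirichlet integral $2\pi$.
\end{enumerate}

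What each route buys: the citation route yields exact formulas and sharp constants. For example, $\E[Z_n \given S_n=0] = 4^m/\binom{2m}{m} - 1 \sim \sqrt{\pi m}$, which also follows in one line from $\sum_{k=0}^{m}\binom{2k}{k}\binom{2m-2k}{m-k} = 4^m$. Your route is self-contained and checkable but gives only orders of magnitude. That is all the lemma claims and all that is used downstream, in Proposition~\ref{thm:beta-noise-var} and Lemma~\ref{thm:n-returns-in-simple-random-walk}.

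One small simplification for the interior double sum: write $a=j$, $b=k-j$, $c=m-k$. First sum over $a+b=m-c$; this is bounded uniformly, e.g.\ by $\pi$ via convexity of $x \mapsto (x(N-x))^{-1/2}$, exactly as in your first-moment estimate. Then sum over $c$, using $\sum_{c \leq m} c^{-1/2} = O(\sqrt{m})$. This avoids the two-dimensional cell comparison. There, because $a+b+c=m$ is fixed, not all three factors can be shifted in the favourable direction at once, and one of them costs the constant factor you allude to.
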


\begin{lemma}[Number of returns in simple random walks]
\label{thm:n-returns-in-simple-random-walk}
Let $n \in \sN$.
Consider the tied-down random walk $(S_i)_0^n$, a sequence of partial sums of independent rvs taking the values $\pm1$.
Let $Z_n = \abs{\{i \in [1, n] \colon S_i = 0\}}$, the number of returns to the origin.
Then,
\begin{align*}
\E[Z_n] = \bigO(\sqrt{n}),\\
\E[Z_n^2] = \bigO(n).
\end{align*}
\end{lemma}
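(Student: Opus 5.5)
The plan is to write $Z_n$ as a sum of indicators and bound its first two moments by explicit sums of return probabilities. I treat the tied-down case $S_n=0$ (as in \Cref{thm:n-returns-in-tied-down-random-walk}), since the free walk is the same computation with the factor $u_{n-2k}/u_n$ dropped.

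\textbf{Setup and the bridge factorization.} Let $u_{2m}=P(S_{2m}=0)=\binom{2m}{m}2^{-2m}$ for the free walk. I will use the standard bounds $u_0=1$ and
\begin{align*}
\frac{1}{2\sqrt{m}} \leq u_{2m} \leq \frac{1}{\sqrt{\pi m}} \quad (m \geq 1).
\end{align*}
Conditioning i.i.d. $\pm1$ steps on $S_n=0$ gives the uniform distribution over balanced paths. This is exactly the exchangeable model of the lemma, because an exchangeable $\pm1$ sequence with a fixed number of $+1$s is uniform over its arrangements. Under the uniform bridge, the Markov property of the free walk yields, for $0<2j<2k\leq n$,
\begin{align*}
P(S_{2j}=0 \given S_n=0) &= \frac{u_{2j}\,u_{n-2j}}{u_n},\\
P(S_{2j}=0,\, S_{2k}=0 \given S_n=0) &= \frac{u_{2j}\,u_{2(k-j)}\,u_{n-2k}}{u_n}.
\end{align*}
Only even times contribute, so $Z_n=\sum_{j=1}^{n/2}\indic_{S_{2j}=0}$.

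\textbf{First moment.} We have
\begin{align*}
\E[Z_n \given S_n=0]=\sum_{j=1}^{n/2}\frac{u_{2j}\,u_{n-2j}}{u_n}.
\end{align*}
The term $j=n/2$ equals $1$. For the remaining terms, with $m=n/2$, I use $u_n\geq 1/(2\sqrt{m})$ and the upper bounds above to get
\begin{align*}
\sum_{j=1}^{m-1}\frac{u_{2j}\,u_{n-2j}}{u_n} \leq C\sqrt{m}\sum_{j=1}^{m-1}\frac{1}{\sqrt{j(m-j)}}.
\end{align*}
This sum is a Riemann sum for $\int_0^1 \frac{dx}{\sqrt{x(1-x)}}=\pi$, so it is bounded by a constant. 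Hence $\E[Z_n \given S_n=0]=\bigO(\sqrt{n})$.

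\textbf{Second moment.} Write
\begin{align*}
\E[Z_n^2 \given S_n=0]=\E[Z_n \given S_n=0]+2\sum_{1\leq j<k\leq m}\frac{u_{2j}\,u_{2(k-j)}\,u_{n-2k}}{u_n}.
\end{align*}
The terms with $k=m$ have $u_{n-2k}=u_0=1$. They give $\sqrt{m}\sum_{j<m}\frac{1}{\sqrt{j(m-j)}}=\bigO(\sqrt{n})$, exactly as in the first moment. For $k<m$, all three factors obey the $1/\sqrt{\cdot}$ bound, so the double sum is at most
\begin{align*}
C\sqrt{m}\sum_{1\leq j<k<m}\frac{1}{\sqrt{j\,(k-j)\,(m-k)}}.
\end{align*}
After the substitution $j=xm$, $k=ym$, this is $m^{2}\cdot m^{-3/2}$ times a Riemann sum for the Dirichlet integral
\begin{align*}
\int_{0<x<y<1}x^{-1/2}(y-x)^{-1/2}(1-y)^{-1/2}\,dx\,dy,
\end{align*}
which is finite: it equals $\Gamma(1/2)^3/\Gamma(3/2)$. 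The total is therefore $\sqrt{m}\cdot\bigO(\sqrt{m})=\bigO(n)$.

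\textbf{Main obstacle.} The delicate step is justifying the Riemann-sum comparisons near the singular corners $j\approx0$, $k\approx j$ and $k\approx m$. I would handle it by monotonicity of $t^{-1/2}$ in each variable, bounding each sum by the corresponding integral plus boundary terms. Each boundary term is a one-dimensional sum of the form already controlled in the first moment, hence $\bigO(\sqrt{n})\leq\bigO(n)$.

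\textbf{Free walk.} For the independent (unconditioned) walk, the same argument applies with the two-point probability $u_{2j}\,u_{2(k-j)}$ and no division by $u_n$. Then $\E[Z_n]\leq\sum_{j\leq m}(\pi j)^{-1/2}=\bigO(\sqrt{n})$, and $\sum_{j<k}(j(k-j))^{-1/2}=\bigO(m)$, which gives $\E[Z_n^2]=\bigO(n)$.
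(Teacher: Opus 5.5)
Your proof is correct, but it follows a different route from the paper's.

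\textbf{The paper's route.} The paper cites Feller (Ch.~III) for $\E[Z_n]=\bigO(\sqrt n)$. It gets the second moment by a last-exit decomposition. Given the last return time $\tau=2k\leq n$ (with $\tau=0$ if there is none), the path up to $\tau$ is a uniform bridge of length $2k$, and $Z_n$ counts its returns. Hence
$\E[Z_n^2]=\sum_k P(\tau=2k)\,\E[Z_{2k}^2\mid S_{2k}=0]\leq\max_k\bigO(2k)=\bigO(n)$.
The bridge estimate is imported from \Cref{thm:n-returns-in-tied-down-random-walk} (Katzenbeisser, Godr\`eche), which the paper does not prove.

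\textbf{Your route.} You compute both moments directly from three ingredients: the indicator expansion, the Markov factorization of one- and two-point return probabilities, and the elementary bounds $1/(2\sqrt m)\leq u_{2m}\leq 1/\sqrt{\pi m}$.

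\textbf{Trade-off.} Your approach is self-contained and proves the bridge lemma and the free-walk lemma at once, at the cost of a longer computation. The paper's reduction is shorter but rests on an external result. Only your free-walk paragraph is needed for how the lemma is actually used (the bound on $Z_n^W$ in \Cref{thm:beta-noise-var}). The statement's ``tied-down'' wording is a slip, and you are right to cover both readings.

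\textbf{Your ``main obstacle'' goes away.} You do not need the Dirichlet integral or any Riemann-sum bookkeeping near the singular corners. Splitting at $k/2$ gives $\sum_{j=1}^{k-1}(j(k-j))^{-1/2}\leq 4$ uniformly in $k$. Iterating this bound shows that the bridge triple sum is at most $4\sum_{k<m}(m-k)^{-1/2}\leq 8\sqrt m$. The free-walk double sum is likewise at most $4m$. This yields both second-moment bounds with explicit constants.
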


\begin{proof}
For $\E[Z_n] = \bigO(\sqrt{n})$, see e.g. Chapter III in \citet{feller1968introduction}.
As for $\E[Z_n^2] = \bigO(n)$, we can split the random walk at the last return (or at $0$ if it does not exist) and use the result from \Cref{thm:n-returns-in-tied-down-random-walk}.
\end{proof}

\begin{lemma}
\label{thm:some-sum-big-o-n}
$\smash{\sum_{k=0}^{n-1} \sqrt{\frac{k}{n-k}} = \bigO(n)}$.
\end{lemma}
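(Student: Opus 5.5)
The plan is to compare the sum with an integral after substituting $k = n x$. The terms $\sqrt{k/(n-k)}$ are increasing in $k$ on $[0, n-1]$, so the sum can be sandwiched by an integral of the same function. The one delicate point is that $x \mapsto \sqrt{x/(1-x)}$ blows up at $x = 1$; the blow-up is integrable, but the last term of the sum needs separate treatment.

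First, monotonicity gives, for each $k \leq n-2$,
\begin{align*}
\sqrt{\frac{k}{n-k}} \leq \int_k^{k+1} \sqrt{\frac{t}{n-t}} \, dt .
\end{align*}
Summing these over $k = 0, \dots, n-2$ gives
\begin{align*}
\sum_{k=0}^{n-2} \sqrt{\frac{k}{n-k}} \leq \int_0^{n-1} \sqrt{\frac{t}{n-t}} \, dt \leq \int_0^n \sqrt{\frac{t}{n-t}} \, dt .
\end{align*}
The omitted last term $k = n-1$ equals $\sqrt{n-1} = \bigO(\sqrt{n})$, which is negligible against the target $\bigO(n)$.

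Second, the substitution $t = n x$ turns the integral into
\begin{align*}
n \int_0^1 \sqrt{\frac{x}{1-x}} \, dx .
\end{align*}
This integral is a finite constant, namely the Beta function value $B(3/2, 1/2) = \pi/2$. The singularity $(1-x)^{-1/2}$ at $x = 1$ is integrable, so finiteness is clear without evaluating it. Alternatively, the substitution $x = \sin^2\theta$ reduces it to $\int_0^{\pi/2} 2\sin^2\theta \, d\theta = \pi/2$.

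Combining the two steps bounds the sum by $\frac{\pi}{2} n + \sqrt{n-1} = \bigO(n)$. The only step needing care is the endpoint: the naive comparison of the last term with $\int_{n-1}^{n}$ still works, since that integral is finite, but splitting the last term off as above makes the argument clean. I expect no real obstacle beyond that.
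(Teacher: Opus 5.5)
Your proposal is correct and takes essentially the same route as the paper: bound the sum by $\int_0^n \sqrt{x/(n-x)}\,dx$ using monotonicity of the integrand, then evaluate this integral as $n\pi/2$ via a $\sin^2$ substitution. Splitting off the $k=n-1$ term is harmless but not needed, since the paper compares that term directly with the convergent improper integral over $[n-1,n]$, as you yourself note.
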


\begin{proof}
We bound the sum by an integral.
Since $f(x) = \sqrt{\frac{x}{n-x}}$ is an increasing function on $[0, n)$, the left Riemann sum is bounded by the integral:
\begin{align*}
\sum_{k=0}^{n-1} \sqrt{\frac{k}{n-k}} \leq \int_0^n \sqrt{\frac{x}{n-x}} \, dx
\end{align*}
We evaluate this integral using the substitution $x = n \sin^2 \theta$ (where $dx = 2n \sin \theta \cos \theta \, d\theta$):
\begin{align*}
\int_0^n \sqrt{\frac{x}{n-x}} \, dx
&= \int_0^{\pi/2} \!\! \sqrt{\frac{n \sin^2 \theta}{n \cos^2 \theta}} \cdot 2n \sin \theta \cos \theta \, d\theta\\
&= 2n \int_0^{\pi/2} \!\!\!\! \sin^2 \theta \, d\theta = 2n \left(\frac{\pi}{4} \right) = \frac{n\pi}{2}.
\end{align*}
\end{proof}

\end{document}